
\documentclass[11pt]{article}
\usepackage[paper=a4paper,
left=2.5cm, right=2.5cm,
top=2cm, bottom=2.5cm,
footskip=0.75cm
]{geometry}
\bibliographystyle{alpha}


\usepackage[utf8]{inputenc}

\usepackage[usenames,dvipsnames,svgnames,table]{xcolor}

\usepackage{amsmath}
\usepackage{amsthm} 
\usepackage{amssymb}
\usepackage{amsfonts}
\usepackage{bbm} 
\usepackage[linktocpage=true,ocgcolorlinks,bookmarks,bookmarksopen,bookmarksnumbered]{hyperref}
\hypersetup{colorlinks,linkcolor=cyan!60!black,citecolor=ForestGreen,urlcolor=magenta!50!black}
\usepackage{subcaption, graphicx, multirow, tabularx}
\usepackage{mathtools}
\usepackage{mathrsfs} 
\usepackage[all]{xy}
\usepackage{bm}

\usepackage{stmaryrd} 

\usepackage{float} 

\graphicspath{{./}}

\usepackage{thmtools}


\newcommand{\idea}{\noindent{\itshape Proof idea. }}

\newcommand{\formatproblem}[1]{\mathbf{#1}}
\newcommand{\EQ} {\formatproblem{EQ}} 
\newcommand{\EQout} {\formatproblem{EQ}^{\out}} 
\newcommand{\GHD} {\formatproblem{GHD}} 
\newcommand{\GT} {\formatproblem{GT}} 
\newcommand{\MAX} {\formatproblem{MAX}} 

\newcommand{\id} {\formatproblem{id}} 
\newcommand{\CondId} {\formatproblem{CondId}} 
\newcommand{\SplitId} {\formatproblem{SplitId}}   
\newcommand{\DISJ} {\formatproblem{DISJ}} 
\newcommand{\INT} {\formatproblem{INT}} 
\newcommand{\FtFD} {\formatproblem{FtFD}} 
\newcommand{\XOR} {\formatproblem{XOR}}


\newcommand{\GapMAJ}[1][]{\formatproblem{GapMAJ}_{#1}}
\newcommand{\GapMajX}[1][]{\GapMAJ[#1]{\circ}\XOR} 

\newcommand{\TDE} {\formatproblem{TDE}} 
\newcommand{\weavesymbol}{{\mathsf{x\!x}}}
\newcommand{\weave}{\mathbin{\weavesymbol}}

\newcommand{\indic}{\mathbf{1}}

\DeclareMathOperator{\Exp}{\ensuremath{\mathbb{E}}}

\newcommand{\transpose}[1]{\prescript{t}{}{#1}}

\newcommand{\zo}{\set{0,1}}

\DeclareMathOperator{\prt}{prt} 
\DeclareMathOperator{\wprt}{wprt} 
\DeclareMathOperator{\rank}{rank} 

\DeclarePairedDelimiter{\abs}{\lvert}{\rvert}
\DeclarePairedDelimiter{\card}{\lvert}{\rvert}
\DeclarePairedDelimiter{\set}{\lbrace}{\rbrace}
\DeclarePairedDelimiter{\event}{\lbrack}{\rbrack}
\DeclarePairedDelimiter{\range}{\lbrack}{\rbrack}
\DeclarePairedDelimiter{\parens}{\lparen}{\rparen}
\DeclarePairedDelimiter{\floor}{\lfloor}{\rfloor}
\DeclarePairedDelimiter{\ceil}{\lceil}{\rceil}


\newcommand{\CC}{{\operatorname{CC}}}

\newcommand{\formatscript}[1]{\mathsf{#1}}
\newcommand{\pub}{\formatscript{pub}}
\newcommand{\priv}{\formatscript{priv}}

\newcommand{\diff}{\formatscript{diff}}
\newcommand{\out}{\formatscript{out}} 
\newcommand{\lf}{\formatscript{lf}} 

\newcommand{\op}{\formatscript{open}} 
\newcommand{\loc}{\formatscript{loc}} 
\newcommand{\ali}{\formatscript{A}} 
\newcommand{\bob}{\formatscript{B}} 
\newcommand{\uni}{\formatscript{uni}} 
\newcommand{\oot}{\formatscript{1of2}} 
\newcommand{\spt}{\formatscript{spl}} 
\newcommand{\xor}{\formatscript{xor}} 
\newcommand{\mdl}{\mathcal{M}} 




\newcommand{\cM} {\mathcal M} 
\newcommand{\cO} {\mathcal O} 
\newcommand{\cN} {\mathcal N} 
\newcommand{\cP} {\mathcal P}
\newcommand{\cR} {\mathcal R} 

\newcommand{\cT} {\mathcal T} 
\newcommand{\cU} {\mathcal U}
\newcommand{\cV} {\mathcal V}
\newcommand{\cX} {\mathcal X} 
\newcommand{\cY} {\mathcal Y} 
\newcommand{\cZ} {\mathcal Z} 

\newcommand{\bbR}{{\mathbb R}}
\newcommand{\bbN}{{\mathbb N}}
\newcommand{\bbF}{{\mathbb F}}

\newtheorem{theorem}{Theorem}[section]  
\newtheorem{definition}[theorem]{Definition}
\newtheorem{lemma}[theorem]{Lemma}
\newtheorem{proposition}[theorem]{Proposition}
\newtheorem{corollary}[theorem]{Corollary}
\newtheorem{example}[theorem]{Example}
\usepackage[capitalise,nameinlink]{cleveref}
\crefname{point}{Point}{Points}
\crefname{step}{Step}{Steps}
\crefname{case}{case}{cases}
\crefname{subproposition}{subproposition}{subpropositions}

\newcolumntype{Y}{>{\centering\arraybackslash}X}
\newcolumntype{Z}[1]{>{\setlength\hsize{#1\hsize}\centering\arraybackslash}X}
\newcolumntype{V}[1]{>{\setlength\hsize{#1\hsize}\arraybackslash}X}

\begin{document}

\title{The  communication complexity of functions with large outputs}

\newcommand{\email}[1]{\href{mailto:#1}{\texttt{#1}}}

\author{Lila Fontes%
\thanks{Swarthmore College -- \email{fontes@cs.swarthmore.edu}
}
\and Sophie Laplante%
\thanks{IRIF, Universit\'e Paris Cité -- \email{laplante@irif.fr}
}
\and Mathieu Lauri{\`e}re%
\thanks{NYU-ECNU Institute of Mathematical Sciences, NYU Shanghai -- \email{ml5197@nyu.edu}
}
\and Alexandre Nolin%
\thanks{CISPA Helmholtz Center for Information Security -- \email{alexandre.nolin@cispa.de}
}}
\date{}
\maketitle

\begin{abstract}
We study the two-party communication complexity of functions with large outputs,
and show that the communication complexity can greatly vary depending on
what output model is considered.
We study a variety of output models, ranging from the \emph{open model}, in which an external observer can compute the outcome, to the \emph{XOR model}, in which the outcome of the protocol should be the bitwise XOR of the players' local outputs. This model is inspired by XOR games, which are widely studied two-player quantum games.

We focus on the question of error-reduction in these new output models.
For functions of output size~$k$, applying standard error
reduction techniques in the XOR model would introduce an additional cost linear in $k$. We
show that no dependency on $k$ is necessary. 
Similarly, standard randomness removal techniques, 
 incur a multiplicative cost of $2^k$ in the XOR model. We show
how to reduce this factor to $O(k)$.

In addition, we prove analogous error reduction and randomness removal
results in the other models, separate all models from each other,
and show that some natural problems~-- including Set Intersection and
Find the First Difference -- separate the models when the Hamming
weights of their inputs is bounded. Finally, we show how to use the rank lower bound technique for our weak output models.

\end{abstract}

\section{Introduction}

Most of the literature on the topic of communication complexity has focused on 
Boolean functions. The usual definition 
stipulates that at the end of the protocol, one of the
players knows the value of the function.  In the rectangle based lower
bounds, the assumption is slightly stronger: at the end of the
protocol, the transcript of the protocol determines a combinatorial
rectangle of inputs that all evaluate to the same outcome. This
means that given the transcript (together with the public coins, in
the randomized public-coin setting), an external
observer can determine the output. In the case of
Boolean functions, this assumption makes no significant difference
since the player who knows the value of the function can send it in
the last message of the protocol, at an additional cost of at most one 
bit.  When the function has large outputs, however, sending the value
of the function as part of the transcript could cost more than all the
prior communication.  When this happens, then what should be
considered the ``true'' communication complexity of the problem?

When studying functions with large outputs, several fundamental questions and issues
emerge.  What lower bound techniques extend to non-Boolean functions?
When
composing protocols with large outputs, it may not be useful for
both players to know the values of the intermediate functions, and
the aggregated cost of relaying the outcome at each intermediate step
could exceed the
complexity of the composed problem. These issues are also applicable
to information complexity, where the cost is measured in information
theoretic terms instead of in number of bits of
communication. Requiring protocols to reveal the outcome as part of the transcript
could be an obstacle to finding very low information protocols.
{{It also raises the following issue: how does one amplify success when outputs are large? Amplification schemes typically involve 
repeating a protocol and taking a majority outcome, but finding said majority outcome naïvely incurs a cost that depends on the length of the output. We explore these issues, and give new models and amplification schemes.}}

Well-studied examples of functions with large outputs 
include asymmetric games, like the
NBA problem~\cite{Orlitsky90,Orlitsky91} (see also \cite[Example 4.53, p. 64]{KushilevitzN1997}), and many
problems where the output is essentially of the same size as the input
(e.g., computing the intersection of two sets~\cite{BrodyCKWY14,BCKWY_algo16}).  A
decisional analog of a function with large output may have a similar
communication complexity (e.g.,
Set Disjointness~\cite{KalyanasundaramS92,Razborov92,MR2059642-BarYossefJKS-2004})
or a very different one (e.g., deciding if the parties' numbers sum to
something greater than a given
constant~\cite{Nisan,Viola2015}).
{Large output functions also appear when studying whether multiple instances of the same function exhibit economies of scale, known as direct sum problems,
along with their variants such as agreement and elimination~\cite{aaronson2005complexity,ambainis2001communication,beimel2010choosing}.
}
In these and other problems, computing one bit of the output can be just as hard or 
significantly easier than computing the full output, 
depending on the function and on the model. 
Finally, simulation protocols, whose output are transcripts of another protocol,
have played a key role in compression \cite{BravermanR2010,Braverman2015,kol2016interactive,Sherstov-2016-compression-product,BravermanK18} as well as structural results \cite{MR3306971-reversenewman,DBLP:journals/jacm/GanorKR16,GanorKR2015,RaoS2015}.
The Find the First Difference problem has been 
 instrumental in compression
protocols. Better protocols are known when weaker output
conditions are required~\cite{BarakBCR2010,BauerMY15}.

\subsection{Output models}
We put forward several natural alternatives to the model where the
transcript {and public randomness} reveal (possibly without containing it explicitly) the
value of the function (we call this the \emph{open} model).  In the
\emph{local} model, both players can determine the value of the
function locally (but an external observer  might not be able to do
so {-- unlike in the open model}).
In the \emph{unilateral} model, one player always learns the answer.  In the 
\emph{one-out-of-two} model, the player who knows the answer can vary. 
{In the \emph{split} model, the bits of the output are split between the players in an arbitrary way known to both players.}
Finally, in the \emph{XOR model}, each player outputs a string and the
result is the bitwise XOR of these outputs. 
{The models form a hierarchy, shown in \cref{fig:hierarchy}. 
We defer formal definitions to \cref{app:outputs-models}.}

\begin{figure}[t]
  \center
  \includegraphics[page=2,width=\linewidth]{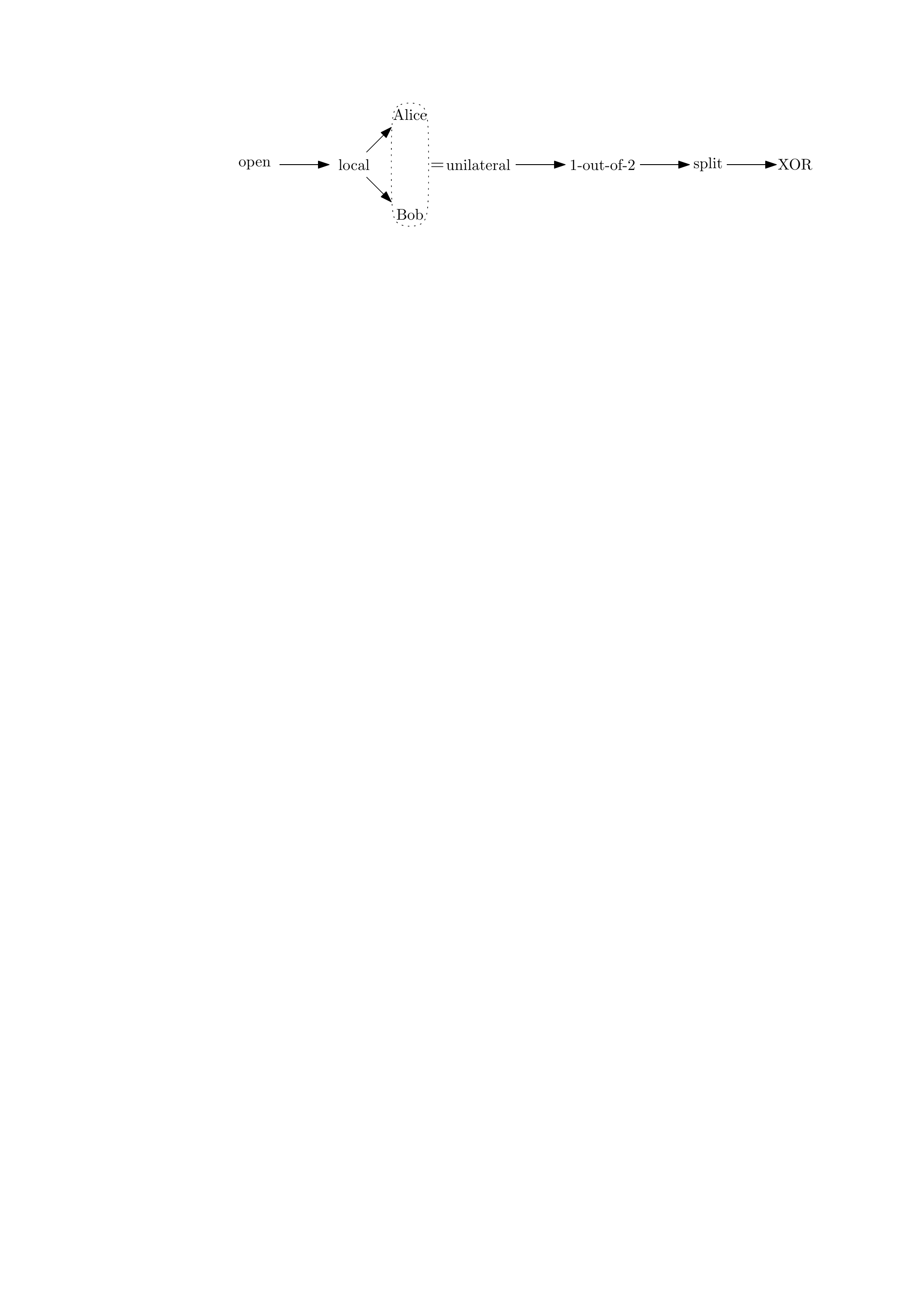}
  \caption{The various models of communication and problems separating
    them.
    An arrow
    from A to B indicates that a communication protocol for a task of
    type A is also a communication protocol for a task of type B.
    {Details of the separations are provided in \cref{app:outputs-models,app:separating-problems}.}
    }
\label{fig:hierarchy}
\end{figure}

In the context of
protocols, {we make a distinction between what the players output
and what the protocol computes. For example, in the XOR model, players output
strings $a$ and $b$ but the result of the protocol is \mbox{$a \oplus b$}.}
We will use the word ``output'' to designate what the
players output at the end of the protocol, and ``result'' or
``outcome'' to be the outcome of the protocol (which should be
 -- either probably or certainly --  the value or output
of the function).
Similarly, we will use the term
``protocol'' to designate the full mechanism for producing the result,
and ``communication protocol'' for the interactive part of the
protocol where the players exchange messages, not including the output
mechanism.

Among all the models we propose, the XOR model is perhaps the most
interesting.  This model was partly inspired by
(quantum) XOR games, 
where the players do not exchange any
messages (for example~\cite{Bell64,PalazuelosV16,BuhrmanCMW10}). 
{
One interesting property of the XOR model is that 
it could be the case, for example, that the output of each player, taken individually, follows a uniform distribution%
\footnote{Any protocol in this model can be converted into a protocol of same complexity with this property: the players  pick 
a shared random string $r$ of the same length as the output, and output $a\oplus r$ ($b\oplus r) $, where $a,b$ were the outputs of the original protocol.},   revealing nothing about either of the inputs or even the value of the function when run as a black box.}

Moreover, it is common in communication complexity to consider the
complexity of Boolean functions 
composed with some  ``gadget'' applied to the inputs. For example, for a Boolean function $f$, one can ask what is the communication complexity of $F(u,v)=f(u\oplus v)$, where bitwise XOR is applied as a gadget on the inputs.
The XOR model can be seen as applying the XOR gadget to the outputs instead of the inputs:
the players output $(a,b)$, and we require $F(u,v) = a \oplus b$ for the computation to be correct.

\subsection{Our contributions}

We focus on the XOR model where the players each output a string and the 
outcome of the protocol is the bitwise XOR of these strings.

\paragraph*{Error reduction.}
We consider the question of error reduction in \cref{sec:error-reduction}.
Error reduction is usually a simple task: repeat
a computation enough times, and take the majority outcome. 
However, in the XOR model, neither of the players knows any of the outcomes,
so neither can compute the majority outcome without additional communication.
Sending over all the outcomes so one of the players can compute the majority
would add a prohibitive  $\Theta(k)$ term, where $k$ is the length of the output.
Removing this dependency on $k$ is possible, however, and doing so requires quite
elaborate protocols that highlight the inherent limitations of the XOR model 
(\textbf{\cref{thm:error-reduction-no-k}}).

We further improve the dependency on the error parameter $\epsilon$ 
for direct sum problems (\textbf{\cref{thm:error-reduction-direct-sum}}),
by combining protocols for amortized Equality~\cite{FederKNN95} and
Find the First Difference~\cite{FeigeRPU1994}, as well as 
Gap Hamming Distance~\cite{IndykW03,ChakrabartiR12,Vidick13,Sherstov12}. 
\paragraph*{Deterministic versus randomized complexity.}
In \cref{sec:derandomization}, we revisit the classical
result that states that for any Boolean function, 
 the deterministic communication complexity is at most exponential in the private coin randomized complexity.
Once again, if the size of the output is $k$, then applying
existing schemes naively to our weaker models
adds a multiplicative cost of
$2^k$.   We show that a dependency of a factor of $k$ suffices 
(\textbf{\cref{thm:derand-xor}}). 
\paragraph*{Gap Majority composed with XOR.}
To prove our
results for the XOR model, we consider the non-Boolean \emph{Gap Majority}
problem composed with an XOR gadget. In the standard majority problem, the input is a set of elements and the goal is to find the element which appears most often.
The gap majority problem adds the promise that the majority element should appear at least some a fixed fraction (more than half) of the time. Composition with an XOR gadget turns the problem into a communication complexity problem
(see \cref{sec:error-reduction} and \cref{app:cc-hxor}).
We show that the communication
complexity of this problem is closely related to the problems of reducing error and removing randomness 
in the XOR model.  
\paragraph*{Other models and separations.} 
We define several communication models and give problems
that maximally separate them (\cref{app:outputs-models}). 
We revisit error reduction and randomness removal in other models 
(\cref{app:proofs-error-reduction,app:derandomization}).
The randomness removal scheme for the one-out-of-two model uses a variant
of the NBA problem in a subtle way as part of the reconciliation of
the majority candidates of the two players.
We reduce the dependency on~$k$ to a factor of $\log(k)$ in the one-out-of-two model, 
and remove this dependency entirely when the error parameter $\epsilon$ is bounded by $1/3$ ({\cref{thm:derand-oot}}).

Finally, we study a few additional
problems which exhibit gaps between our various communication models. In particular, several common problems exhibit a gap when the Hamming weights of their inputs are bounded
(\cref{app:separating-problems}).

\paragraph*{Rank lower bound.}
We show how lower bound techniques can be adapted to our weak output models by revisiting the notion of monochromatic rectangles associated with the leaves of a protocol tree.
We focus on the rank lower bound on deterministic communication and show that it can be used in all of our models, including the XOR model. (\cref{sec:large-output-rank})

It is important to note that our results mostly do not apply
to large-output \emph{relations} (such as the variants of direct sum, elimination and agreement), 
as many of our proofs crucially 
rely on the fact that there is a single correct answer.

\section{Related work}

{Previous works have addressed the question of the output model for large output functions.}
Braverman et al.~\cite{BravermanRWY13}
make a distinction between ``simulation'' and ``strong simulation'' of
a protocol. In a strong simulation, an external observer can determine
the result  without any knowledge of the inputs.  In
their paper on compression to internal information~\cite{BauerMY15},
Bauer et al.~stress the importance, when compressing to internal
information, that the compression itself need not reveal information
to an external observer.  They consider two output models which they
call internal and external computation.  In external computation
(which we call the open model), an external observer can determine the
result of the protocol, whereas in internal computation (which we call
the local model), the players both determine the result at the end of
the protocol.%
\footnote{ We prefer the terms \emph{open} and \emph{local} to avoid any confusion
  between the notions of \emph{internal} and \emph{external} computation, and
  \emph{internal} and \emph{external}
  information. 
}
They observe that in the deterministic setting, for total functions, the two models coincide,
but they can differ in the distributional setting.
They consider a key problem of 
finding the first bit where two strings differ, when each player has one of the two strings.
This problem is used in reconciliation protocols to find the first place where transcripts differ.
Feige et al.~\cite{FeigeRPU1994} 
externally (openly) solve Find the First Difference in $O(\log(\frac n \epsilon))$,
which was shown to be tight by Viola~\cite{Viola2015}.
Bauer et al.~\cite{BauerMY15} give an internal (local) protocol 
with a better complexity, 
where the improvement depends on the entropy of the input distribution.

\section{Preliminaries}

An introduction to  communication complexity can be found
in Kushilevitz and Nisan's~\cite{KushilevitzN1997}, and  
Rao and Yehudayoff's~\cite{RaoY18} textbooks. 

We denote by $\cX$ (resp.\ $\cY$) the set of inputs of Alice
(resp.\ Bob), $\cR_\ali$ her private randomness ($\cR_\bob$ for Bob), and
$\cR^\pub$ the public randomness accessible to both players. When
$\card{\cX}=\card{\cY}$, we  denote by $n$ the size of the input (so that $n
= \ceil{ \log(\card{\cX}) } $). When computing a function, we 
denote by $k$ the length of the output, $\cZ$ the \emph{image} of the function 
 and $k = \ceil{ \log(\card{\cZ}) } $. 
We  sometimes consider
an additional output symbol $\top$. 

We define a \emph{full protocol} as the combination of
a \emph{communication protocol} and an \emph{output mechanism} (this
 is discussed in \cref{app:outputs-models}). We
define a (two-player) communication protocol~$\Pi$ as a full binary
tree where each non-leaf node $v$ is assigned a player $\cP^v$ amongst
$A$(lice) and $B$(ob), and a mapping $\cN^v$ into $\zo$ whose input
space depends on which player the node was assigned to. When $\cP^v=A$
(resp.\ $B$) then $\cN^v$'s input space is $\cX
\times \cR_\ali \times \cR^\pub$ (resp.\ $\cY \times \cR_\bob \times \cR^\pub$). Note
that the tree and each node's owner are fixed and do not depend on the
input.
In an execution of a communication protocol,
the two players walk down the tree together, starting from the root,
until they reach a leaf. Each step down the tree is done by letting
the player who owns the current node $v$ apply its corresponding
mapping $\cN^v$, and sending the result to the other player. If it is
$0$, the players replace the current node by its left child, and
otherwise by its right child.  The \emph{communication cost} $\CC(\Pi)$ of a
protocol $\Pi$ is the total number of bits exchanged for the worst case
inputs.

Since an execution of a communication protocol $\Pi$ is entirely
defined by the players' inputs ($(x,y) \in \cX \times \cY$) and the
randomness (the players' private randomness $r_\ali \in \cR_\ali$ and $r_\bob \in \cR_\bob$ as
well as the public randomness $r \in \cR^\pub$), we also view the
communication protocol as a function $\Pi : \cX \times \cY \times
\cR_\ali \times \cR_\bob \times \cR^\pub \rightarrow \zo^*$ whose values we
call \emph{transcripts} of $\Pi$.  {For the purposes of this paper, we do not include} the public randomness as part of the transcript.
For a given protocol $\Pi$, we
denote by $T_\pi = \Pi(X,Y,R_\ali,R_\bob,R)$ the random variable over transcripts
of the protocol that naturally arises from $X$, $Y$, $R_\ali$, $R_\bob$, and
$R$, taken as random variables. We denote by $\cT_\pi$ the support of
the distribution $T_\pi$. 
 We denote by $x , y , z , r_\ali , r_\bob , r , t_\pi$ elements of the
  sets $\cX , \cY , \cZ , \cR_\ali , \cR_\bob , \cR^\pub , \cT_\pi$, respectively,
  which in turn are the supports of the random variables $X, Y, Z, R_\ali, R_\bob, R, T_\pi$.

We recall definitions and known bounds of functions that will be
used in this paper. For all of these problems,
note that the communication complexity is of the same order of
magnitude whether we require that both players
know the output or only one of them, since the size of the output is
no larger than the communication required for one player to know the
output. 
{In the remainder of this section, we denote by $R_\epsilon(f)$ the
minimal communication cost of a randomized protocol computing function
$f$ with error at most $\epsilon$ when, say, Bob outputs. $D(f) =
R_0(f)$ denotes the deterministic communication complexity.}

\begin{definition}[Find the First Difference problem]
  $\FtFD_n : \zo^n \times \zo^n  \rightarrow  \set{0,\ldots,n}$ is defined as
$
    \FtFD_n (x,y)
        =  \min (\set{i : x_i \neq y_i} \cup \set{n}).
$
\end{definition}

\begin{proposition}
\label{prop:ftfd}
For any $0<\epsilon<\frac 1 2$,
$R_\epsilon(\FtFD_n) \in \Theta(\log( n ) + \log(1/\epsilon))$~\cite{FeigeRPU1994,Viola2015}.
\end{proposition}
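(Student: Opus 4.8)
The plan is to prove the two directions separately; the upper bound is where the real work lies.

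\emph{Upper bound $O(\log n + \log(1/\epsilon))$.} I would combine a binary search on the location of the first difference with a cheap randomized equality test, but organize the search so that the failure probabilities of the individual tests are amortized rather than union-bounded. The key observation is that for each threshold $m$, the event ``$\FtFD_n(x,y)\le m$'' is exactly the event that $x$ and $y$ disagree somewhere among their first $m$ coordinates, i.e.\ a (non-)equality query on length-$m$ prefixes, and that these events are monotone in $m$. With public randomness such a query can be answered with any fixed constant error probability using $O(1)$ bits (e.g.\ by exchanging $O(1)$ random $\bbF_2$-inner-products of the prefixes with shared random strings), and re-running it with fresh public randomness yields independent answers. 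Hence the players can jointly run a \emph{noisy binary search} over the $n+1$ candidate values in $\set{0,\ldots,n}$: at each step both compute the current (weighted) median $m$, spend $O(1)$ bits to learn a noisy answer to ``$\FtFD_n(x,y)\le m$?'', and update a shared search state. By the analysis of noisy binary search~\cite{FeigeRPU1994}, $O(\log n + \log(1/\epsilon))$ such queries suffice to identify $\FtFD_n(x,y)$ with probability at least $1-\epsilon$, and Bob outputs the identified value. The total communication is $O(\log n + \log(1/\epsilon))$. (If one insists on private randomness, Newman's theorem adds only an $O(\log n)$ term, which is absorbed.)

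\emph{Lower bound, the $\log n$ term.} I would reduce from the Greater-Than problem $\GT_n$, in which Alice and Bob hold $a,b\in\set{0,\ldots,n-1}$ and must decide whether $a\le b$; this is well known to require $\Omega(\log n)$ communication for randomized protocols of constant error (see, e.g.,~\cite{Viola2015}). On input $a$ Alice forms $x=0^a1^{n-a}$ and on input $b$ Bob forms $y=0^b1^{n-b}$; then $x$ and $y$ disagree exactly on the coordinates in $[\min(a,b),\max(a,b))$, so $\FtFD_n(x,y)=\min(a,b)$ when $a\ne b$ and $\FtFD_n(x,y)=n$ when $a=b$. Running the $\FtFD_n$ protocol on $(x,y)$ and letting Bob inspect the output $m$ together with his own $b$, he decides ``$a\le b$'' by checking whether $m=n$ or $m<b$ (outputting ``$a>b$'' precisely when $m=b$). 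Thus $R_\epsilon(\FtFD_n)\ge R_\epsilon(\GT_n)-O(1)=\Omega(\log n)$.

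\emph{Lower bound, the $\log(1/\epsilon)$ term, and conclusion.} Here I would use that $\FtFD_n(x,y)=n$ if and only if $x=y$, so any $\FtFD_n$ protocol with error $\epsilon$ solves $\EQ_n$ with error $\epsilon$ at the same cost. Since public-coin Equality on $m$-bit strings requires $\Omega(\log(1/\epsilon))$ communication as long as $\log(1/\epsilon)\le m-O(1)$ (a standard rectangle-counting argument, see~\cite{KushilevitzN1997}), embedding $\EQ_m$ with $m=\Theta(\log(1/\epsilon))\le n$ into the first $m$ coordinates gives $R_\epsilon(\FtFD_n)=\Omega(\log(1/\epsilon))$. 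Taking the maximum of the two lower bounds yields $R_\epsilon(\FtFD_n)=\Omega(\log n+\log(1/\epsilon))$, matching the upper bound.

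\emph{Main obstacle and minor points.} The only genuinely delicate step is obtaining \emph{additivity} in the upper bound. The obvious approach---run an ordinary depth-$\log n$ binary search and boost each equality test to error $\epsilon/\log n$ by repetition---costs $O\!\left(\log n\cdot(\log\log n+\log(1/\epsilon))\right)$, which is of the wrong form; getting $\log n+\log(1/\epsilon)$ requires a search procedure that tolerates constant per-query error and recovers from occasional wrong answers (via backtracking, or equivalently a Bayesian posterior-update view), which is exactly what the noisy-binary-search machinery supplies. The remaining points---the precise convention for the index set, the boundary behaviour of the $\GT$ reduction, the fact that tightness on the $\log(1/\epsilon)$ side needs $\log(1/\epsilon)=O(n)$ (no $n$-bit protocol can cost $\omega(n)$), and uniformity of the constants as $\epsilon\to\tfrac12$---are routine.
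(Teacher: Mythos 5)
Your upper bound and your overall lower-bound strategy follow the same route as the paper (the noisy binary search / hashed tree walk of Feige et al.\ for the upper bound, and a reduction from Greater-Than combined with Viola's bound for the $\log n$ term), but the Greater-Than reduction as you have written it does not give $\Omega(\log n)$.

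You set $a,b\in\set{0,\ldots,n-1}$ and encode them in unary, $x=0^a1^{n-a}$, $y=0^b1^{n-b}$. The reduction to $\FtFD_n$ is sound, but the claim that this Greater-Than instance ``requires $\Omega(\log n)$'' is wrong: Viola's theorem says that comparing two $m$-\emph{bit} integers needs $\Omega(\log m)$ bits, and your $a,b$ are $\lceil\log n\rceil$-bit integers, so it only gives $\Omega(\log\log n)$. Indeed no $\Omega(\log n)$ bound can hold for your instance, since one can compare $a,b<n$ by running FRPU on their $\lceil\log n\rceil$-bit binary encodings in $O(\log\log n+\log(1/\epsilon))$ bits. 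The reduction the paper has in mind is the binary one: Alice and Bob feed their $n$-bit binary representations directly into an $\FtFD_n$ protocol; the answer $i$ is the most significant index at which they differ, and Bob (who knows $i$ and $y_i$, hence also $x_i=1-y_i$ when $i<n$) declares $a\le b$ iff $i=n$ or $y_i=1$. This yields $R_\epsilon(\FtFD_n)\ge R_\epsilon(\GT_n)-O(1)=\Omega(\log n)$, where now $\GT_n$ is on $n$-bit integers and Viola applies.

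Your separate Equality reduction for the $\Omega(\log(1/\epsilon))$ term is a legitimate complement (the paper does not spell this part out), and the rest of your discussion --- why naive error-boosting of a plain binary search overshoots, the $\log(1/\epsilon)=O(n)$ caveat, Newman for private coins --- is correct.
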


The upper bound uses a walk on a tree 
where steps are taken according to results from hash functions. The
lower bound is from a lower bound on the Greater Than function
$\GT_n$, which reduces to $\FtFD_n$. For a good exposition of the
upper bound, see Appendix~C in~\cite{BarakBCR2010}.

\begin{definition}[Gap Hamming Distance problem]
\label{def:ghd}
  Let $n,L,U$ be integers such that $0 \leq L < U \leq n$.
    $\GHD_n^{L,U} : \zo^n \times \zo^n  \rightarrow  \zo$
is a promise problem where the input satisfies the promise that the
Hamming distance between inputs $x,y$ is either $\geq U$ or $\leq L$. Then $\GHD_n^{L,U} (x,y) = 1$ in the first case and $0$ in the second case.
\end{definition}

The bounds on Gap Hamming Distance vary  depending on the parameters. 
In this paper we use a linear upper bound 
which is essentially tight in the regime we require.
Many other bounds are known for other regimes~\cite{Kozachinskiy15,ChakrabartiR12,Vidick13,Sherstov12,BuhrmanCW98,Watson18}.

\begin{definition}[Equality problem]
\label{def:eq}
    The function  $\EQ_n : \zo^n \times \zo^n  \rightarrow  \zo$ is defined as 
$\EQ_n(x,y)={\indic}_{x = y}$.
The $k$-fold Equality problem is $  \EQ_n^{\otimes k} ( (x_1,\ldots,x_k) , $ $(y_1,\ldots,y_k) )  =  $ $ (\EQ_n(x_1,y_1),\ldots,\EQ_n(x_k,y_k))$, where $(x_i,y_i) \in \zo^n$ for all~$i$. 
\end{definition} 
\begin{proposition}
\label{prop:eq}
For $0<\epsilon < \frac 1 2$,
$R_\epsilon(\EQ^{\otimes k}_n) \in \Theta(k + \log(1/\epsilon))$.
\end{proposition}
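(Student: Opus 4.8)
The plan is to prove the two directions of $R_\epsilon(\EQ^{\otimes k}_n) \in \Theta(k + \log(1/\epsilon))$ separately, treating the upper and lower bounds.

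\textbf{Upper bound.} First I would observe that the output of $\EQ^{\otimes k}_n$ has length $k$, and that any protocol computing it must — since Bob outputs the $k$-bit answer — have communication cost at least $k$, so the task is really to achieve error $\epsilon$ with only $O(k + \log(1/\epsilon))$ total bits rather than the naive $O(k \cdot \log(1/\epsilon))$ one gets by running $k$ independent copies of a single-coordinate equality test each amplified to error $\epsilon/k$. The standard trick here is a single hashing step with a good error-correcting or fingerprinting flavor: Alice picks a random hash function $h$ from a suitable family, sends $h$ (or its public-coin index) together with the tuple of fingerprints $(h(x_1),\ldots,h(x_k))$, and Bob compares coordinate-wise against $(h(y_1),\ldots,h(y_k))$. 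Each fingerprint needs only $O(1)$ bits to drive the per-coordinate false-positive probability below a constant; to get the per-coordinate error down to $\epsilon$ one takes fingerprints of length $O(\log(1/\epsilon))$, but this is paid only once (the same random hash or a constant number of independent hashes applied to all coordinates), so the total is $k \cdot O(1) + O(\log(1/\epsilon))$ if one is slightly careful, or more simply one can union-bound: to guarantee the whole $k$-bit output is correct with probability $1-\epsilon$ one would need per-coordinate error $\epsilon/k$ and fingerprints of length $O(\log(k/\epsilon))$, giving $O(k\log(k/\epsilon))$, which is \emph{not} good enough. The key observation that saves a logarithmic factor is to allow the protocol to err on each output bit independently with probability $\epsilon$ (rather than demanding the whole string be correct), which is the natural notion of error for a function with large output used throughout this paper; then each coordinate just needs an independent fingerprint of length $O(\log(1/\epsilon))$, but all $k$ of them can share the description of the hash family, and with $O(1)$-bit fingerprints repeated $O(\log(1/\epsilon))$ times per coordinate one can instead amortize: actually the cleanest route is to invoke the amortized Equality protocol of Feder--Kiwi--Naor--Nisan~\cite{FederKNN95}, which computes $\EQ^{\otimes k}_n$ with $O(k + \log(1/\epsilon))$ bits precisely by batching the $k$ instances, so I would cite that and spell out how error $\epsilon$ per output bit follows.

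\textbf{Lower bound.} For the matching $\Omega(k + \log(1/\epsilon))$, I would give two independent lower bounds and take their maximum. The $\Omega(k)$ bound is immediate from the output-size argument: Bob outputs a $k$-bit string that must (with probability $> 1/2$) equal the true answer, and by a standard counting/fooling argument the transcript must have at least $\Omega(k)$ bits — e.g., by a reduction from a single instance $\EQ_1^{\otimes k}$ on $k$-bit inputs where Alice's and Bob's strings determine $2^k$ distinct outputs, any protocol with communication $< k$ cannot distinguish enough inputs. The $\Omega(\log(1/\epsilon))$ bound comes from the single-coordinate case $k=1$: it is classical that $R_\epsilon(\EQ_n) = \Theta(\log(1/\epsilon))$ for $n$ large enough (this is the standard randomized equality bound, e.g.\ via the discrepancy or corruption method, or via the observation that a protocol with $c$ bits partitions the input space into $2^c$ rectangles and on the diagonal of a large identity matrix this forces error $\geq 1 - 2^c/2^n$, refined to give the $\log(1/\epsilon)$ scaling). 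Since $\EQ_n^{\otimes k}$ restricted to making the last $k-1$ coordinates trivially equal reduces to $\EQ_n$ on the first coordinate, we get $R_\epsilon(\EQ^{\otimes k}_n) \geq R_\epsilon(\EQ_n) = \Omega(\log(1/\epsilon))$. Combining, $R_\epsilon(\EQ^{\otimes k}_n) = \Omega(\max(k, \log(1/\epsilon))) = \Omega(k + \log(1/\epsilon))$.

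\textbf{Main obstacle.} The routine parts are the two lower bounds and the output-size observation; the delicate point is the upper bound, specifically getting the \emph{additive} rather than multiplicative dependence — i.e., amortizing the $k$ equality tests so that the confidence boosting to error $\epsilon$ is paid once overall rather than once per coordinate. I expect to handle this by citing the amortized Equality protocol of~\cite{FederKNN95} and carefully restating its guarantee in the per-output-bit error model used here, checking that "error at most $\epsilon$ when Bob outputs" is exactly what that protocol delivers (it outputs a $k$-bit guess, correct coordinate-wise with the stated probability), and noting that the $+\log(1/\epsilon)$ term accounts for the shared random seed / fingerprint length. If one wanted a self-contained proof, the construction would be: use public randomness to pick $O(\log(1/\epsilon))$ independent hash functions $h_1,\ldots,h_m$ into a constant-size range, Alice sends $(h_j(x_i))_{j\le m}$ for each $i$ — but that is $O(k\log(1/\epsilon))$ again — so instead one must use a \emph{single} hash into a range of size $\mathrm{poly}(1/\epsilon)$ chosen from a family whose description Alice sends once ($O(\log n)$ or $O(\log(1/\epsilon))$ bits, absorbable), then $k$ fingerprints of length $O(\log(1/\epsilon))$ — still $O(k\log(1/\epsilon))$; the genuine saving in~\cite{FederKNN95} comes from a cleverer batched encoding of the vector of fingerprints, and I would defer the details to that reference rather than reconstruct them here.
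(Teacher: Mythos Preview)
Your overall structure --- defer the upper bound to a cited amortized-Equality protocol, and combine an output-size bound with a single-coordinate reduction for the lower bound --- matches the paper's treatment, which is only a discussion paragraph rather than a full proof. Your lower-bound discussion is in fact more complete than the paper's, which mentions only the $\Omega(k)$ part explicitly.

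Two corrections on the upper bound. First, your claim that ``the natural notion of error for a function with large output used throughout this paper'' is per-output-bit error is wrong: the paper's definitions (e.g.\ \cref{model:xor}, and the convention for $R_\epsilon$ stated before \cref{def:eq}) require the \emph{entire} output string to equal $f(x,y)$ with probability at least $1-\epsilon$. So that attempted shortcut does not apply here, and you are right that all your single-shot hashing schemes give $O(k\log(k/\epsilon))$ or worse under the correct error notion.

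Second, the paper attributes the optimal $O(k+\log(1/\epsilon))$ bound to~\cite{HPZZ_siamcomp21}, with~\cite{FederKNN95} cited only as a precursor that improved on the trivial algorithm. More importantly, the mechanism --- which none of your single-round attempts captures --- is \emph{adaptive and iterative}: the players hash all $k$ instances simultaneously, exchange hashes, permanently mark as unequal any instance whose hashes disagree, and repeat only on the surviving (still-believed-equal) instances. Unequal instances are unlikely to survive many rounds, so in later rounds almost all remaining hashes agree and can be transmitted very cheaply. This eliminate-and-compress idea is what yields the additive rather than multiplicative dependence on $\log(1/\epsilon)$, and it is worth stating even if you ultimately defer the details to a reference.
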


The algorithm from~\cite{HPZZ_siamcomp21} which achieves optimal communication uses hashing just like the algorithm for a single
instance. It saves on communication compared to $k$ successive uses of a protocol for equality with error $\epsilon/k$ by having players hash all $k$ instances simultaneously, exchange results, and repeat this process, exploiting that they have less and less to communicate about. Intuitively, the number of unequal instances to discover should decrease as the algorithm runs.
Once it has been determined for an instance $(x_i,y_i)$ that $x_i \neq y_i$ through unequal hashes, the players do not need to speak further about this instance. An unequal instance is unlikely to survive many tests, which means that late in the algorithm the players can exchange their hashes using that most of them should agree. The idea was also present in previous algorithms~\cite{FederKNN95} which improved on the trivial algorithm.
The lower
bound is just from $\Omega(k)$ bits of communication being necessary to
send $k$ bits worth of information, even with~$\epsilon$ error.

Unless otherwise specified, our protocols use both private and public coins. We 
use the `$\priv$' superscript when 
the protocols and mappings do not have access to 
public randomness. 

\section{The \emph{XOR} model}

In the XOR model, each player outputs a string and the value of the
function is the bitwise XOR of the two outputs
(\cref{model:xor}).
This model is inspired by XOR games which have been widely studied
in the context of quantum nonlocality as well as unique games.

\begin{restatable}[XOR computation]{definition}{xormodel}
\label{model:xor}
  Consider a function $f$ whose output set is $\cZ = \zo^k$. A
  protocol $\Pi$ is said to  \emph{XOR}-
  compute $f$ with $\epsilon$ error if there exist two mappings
  $\cO_\ali$ and $\cO_\bob$ with $\cO_\ali : \cT_\pi \times \cR^\pub \times
  \cR_\ali \times \cX \rightarrow \zo^k$ and similarly $\cO_\bob : \cT_\pi \times \cR^\pub \times
  \cR_\bob \times \cY \rightarrow \zo^k$
  such that   for all $(x,y) \in \cX \times \cY$, 
  $$\Pr_{r,r_\ali,r_\bob}[ \cO_\ali(t_\pi,r,r_\ali,x) \oplus \cO_\bob(t_\pi,r,r_\bob,y) = f(x,y) ] \geq 1-\epsilon.$$
\end{restatable}

We define $D^\xor(f)$ (resp.~$R_\epsilon^\xor(f)$)
as the best communication
cost of any protocol that computes $f$ in the XOR model with
error~$\epsilon = 0$ (resp.\ with error at most $\epsilon$, for $0 <
\epsilon < \frac 1 2$).   
(Notations are defined similarly for our other models with superscripts
$\op, \loc, \ali, \bob, \uni, \spt, \oot$.)

\section{Error reduction and the Gap Majority problem}
\label{sec:error-reduction}

We study the cost of {reducing the error} of communication
protocols in our weaker models of communication where the outcome of
the protocol is not known 
to both of the
players.  We focus on the more interesting case of the XOR model in
the main text, and results for the other models are in 
\cref{app:error-reduction-up-to-XOR}.

{Standard error reduction schemes work by repeating a protocol
many times in order to compute and output the most frequently occurring value  among
all the executions. Repeating the protocol enough times ensures that with
high probability, the output that appears the most is correct.}
One can derive an upper bound on the number of iterations needed
from Hoeffding's inequality.
\begin{lemma}[Hoeffding's inequality]
  \label{lem:hoeffding}
  Let $\parens*{ V_i }_{i\in[N]}$ be $N$ {independent} Bernouilli
  trials of expected value~$p$. We have
  $\Pr\event*{ \abs[\big]{\tfrac{1}{N} \sum_{i=1}^N V_i - p } \geq \delta }
  \leq 2\cdot \exp\parens*{-\frac {\delta^2 N}{2p(1-p)}}.$
\end{lemma}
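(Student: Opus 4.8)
The plan is to prove this by the classical Cram\'er--Chernoff exponential moment method, handling the two one-sided deviations separately; the factor~$2$ in the bound is then just a union bound over the events $\tfrac1N\sum_{i=1}^N V_i - p \ge \delta$ and $\tfrac1N\sum_{i=1}^N V_i - p \le -\delta$.

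For the upper deviation, fix a parameter $\lambda > 0$ and apply Markov's inequality to the nonnegative random variable $\exp\parens*{\lambda\sum_{i=1}^N (V_i - p)}$. Since the $V_i$ are independent, the moment generating function factorises, giving
\[
  \Pr\event*{\tfrac1N\sum_{i=1}^N V_i - p \ge \delta}
  \;\le\; e^{-\lambda N\delta}\prod_{i=1}^N \Exp\event*{e^{\lambda(V_i - p)}}
  \;=\; e^{-\lambda N\delta}\, M(\lambda)^N ,
\]
where $M(\lambda) = (1-p)e^{-\lambda p} + p\,e^{\lambda(1-p)}$ is the moment generating function of a centred Bernoulli variable of mean $0$ and variance $p(1-p)$. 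The crux is the sub-Gaussian estimate $\log M(\lambda) \le \tfrac12\, p(1-p)\,\lambda^2$ (a variance-proxy bound, valid at least for the small values of $\lambda$ that enter the optimisation); granting it, the exponent $-\lambda N\delta + \tfrac12 N p(1-p)\lambda^2$ is minimised at $\lambda = \delta/(p(1-p))$, which yields $\Pr\event*{\tfrac1N\sum_i V_i - p \ge \delta} \le \exp\parens*{-\tfrac{N\delta^2}{2p(1-p)}}$. The lower deviation is identical after replacing $V_i - p$ by $p - V_i$ (again centred, same variance), and summing the two estimates proves the lemma.

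The main obstacle, and the only step beyond bookkeeping, is the moment generating function bound $\log M(\lambda) \le \tfrac12 p(1-p)\lambda^2$. I would establish it by writing $\psi(\lambda) := \log M(\lambda)$, checking $\psi(0) = 0$ and $\psi'(0) = \Exp[V_i - p] = 0$, and noting that $\psi''(\lambda)$ equals the variance of $V_i$ under the exponentially tilted law proportional to $e^{\lambda v}$; a short computation bounds this variance by $p(1-p)$ for the relevant range of $\lambda$, and integrating twice from $0$ gives the estimate. An equivalent route is to invoke the relative-entropy form of Chernoff's bound, $\Pr\event*{\tfrac1N\sum_i V_i - p \ge \delta} \le \exp\parens*{-N\,\mathrm{kl}(p+\delta\,\|\,p)}$ with $\mathrm{kl}$ the binary relative entropy, and then lower bound $\mathrm{kl}(p+\delta\,\|\,p) \ge \delta^2/(2p(1-p))$ via the integral representation $\mathrm{kl}(a\,\|\,p) = \int_p^a \tfrac{a-u}{u(1-u)}\,\mathrm{d}u$ together with $u(1-u) \le p(1-p)$ on the relevant interval.
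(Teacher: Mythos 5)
The paper does not prove \cref{lem:hoeffding}: it is stated as a named textbook inequality and invoked directly, so there is no ``paper's proof'' to compare against. Your outline follows the standard Cram\'er--Chernoff route, which is the right general strategy for this kind of statement, but there is a genuine gap at the step you yourself flag as the crux.

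The claimed sub-Gaussian bound $\log M(\lambda) \le \tfrac12\,p(1-p)\lambda^2$ is \emph{not} valid at the value of $\lambda$ the optimisation requires. Writing $\psi(\lambda)=\log M(\lambda)$, one indeed has $\psi''(\lambda)=\tilde p_\lambda(1-\tilde p_\lambda)$ where $\tilde p_\lambda = p e^{\lambda}/\bigl((1-p)+pe^\lambda\bigr)$ is the exponentially tilted success probability. But when $p<1/2$ and $\lambda>0$, the tilt moves $\tilde p_\lambda$ \emph{towards} $1/2$, so $\psi''(\lambda)$ \emph{increases} above $p(1-p)$, and integrating twice gives a bound that is weaker, not stronger, than the one you need. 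Your hedge (``for the small values of $\lambda$ that enter the optimisation'') does not help, because the optimiser $\lambda^\star=\delta/(p(1-p))$ is not small: e.g.\ for $p=0.1$, $\delta=0.1$ one gets $\lambda^\star\approx 1.1$, $\tilde p_{\lambda^\star}\approx 0.25$, and $\psi''(\lambda^\star)\approx 0.19 > 0.09 = p(1-p)$. The same problem appears, in a cleaner form, in your second route: the inequality $\mathrm{kl}(p+\delta\,\|\,p)\ge \delta^2/(2p(1-p))$ is \emph{false} for the upper tail when $p<1/2$, because $u(1-u)>p(1-p)$ on part of the interval $[p,p+\delta]$. Concretely, with $p=0.1$, $\delta=0.1$: $\mathrm{kl}(0.2\,\|\,0.1)\approx 0.0444$ while $\delta^2/(2p(1-p))=0.01/0.18\approx 0.0556$. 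Since the Chernoff exponent $\mathrm{kl}(p+\delta\,\|\,p)$ is the exact large-deviation rate, this means the displayed inequality itself fails for $N$ large (the upper-tail probability decays only like $e^{-0.0444N}$, which eventually exceeds $2e^{-0.0556N}$).

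So the obstacle is not a missing computation but an actual falsehood at the heart of the argument, and it cannot be patched within your framework. What \emph{is} true, by exactly the methods you sketch, is either the genuine Hoeffding bound $\Pr\bigl[\,\lvert\tfrac1N\sum_i V_i - p\rvert\ge\delta\,\bigr]\le 2e^{-2N\delta^2}$ (via the Hoeffding lemma $\psi(\lambda)\le\lambda^2/8$, which bounds the tilted variance by its global maximum $1/4$), or the Bernstein bound $2\exp\!\bigl(-N\delta^2/\bigl(2(p(1-p)+\delta/3)\bigr)\bigr)$, whose extra $\delta/3$ in the denominator is precisely what your integral $\int_p^{p+\delta}(p+\delta-u)/(u(1-u))\,du$ requires once you account for $u(1-u)$ growing past $p(1-p)$. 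The bound as written in \cref{lem:hoeffding} is the second-order Taylor approximation $\mathrm{kl}(p+\delta\,\|\,p)\approx\delta^2/(2p(1-p))$ silently promoted to an inequality; it holds for the lower tail $p-\delta$ (when $p\le 1/2$), since then $u(1-u)\le p(1-p)$ throughout the integration range, but not for the upper tail. If your goal is to supply a proof matching the paper's statement literally, you should instead flag that the statement needs the Bernstein correction term (or needs to use $2e^{-2N\delta^2}$); if the goal is to justify the paper's downstream uses, you should check those uses against one of the correct variants.
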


The following  holds in the  setting where Bob outputs the 
value of the function at the end of the protocol.
\begin{theorem} (Folklore, see~\cite{KushilevitzN1997})
  \label{thm:amp-kn97}
  Let $0<\epsilon'<\epsilon<\frac 1 2$, and
  $C_{\epsilon,\epsilon'} = \frac {2 \epsilon (1 - \epsilon)}{\parens* {
    \frac 1 2 - \epsilon }^2} \ln \parens*{ \frac{2} {\epsilon'}
  }$.  For all functions $f : \cX \times \cY
  \rightarrow \cZ$,  
  $R^\bob_{\epsilon'}(f) \leq C_{\epsilon,\epsilon'} \cdot R^\bob_\epsilon(f).$
\end{theorem}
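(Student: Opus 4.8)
The plan is the standard majority‑vote amplification, with one small twist to cope with a large output alphabet. Let $\Pi$ be a protocol witnessing $R^\bob_\epsilon(f)$: it computes $f$ with error at most $\epsilon$ in the model where Bob produces the answer, at cost $c := R^\bob_\epsilon(f)$. Form $\Pi'$ by running $\Pi$ independently $N$ times, each copy using its own fresh public and private coins, where $N$ is an integer with $N \ge C_{\epsilon,\epsilon'}$ (rounding the budget $C_{\epsilon,\epsilon'}$ up to an integer number of repetitions is immaterial since the statement is a constant‑factor one). Because Bob is the output player, at the end he already holds all $N$ answers $z_1,\dots,z_N$ with no further communication; he outputs the value occurring most often among them, breaking ties arbitrarily. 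Then $\Pi'$ is again a Bob‑outputs protocol and $\CC(\Pi') = N c \le C_{\epsilon,\epsilon'}\cdot R^\bob_\epsilon(f)$, so it remains only to bound its error by $\epsilon'$.

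Fix an input $(x,y)$ and write $z^\star = f(x,y)$. Let $V_i = \indic[z_i = z^\star]$. Since the $N$ copies use independent randomness, $(V_i)_{i \in [N]}$ are i.i.d.\ Bernoulli with parameter $p := \Pr[V_i = 1] \ge 1-\epsilon$. The twist replacing the usual ``round the estimated bit'' is the following observation, which avoids a union bound over the possibly huge set $\cZ$: if $z^\star$ occurs in strictly more than half of the $N$ runs, then no other value can occur as often, so $z^\star$ is the unique plurality and $\Pi'$ is correct. Hence $\Pr[\Pi' \text{ errs on }(x,y)] \le \Pr\!\left[\tfrac1N\sum_i V_i \le \tfrac12\right]$.

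Now apply Hoeffding's inequality (\cref{lem:hoeffding}) with $\delta = \tfrac12 - \epsilon$. Since $p \ge 1-\epsilon$ we have $p - \tfrac12 \ge \delta$, so $\{\tfrac1N\sum_i V_i \le \tfrac12\} \subseteq \{\,|\tfrac1N\sum_i V_i - p| \ge \delta\,\}$; and since $t \mapsto t(1-t)$ is decreasing on $[\tfrac12,1]$ and $p \in [1-\epsilon,1]$, we get $p(1-p) \le \epsilon(1-\epsilon)$. Therefore
$$\Pr[\Pi'\text{ errs on }(x,y)] \;\le\; 2\exp\!\left(-\frac{\delta^2 N}{2p(1-p)}\right) \;\le\; 2\exp\!\left(-\frac{(\tfrac12-\epsilon)^2\, N}{2\epsilon(1-\epsilon)}\right).$$
Substituting $N \ge C_{\epsilon,\epsilon'} = \frac{2\epsilon(1-\epsilon)}{(\tfrac12-\epsilon)^2}\ln(2/\epsilon')$ makes the exponent at most $-\ln(2/\epsilon')$, so the error is at most $2\cdot\tfrac{\epsilon'}{2} = \epsilon'$. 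As this holds for every $(x,y)$, $\Pi'$ computes $f$ with error $\le \epsilon'$ in the Bob‑outputs model at cost $\le C_{\epsilon,\epsilon'}\cdot R^\bob_\epsilon(f)$, which is the claim.

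I expect the only genuinely delicate point to be the plurality observation together with the estimate $p(1-p) \le \epsilon(1-\epsilon)$: for Boolean $f$ the textbook proof thresholds a single estimated bit at $1/2$, but with a large output there are many competing incorrect answers, and the clean way to dodge a union bound over $\cZ$ is precisely that a strict majority for the correct value already forces the plurality to be correct, reducing everything to one Bernoulli tail. The remaining ingredients — using independent coins across the $N$ copies so that \cref{lem:hoeffding} applies, and replacing the unknown bias $p$ by its worst case $1-\epsilon$ via monotonicity of $t(1-t)$ on $[\tfrac12,1]$ — are routine.
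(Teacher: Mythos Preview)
Your proof is correct and is precisely the standard majority-vote argument the paper alludes to: the paper does not give its own proof of this folklore result, only noting that ``there is a unique correct output and the player(s) can compute the majority'' and that the number of repetitions is derived from Hoeffding's inequality (\cref{lem:hoeffding}). The one cosmetic slip is that with $N=\lceil C_{\epsilon,\epsilon'}\rceil$ the inequality $Nc\le C_{\epsilon,\epsilon'}\cdot R^\bob_\epsilon(f)$ goes the wrong way; as you note, this rounding is immaterial for a folklore statement of this kind.
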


Note that it is important here that $f$ is a function, not a relation,
so that there is a unique correct output and the player(s) can compute
the majority.

In the XOR model, finding the majority result among some number $T$ of
runs is much more difficult than in the standard model, since neither of
the players can identify reasonable candidates as the majority answer.
Exchanging all of the $T$ $k$-bit outputs would result in a bound of
$R_{\epsilon'}^\xor(f) \leq C_{\epsilon,\epsilon'} \parens*{
R^\xor_\epsilon(f) +
k %
}.
$
We show that this dependence on $k$ is unnecessary. 

\begin{theorem}\label{thm:error-reduction-no-k}
  Let $0 < \epsilon' < \epsilon < \frac 1 2$, 
  $C_{\epsilon,\epsilon'} =    8\epsilon {\parens*{ \frac 1 2 - \epsilon }^{-2}}{\ln \parens*{\frac 8 {\epsilon'}}}$.  For all $f : \cX \times \cY \rightarrow \zo^k$,
 $ R^{\xor}_{\epsilon'}(f) \leq 
   C_{\epsilon, \epsilon'} \cdot R^\xor_\epsilon(f) 
   + O \parens*{
              C_{\epsilon, \epsilon'} 
  }\ .
$
\end{theorem}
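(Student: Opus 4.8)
The plan is to run $T = C_{\epsilon,\epsilon'} \cdot \Theta(1)$ independent copies of an $\epsilon$-error XOR protocol for $f$ and then design a reconciliation phase whose cost is $O(T)$ rather than $O(T k)$. The key observation is that after the $T$ runs, Alice holds outputs $a^{(1)},\dots,a^{(T)} \in \zo^k$ and Bob holds $b^{(1)},\dots,b^{(T)} \in \zo^k$, and for each $j$ the value $z^{(j)} := a^{(j)} \oplus b^{(j)}$ equals $f(x,y)$ with probability $\geq 1-\epsilon$; by Hoeffding (\cref{lem:hoeffding}) applied with $p \leq \epsilon$ and $\delta = \tfrac12 - \epsilon$, choosing $T$ proportional to $C_{\epsilon,\epsilon'}$ ensures that with probability $\geq 1 - \epsilon'/2$ a strict majority of the $z^{(j)}$ equal the correct value $f(x,y)$. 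The difficulty is that \emph{neither player sees any $z^{(j)}$}, so the usual ``take the majority'' step is not locally computable. First I would reduce the problem to the following task: the players must, using $O(T)$ communication, jointly determine a coordinated pair of outputs $(a^{\mathrm{out}}, b^{\mathrm{out}})$ such that $a^{\mathrm{out}} \oplus b^{\mathrm{out}} = z^{(j^\star)}$ for some index $j^\star$ lying in the majority class — i.e., agree on a \emph{good run index} and then output the corresponding stored strings.

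The heart of the argument is therefore an index-agreement subroutine: find (with failure probability $\leq \epsilon'/4$, say) an index $j^\star \in [T]$ such that $z^{(j^\star)}$ is the majority value, using communication $O(T)$ independent of $k$. The natural approach is a pairwise-equality / tournament strategy. The players cannot compare $z$-values directly, but they \emph{can} test whether $z^{(i)} = z^{(j)}$ for two run indices $i,j$: this is equivalent to testing $a^{(i)} \oplus a^{(j)} = b^{(i)} \oplus b^{(j)}$, which is an Equality instance on $k$-bit strings $a^{(i)}\oplus a^{(j)}$ (Alice) versus $b^{(i)}\oplus b^{(j)}$ (Bob), solvable with $O(\log(1/\delta))$ bits at error $\delta$ by hashing (\cref{def:eq}). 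So the abstract problem is: Alice and Bob each hold $T$ points in a common metric-like space, the points coincide exactly when equal, a $> \tfrac12$ fraction of the (paired) points fall in one cluster equal to the target value, and they must identify a representative of that cluster using only equality-queries between index pairs, at total cost $O(T)$ — meaning $O(1)$ amortized queries per index, hence each equality test run at constant error $\delta$, with the errors controlled by a union bound over $O(T)$ tests (so $T$ must also absorb an extra constant factor to keep $\epsilon'/4 \geq O(T)\cdot\delta$; taking $\delta$ a small constant and $T = \Theta(C_{\epsilon,\epsilon'})$ large enough suffices, and note $C_{\epsilon,\epsilon'} \geq \ln(8/\epsilon') \geq \Omega(\log(1/\epsilon'))$ gives room). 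A clean way to find a majority representative with a linear number of equality comparisons is the Boyer–Moore majority vote: scan the indices $1,\dots,T$ maintaining a candidate index and a counter, incrementing when $z^{(\text{current})} = z^{(\text{candidate})}$ (one equality test) and decrementing otherwise; since a strict majority of the $z^{(j)}$ agree, the surviving candidate index $j^\star$ has $z^{(j^\star)}$ equal to the majority value. This uses exactly $T-1$ equality tests.

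Once $j^\star$ is agreed upon (both players know the index), Alice outputs $a^{(j^\star)}$ and Bob outputs $b^{(j^\star)}$, and $a^{(j^\star)} \oplus b^{(j^\star)} = z^{(j^\star)} = f(x,y)$ whenever (i) the majority of runs was correct and (ii) all $O(T)$ equality tests in the Boyer–Moore scan returned correctly; a union bound makes the total failure probability at most $\epsilon'$. The total communication is $T \cdot R^\xor_\epsilon(f)$ for the runs plus $O(T) \cdot O(1) = O(T) = O(C_{\epsilon,\epsilon'})$ for the reconciliation, giving $R^\xor_{\epsilon'}(f) \leq C_{\epsilon,\epsilon'} R^\xor_\epsilon(f) + O(C_{\epsilon,\epsilon'})$ as claimed. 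I expect the main obstacle to be the error bookkeeping: one must simultaneously pay only constant error per equality test (to keep reconciliation cost $O(T)$ rather than $O(T \log T)$), survive a union bound over $\Theta(T)$ such tests, and keep the Hoeffding-based majority guarantee — all while the constant in $T = \Theta(C_{\epsilon,\epsilon'})$ stays clean; a secondary subtlety is confirming that the Boyer–Moore candidate is genuinely in the majority class under the promise, rather than merely a plurality, which is exactly why strict majority (from $p \leq \epsilon < \tfrac12$) is needed.
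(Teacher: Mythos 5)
Your high-level strategy is correct and matches the paper's: run $T=\Theta(C_{\epsilon,\epsilon'})$ independent copies of the $\epsilon$-error XOR protocol, use Hoeffding (\cref{lem:hoeffding}) to ensure a strict majority of runs is correct with probability $\geq 1-\epsilon'/2$, and then exploit the identity $a^{(i)}\oplus b^{(i)} = a^{(j)}\oplus b^{(j)} \iff a^{(i)}\oplus a^{(j)} = b^{(i)}\oplus b^{(j)}$ to compare run outcomes via Equality tests on locally-computable $k$-bit strings, aiming for an $O(T)$ reconciliation phase. The paper packages exactly this reconciliation as the $\GapMajX$ problem (\cref{lem:amp-xor-hxor}) and solves it in \cref{thm:hxor-no-k}, so the outline is sound.

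The gap is in the reconciliation subroutine. Boyer--Moore uses $T-1$ Equality tests, but these are \emph{adaptive}: the identity of the candidate index at step $i$ depends on the outcomes of earlier tests, so the sequence of $k$-bit pairs to be compared is not known in advance. This means you cannot invoke the amortized Equality bound $R_\delta(\EQ_k^{\otimes m}) \in \Theta(m + \log(1/\delta))$ from \cref{prop:eq}, which is the only mechanism the paper has for paying $O(1)$ bits per test on average; that bound requires all $m$ instances to be fixed before the protocol begins. If you instead run each test in isolation, the error bookkeeping you sketch is inconsistent: with $\delta$ a fixed small constant, the union bound gives total failure $\Theta(T)\cdot\delta$, which exceeds $1$ as $\epsilon'\to 0$ (since $T$ grows like $\log(1/\epsilon')$); and if you set $\delta = O(\epsilon'/T)$, each test costs $O(\log(T/\epsilon'))$ bits, so reconciliation costs $O(T\log(T/\epsilon'))$, not $O(T)$. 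There is also no fault tolerance to hide behind: Boyer--Moore is brittle, and a single wrong comparison can leave an incorrect index as the surviving candidate, so you genuinely must drive every comparison's error down rather than tolerate a constant fraction of bad ones.

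The paper circumvents this by making the entire set of Equality instances \emph{non-adaptive}. It publicly samples an Erd\H{o}s--R\'enyi random graph over a subsample of $T_{\epsilon'}=\Theta(\log(1/\epsilon'))$ rows (edges chosen with public coins, independent of the data), solves the $O(T_{\epsilon'})$ Equality instances corresponding to edges in a single amortized batch, and invokes the giant-component bound (\cref{lem:connected-component}) to guarantee that the majority class forms a large connected cluster, leaving at most two candidate indices; a second non-adaptive batch of $N$ Equality tests against each fixed pivot decides between them. To repair your proof you would need to replace Boyer--Moore by a comparison schedule committed to in advance, which is essentially what the random-graph step accomplishes.
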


In order to prove this result, we introduce the Gap Majority ($\GapMAJ$) problem, 
show how \cref{thm:error-reduction-no-k} reduces to solving $\GapMajX$
(\cref{lem:amp-xor-hxor}), then give an upper bound on solving
$\GapMajX$ (\cref{thm:hxor-no-k}).

The partial function $\GapMAJ[N,k,\epsilon,\mu]$ has $N$ strings of length $k$ as input and 
the promise is that there is a string $z$ of length $k$ that appears with $\mu$ weight at least $(1-\epsilon)$ among the $N$ strings, where $\mu$ is a  distribution over indices in $[N]$.

\begin{definition}[Gap Majority]
\label{def:hxor}
In the Gap Majority problem  $\GapMAJ[N,k,\epsilon,\mu]:\parens*{ \zo^k }^N\rightarrow \zo^k $ 
the input is $(Z_1,\ldots,Z_N)$,
 and 
$\mu$ is a fixed distribution over the indices $[N]$.
  When unspecified, 
  $\mu$ is understood to be the
  uniform distribution.
  The promise is that $\exists z \in
  \zo^k$ such that $\mu\parens*{\set*{i \in [N] : Z_i  =
  z}} \geq (1-\epsilon)$. Then
  \[  \GapMAJ[N,k,\epsilon,\mu] 
                         (( Z_i )_{i\in[N]}) 
            =  z \quad \textrm{s.t.} \quad \mu\parens*{
               \set*{ i : Z_i = z }}
            \geq (1-\epsilon) .
   \]
\end{definition}

In $\GapMajX$, the players are given $N$ strings of length $k$
and their goal is to compute $\GapMAJ$ on the bitwise XOR of their inputs whenever the $\GapMAJ$ promise is satisfied. 
(Notice that when $k=1$, 
this is equivalent to the Gap Hamming Distance problem~(\cref{def:ghd})
with parameters {$L=\epsilon N$, $U=(1-\epsilon)N$}.) 

For inputs $(X_1,\ldots,X_N),(Y_1,\ldots,Y_N)$ to  $\GapMajX[N,k,\epsilon,\mu]$,
we will refer to a pair $(X_i, Y_i)$ as a \emph{row}, and we call $X_i$ Alice's $i$th row, and $Y_i$ Bob's $i$th row.
As a warm-up exercise, we show that  error reduction
reduces to solving an instance of $\GapMajX$.
\begin{lemma}\label{lem:amp-xor-hxor}
  Let $0 < \epsilon' < \epsilon < \frac 1 2$ and $C_{\epsilon,\epsilon'} =   2 \epsilon {\parens*{\frac 1 2 - \epsilon}^{-2}} { \ln \parens*{\frac 4 {\epsilon'} }}$. 
For every $f : \cX \times \cY \rightarrow
  \zo^k$,
$
  R_{\epsilon'}^\xor(f) 
	\leq 
	C_{\epsilon, \epsilon'}\cdot R_\epsilon^\xor(f) 
	+ R_{\epsilon' / 2}^\xor\parens*{
 \GapMajX[C_{\epsilon,\epsilon'},k,\frac 1 4 + \frac {\epsilon} 2]
 }.
$
\end{lemma}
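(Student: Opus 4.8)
The plan is to run the textbook ``repeat and take a majority vote'' amplification, except that the majority is computed by a single invocation of $\GapMajX$ on the collected outputs rather than by sending all of them to one player. Fix an optimal XOR protocol $\Pi_0$ computing $f$ with error $\epsilon$ and set $N := C_{\epsilon,\epsilon'}$. In the first phase, run $N$ copies $\Pi_0^{(1)},\dots,\Pi_0^{(N)}$ of $\Pi_0$, each with fresh public and private randomness; copy $i$ yields Alice an output $a_i \in \zo^k$ and Bob an output $b_i \in \zo^k$ with $\Pr[a_i \oplus b_i = f(x,y)] \ge 1-\epsilon$, and these $N$ events are mutually independent. This phase costs $N \cdot R_\epsilon^\xor(f) = C_{\epsilon,\epsilon'} \cdot R_\epsilon^\xor(f)$ bits, after which Alice holds $(a_1,\dots,a_N) \in (\zo^k)^N$ and Bob holds $(b_1,\dots,b_N) \in (\zo^k)^N$.

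Next I would argue that, with probability at least $1-\epsilon'/2$ over the first-phase randomness, the pair $\bigl((a_i)_i,(b_i)_i\bigr)$ is a valid instance of $\GapMajX[N,k,\tfrac14+\tfrac\epsilon2]$ whose value is $f(x,y)$. Recall that this problem asks for $\GapMAJ$ applied to the rows $a_i \oplus b_i$, and its promise with parameter $\tfrac14+\tfrac\epsilon2$ is that some string occurs among these rows with (uniform) weight at least $\tfrac34-\tfrac\epsilon2$. The point is the choice of this threshold as the midpoint of $\tfrac12$ and $1-\epsilon$. On the one hand, letting $V_i := \indic[a_i \oplus b_i = f(x,y)]$ (independent Bernoulli variables of mean at least $1-\epsilon$; by a coupling with i.i.d.\ $\mathrm{Ber}(1-\epsilon)$ variables we may assume the mean is exactly $1-\epsilon$), the gap $(1-\epsilon) - (\tfrac34-\tfrac\epsilon2) = \tfrac12(\tfrac12-\epsilon)$ below the mean is what \cref{lem:hoeffding}, applied with $N = C_{\epsilon,\epsilon'}$ trials, converts into $\Pr\bigl[\tfrac1N\sum_i V_i < \tfrac34-\tfrac\epsilon2\bigr] \le \epsilon'/2$. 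On the other hand, since $\epsilon < \tfrac12$ we have $\tfrac34-\tfrac\epsilon2 > \tfrac12$, so whenever at least a $\tfrac34-\tfrac\epsilon2$ fraction of the rows equal $f(x,y)$, no other string can appear in a majority of the rows; hence the $\GapMAJ$ promise is satisfied and its unique value is $f(x,y)$. Call this the \emph{good event}.

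To conclude, append one run of an optimal XOR protocol for $\GapMajX[N,k,\tfrac14+\tfrac\epsilon2]$ with error $\epsilon'/2$ on the inputs $(a_1,\dots,a_N)$ and $(b_1,\dots,b_N)$, and let its two outputs be the players' final outputs. This adds $R_{\epsilon'/2}^\xor\bigl(\GapMajX[C_{\epsilon,\epsilon'},k,\tfrac14+\tfrac\epsilon2]\bigr)$ bits, so the total matches the claimed bound. For correctness, condition on the first-phase randomness: on the good event the $\GapMajX$ instance is valid with value $f(x,y)$, so the appended protocol produces a pair XORing to $f(x,y)$ with probability at least $1-\epsilon'/2$ over its own coins; by a union bound the overall error is at most $\Pr[\neg\,\text{good}] + \epsilon'/2 \le \epsilon'$. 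The only step needing real care is the middle one: choosing the $\GapMAJ$ gap so that it leaves simultaneously enough room below $1-\epsilon$ for Hoeffding to reach failure probability $\epsilon'/2$ with $C_{\epsilon,\epsilon'}$ repetitions, and enough room above $\tfrac12$ to force $\GapMAJ$ to output the correct value; everything else (independence of the repetitions, additivity of communication across the two phases, and the union bound) is routine. The substantive part underlying \cref{thm:error-reduction-no-k} is then the $k$-free upper bound on $R_{\epsilon'/2}^\xor(\GapMajX[\cdot])$, which is handled separately in \cref{thm:hxor-no-k}.
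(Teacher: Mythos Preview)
Your proposal is correct and follows essentially the same approach as the paper: run $C_{\epsilon,\epsilon'}$ independent copies of an optimal XOR protocol for $f$, apply Hoeffding with deviation $\delta=\tfrac12(\tfrac12-\epsilon)$ to show the promise of $\GapMajX[C_{\epsilon,\epsilon'},k,\tfrac14+\tfrac\epsilon2]$ holds except with probability $\epsilon'/2$, then solve that instance with error $\epsilon'/2$ and union-bound. Your write-up is in fact slightly more careful than the paper's in two places (the coupling remark to handle error strictly below $\epsilon$, and the explicit check that $\tfrac34-\tfrac\epsilon2>\tfrac12$ forces the majority to be $f(x,y)$), but the argument is the same.
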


\begin{proof}[Proof of \cref{lem:amp-xor-hxor}]
  Let $\pi$ be  a protocol which XOR-computes $f(x,y)$ with $\epsilon$-error and $\pi'$ be a protocol which computes $\GapMajX[C_{\epsilon,\epsilon'},k,\frac 1 4 + \frac {\epsilon} 2]$ in the XOR model, with error $\epsilon'/2$.
We consider the following protocol, which we denote by $\hat \pi$: first, run $\pi$ $C_{\epsilon,\epsilon'}$ times; then, use the outputs produced by this computation as inputs for $\pi'$, run the latter protocol, and output the result.
     We analyze the new protocol $\hat \pi$ as follows.
     The outputs produced in the first step 
     are strings $X_1,\cdots,X_{C_{\epsilon, \epsilon'}}$ on Alice's
side, and  $Y_1,\cdots,Y_{C_{\epsilon, \epsilon'}}$ for Bob. A run of $\pi$ is correct iff $X_i \oplus Y_i=f(x,y)$.
By  Hoeffding's bound (\cref{lem:hoeffding}), applied with $N = C_{\epsilon,\epsilon'}$, $V_i = 1$ if $X_i \oplus Y_i \neq f(x,y)$ and $V_i = 0$ otherwise  for $i=1,\dots,N$, $p = \Exp[V_i] \leq \epsilon$, and $\delta = \frac{1}{2}(\frac{1}{2} - \epsilon)$, we get that with probability at least $1-2e^{-\delta^2N / (2p(1-p))} \geq 1 - \epsilon'/2$, a fraction $p + \delta \leq (\frac{1}{2} + \epsilon)/2$ of the $N$ computations err.
In other words, with probability at most $\epsilon'/2$, the above strings fail to satisfy the promise in the definition of $\GapMajX[C_{\epsilon,\epsilon'},k,\frac 1 4 + \frac {\epsilon} 2]$. {Conditioned} on this not happening (i.e., on the promise being met), $\pi'$ (hence~$\hat\pi$) errs with probability at most $\epsilon'/2$. The overall error is  at most $\epsilon'$.
\end{proof}

To derive a general upper bound on error reduction using
\cref{lem:amp-xor-hxor}, it would suffice to have
an  upper bound on $R^\xor_{\epsilon'}(\GapMajX[N,k,\epsilon])$.
When the error parameter is large ($\epsilon\leq \epsilon'$), $\GapMajX$ in the XOR model is trivial: the
players just need to sample a common row and output according to that
row.  However, \cref{lem:amp-xor-hxor} requires solving a $\GapMajX$ 
instance with small error~$\epsilon'/2$, which takes us back to square one:
finding an error reduction scheme that we can apply to $\GapMajX$.

In the remainder of the section, we give a protocol for $\GapMajX$ 
(\cref{sec:hxor-amp})
followed by an error reduction scheme
for direct sum functions (\cref{sec:direct-sum}). In both cases, we use the structure of the
XOR function and a protocol for Equality on pairs of rows to find
a majority outcome.
The error reduction scheme for direct sum functions is a refinement of
\cref{lem:amp-xor-hxor} and is useful in cases where the
starting error is very close to $\frac 1 2$ and where computing one bit
of the output is significantly less costly than computing the full
output.

\subsection{Solving \texorpdfstring{$\GapMajX$}{HXOR}}
\label{sec:hxor-amp}
Given an instance of $\GapMajX[N,k,\epsilon]$, if Alice and Bob pick a
row and output what they have on this row, they get the correct output
with probability $\geq 1 - \epsilon$. 
Recall that we would like to achieve error $\epsilon' < \epsilon$ without
incurring a dependence on parameter $k$, which in our application to error reduction corresponds to the length
of the output. We show that this is possible.

\begin{theorem}\label{thm:hxor-no-k}
  Let $0 < \epsilon' < \epsilon < \frac 1 2$,
  $R^{\xor}_{\epsilon'}(\GapMajX[N,k,\epsilon]) \leq  
    O \parens*{
    N + \log \parens*{\tfrac 1 {\epsilon'}}
    }\ .$
\end{theorem}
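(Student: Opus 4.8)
Write $z$ for the correct answer and let $M := \{\, i \in [N] : X_i \oplus Y_i = z \,\}$ be the set of \emph{good rows}; the promise says $\mu(M) \ge 1-\epsilon$, and every other ``XOR-class'' (maximal set of rows sharing a common value $X_i\oplus Y_i \ne z$) has $\mu$-mass at most $\epsilon < \tfrac12$. The plan rests on one reduction: it suffices for Alice and Bob to agree, as a function of the public randomness and the transcript, on a single index $i^\ast \in M$. Indeed, if they do, Alice outputs $X_{i^\ast}$ and Bob outputs $Y_{i^\ast}$, and the bitwise XOR of these outputs is exactly $X_{i^\ast}\oplus Y_{i^\ast} = z$. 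So the task is: \emph{identify a common index in the majority XOR-class with error $\epsilon'$}.

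The single primitive we use is a test of whether two given rows $i,j$ lie in the same XOR-class. Since $X_i\oplus Y_i = X_j\oplus Y_j$ is the same as $X_i\oplus X_j = Y_i\oplus Y_j$, this is a single Equality instance on $k$-bit strings. With public coins it can be solved with error $\delta$ using $O(\log(1/\delta))$ bits of communication \emph{and no dependence on $k$} --- e.g.\ Alice sends the inner products of $X_i\oplus X_j$ with $O(\log(1/\delta))$ shared uniformly random vectors in $\zo^k$, the public randomness being free. Removing the $k$-dependence here is exactly what separates the theorem from the trivial bound obtained by exchanging the full $k$-bit outputs.

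The protocol is then: (i) using public coins sample i.i.d.\ indices $c_1,\dots,c_m \sim \mu$ with $m = \Theta\!\big((\tfrac12-\epsilon)^{-2}\ln(1/\epsilon')\big)$; by Hoeffding, except with probability $\epsilon'/3$, more than $\tfrac{m}{2} + \Omega((\tfrac12-\epsilon)m)$ of the samples lie in $M$, so $M$ is the strict majority XOR-class among the samples, with margin $\Omega((\tfrac12-\epsilon)m)$. (ii) Run the Boyer--Moore majority-vote on $c_1,\dots,c_m$, answering each ``is this sample in the same class as the current candidate?'' query via the primitive above at a fixed error $\delta := c_0(\tfrac12-\epsilon)$ for a small absolute constant $c_0$; this uses $m-1$ queries, hence $O(\log(1/\epsilon'))$ bits of communication (the hidden constant depending on $\epsilon$). (iii) Alice and Bob output $X_{c^\ast}$ and $Y_{c^\ast}$, where $c^\ast$ is the index returned in (ii). For the analysis: by a Chernoff bound, except with probability $\epsilon'/3$, at most an $O(\delta)$-fraction of the $m-1$ queries err; and the Boyer--Moore vote, run with comparisons that are wrong on fewer than half of its majority margin, is guaranteed to return an element of the true majority class --- so, picking $c_0$ small enough, $c^\ast \in M$ except with probability at most $2\epsilon'/3 \le \epsilon'$. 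The total communication is $O(\log(1/\epsilon')) \le O(N + \log(1/\epsilon'))$. (The $+N$ in the statement is not needed by this approach; a protocol that prefers to use the $N$ given rows directly can instead run the Boyer--Moore vote over all $N$ rows at cost $O(N)$, achieving error $e^{-\Omega(N)}$, and then append $O(\log(1/\epsilon'))$ further primitive-tests of the winner against freshly sampled rows to confirm it and drive the error below $\epsilon'$.)

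The step I expect to be the main obstacle is controlling the cost of the majority-finding. We cannot afford per-comparison error $\ll 1/m$ --- that would make step (ii) cost $\Theta(\log^2(1/\epsilon'))$ --- so the comparisons must run with only a constant error rate and the majority-vote must be argued robust against a constant fraction of faulty comparisons. This robustness of Boyer--Moore (each faulty comparison perturbs the outcome by a bounded amount, so the true majority element still wins provided its margin exceeds twice the number of faults), together with the Hoeffding estimate that the genuine margin among the samples is a constant (in $\epsilon$) fraction of $m$, is the technical heart of the argument; the other thing to get right is the $k$-independent fingerprinting described above.
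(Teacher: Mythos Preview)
Your approach is correct and takes a genuinely different route from the paper's. Both arguments rest on the same primitive --- testing whether rows $i,j$ lie in the same XOR-class reduces to a single $k$-free public-coin Equality instance via $X_i\oplus Y_i = X_j\oplus Y_j \Leftrightarrow X_i\oplus X_j = Y_i\oplus Y_j$ --- but diverge thereafter. The paper samples $O(\log(1/\epsilon'))$ rows, builds an Erd\H{o}s--R\'enyi random graph on them whose edges are tested by amortized Equality, and invokes a giant-component lemma to narrow down to at most two candidate rows; the tie between these is then broken by testing one candidate against \emph{all} $N$ rows, and this last step is exactly what produces the $+N$ term. Your Boyer--Moore scheme replaces both the random-graph step and the all-rows verification: once the sample contains a strict majority of good rows, the noisy vote returns a good index directly. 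The trade-off is that your sample size must scale with $(\tfrac12-\epsilon)^{-2}$ to guarantee that strict majority, so your bound is $O_\epsilon(\log(1/\epsilon'))$ with constants depending on $\epsilon$, whereas the paper's $O(N+\log(1/\epsilon'))$ has absolute constants. For the theorem as stated either is fine, and yours is arguably cleaner.

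Two small points are worth tightening. First, the robustness of Boyer--Moore under faulty comparisons is asserted rather than shown; a one-line justification is to replace, at each faulted step, the incoming element by one of the candidate's class (or its complement) so that the faulty answer becomes correct --- the resulting fault-free run is step-for-step identical, its majority class is unchanged since at most $f$ elements were altered, and the final candidate is set at a counter-zero step where no comparison (hence no modification) occurs. Second, your Chernoff bound on the number of faulty comparisons needs the remark that although the $\ell$-th comparison is adaptive, it uses fresh public randomness and hence errs with probability at most $\delta$ conditioned on the past, so the fault indicators are stochastically dominated by i.i.d.\ Bernoulli$(\delta)$ variables.
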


\idea
We use the fact that $a\oplus b
= a'\oplus b'$ iff $a\oplus a' = b\oplus b'$. Therefore, the players
can identify rows that XOR to a same string by solving instances of
Equality. This idea alone is enough to obtain a protocol
for $\GapMajX[N,k,\epsilon]$ of complexity
$O\parens*{ N^2 + \log \parens*{ \tfrac 1 {\epsilon'} } }$ by
computing Equality for all $\binom{N}{2}$ pairs of rows to identify the majority outcome.
We improve on this by reducing
the number of computed Equality instances using Erd\H{o}s-R\'enyi random graphs
(\cref{lem:connected-component}).

\begin{lemma}[Variation of eq.\ $(9.18)$ in~\cite{ErdosR60}]\label{lem:connected-component}
  Let $G(n,p(n))$ be the distribution over graphs of $n$ vertices where each edge is sampled with independent probability
  $p(n)$. Let $L_1(G)$ be the size of the largest connected component
  of $G$. Then: 
  \[\forall \alpha \in [0,1], c \in \bbR^+,
  \qquad\Pr\range*{L_1(G(n,c/n)) < (1-\alpha)n}
  \leq e^{\parens*{
  \ln(2) - \frac \alpha 2 \parens*{1-\frac \alpha 2} c
  } n} .\]
  In particular this probability goes to $0$ as $n$ goes to infinity when $\alpha c > 4 \ln(2)$.
\end{lemma}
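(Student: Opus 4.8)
The plan is a first-moment argument over edgeless vertex cuts. Write $p = c/n$ throughout, and dispose of the trivial case $\alpha = 0$ immediately (the right-hand side is then $e^{n\ln 2} \ge 1$). The first step is a purely combinatorial reduction: I claim that if $L_1(G) < (1-\alpha)n$ then $G$ admits a ``fairly balanced'' edgeless cut. Indeed, let $C_1,\dots,C_m$ be the connected components of $G$ with sizes $n_1,\dots,n_m$ summing to $n$; the hypothesis says \emph{every} $n_i < (1-\alpha)n$. Add the components in an arbitrary order, stopping the first time the running total reaches $\tfrac{\alpha n}{2}$ (this is possible since the total is $n \ge \tfrac{\alpha n}{2}$), and let $S$ be the union of the components added so far, $s = |S|$. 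Before adding the last component the running total was $< \tfrac{\alpha n}{2}$, so $\tfrac{\alpha n}{2} \le s < \tfrac{\alpha n}{2} + \max_i n_i < \tfrac{\alpha n}{2} + (1-\alpha)n = (1-\tfrac{\alpha}{2})n$. Thus $S$ is a union of components with $s \in [\tfrac{\alpha n}{2}, (1-\tfrac{\alpha}{2})n]$, and in particular $\emptyset \ne S \ne V$; since $S$ is a union of components, $G$ has no edge between $S$ and $\bar S$.

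The second step bounds the probability of a fixed edgeless cut. For a fixed $S$ with $|S| = s$, the $s(n-s)$ potential crossing edges are independently absent with probability $1-p$, so $\Pr[\text{no edge between } S \text{ and } \bar S] = (1-p)^{s(n-s)} \le e^{-p\,s(n-s)}$. Since $t \mapsto t(n-t)$ is concave and equals $\tfrac{\alpha n}{2}\bigl(n - \tfrac{\alpha n}{2}\bigr) = \tfrac{\alpha(2-\alpha)}{4}n^2 = \tfrac{\alpha}{2}\bigl(1-\tfrac{\alpha}{2}\bigr)n^2$ at both endpoints $t = \tfrac{\alpha n}{2}$ and $t = (1-\tfrac{\alpha}{2})n$, every $s$ in the range $[\tfrac{\alpha n}{2}, (1-\tfrac{\alpha}{2})n]$ satisfies $s(n-s) \ge \tfrac{\alpha}{2}\bigl(1-\tfrac{\alpha}{2}\bigr)n^2$. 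Plugging in $p = c/n$ yields $\Pr[\text{no edge between } S \text{ and } \bar S] \le e^{-\frac{\alpha}{2}(1-\frac{\alpha}{2})cn}$ for each such $S$.

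The third step is the union bound over the at most $2^n$ candidate sets $S$ in the relevant size range:
\[
\Pr\bigl[L_1(G(n,c/n)) < (1-\alpha)n\bigr] \;\le\; \sum_{S\,:\,\frac{\alpha n}{2}\le |S|\le (1-\frac{\alpha}{2})n} (1-p)^{|S|(n-|S|)} \;\le\; 2^n\cdot e^{-\frac{\alpha}{2}(1-\frac{\alpha}{2})cn} \;=\; e^{\bigl(\ln 2 - \frac{\alpha}{2}(1-\frac{\alpha}{2})c\bigr)n}.
\]
For the ``in particular'' clause, note that $\alpha \in [0,1]$ gives $2-\alpha \ge 1$, hence $\tfrac{\alpha}{2}(1-\tfrac{\alpha}{2})c = \tfrac{\alpha(2-\alpha)}{4}c \ge \tfrac{\alpha}{4}c$, so $\alpha c > 4\ln 2$ makes the exponent negative and the bound tends to $0$ as $n \to \infty$.

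I do not expect a genuine obstacle; the computation is routine once the reduction is in place. The one point that needs care is exactly that combinatorial reduction: to control the overshoot of the greedy partial sum, one must use that \emph{all} components are strictly smaller than $(1-\alpha)n$, not merely that most of the mass avoids any single component.
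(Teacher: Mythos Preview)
Your proof is correct and follows essentially the same approach as the paper's: reduce the event $L_1(G) < (1-\alpha)n$ to the existence of a balanced edgeless cut with $|S| \in [\tfrac{\alpha}{2}n,(1-\tfrac{\alpha}{2})n]$, bound the probability that a fixed such cut is edgeless by $(1-c/n)^{|S|(n-|S|)} \le e^{-\frac{\alpha}{2}(1-\frac{\alpha}{2})cn}$, and union-bound over $2^n$ cuts. Your greedy argument for the combinatorial reduction supplies the detail that the paper merely cites from Erd\H{o}s--R\'enyi.
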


For completeness, the  proof is given in \cref{app:proofs-error-reduction}.

\begin{proof}[Proof of \cref{thm:hxor-no-k}]
 Consider the $\GapMajX$ instance as a $N\times k$ matrix such that
 $(X_i)_{i \in [N]}$ are the rows of Alice and $(Y_i)_{i \in [N]}$ are
 the rows of Bob.  By the promise of the $\GapMajX$ problem, we know
 there exists a  $z\in\zo^k$ such that $\set{i:X_i \oplus Y_i =z}
 \geq (1-\epsilon)N$. The goal is now for Alice and Bob to identify a
 row belonging to this large set of rows that XOR to the same $k$-bit
 string.

Let $i$ and $j$ be the indices of two rows.
The event that the two rows XOR
to the same string is expressed as $X_i \oplus Y_i = X_j \oplus Y_j$,
which is equivalent to $X_i \oplus X_j = Y_i \oplus Y_j$. This means
that we can test whether any two rows XOR to the same bit string with
a protocol for Equality.

The protocol goes through the following steps:
\begin{enumerate}
  \item The players pick rows randomly, enough rows so that with high
    probability, a constant fraction of the rows XOR to the {majority element}
    $z$.
  \item The players solve instances of Equality to find large sets of rows
    that XOR to the same string. In each such large set of rows, they pick a single row. This leaves them with a constant
    number of candidate rows that might XOR to the {majority element} $z$.
  \item The players decide between those candidates by comparing them
    with all the rows. There is one candidate row that XORs to the same string
    as most rows; this row XORs to the {majority element} $z$.
\end{enumerate}

\begin{description}
  \item[\hypertarget{step:many-iterations-f}{Step 1.}] 
  Using public randomness, Alice and Bob now pick a
    multiset $S$ of all their rows of size $\card{S} =
    T_{\epsilon'}=50\ln\parens*{ \frac {10} {\epsilon'}}$.  Each
    element of $S$ is picked uniformly and independently.  Using
    Hoeffding's inequality (\cref{lem:hoeffding}), with
    probability $\geq 1 - \frac {\epsilon'} 5$ more than $\frac 2 5$
    of those executions XOR to the {majority element} $z$. 
  \item[\hypertarget{step:first-equality-batch}{Step 2.}]
  We now consider $S$ as the vertices $V$ of a random
    graph $G=G(V,E)$, in which each edge is picked with a probability
    $\frac c {\card{V}}$ with $c>0$. Consider the subgraph $G'$ of $G$
    induced on the vertices $V' \subseteq V$ that correspond to
    executions that XOR to the {majority element} $z$.  From the previous step, we
    know that $\card{V'} \geq \frac 2 5 T_{\epsilon'} = 20 \ln\parens*{ \frac
    {10} {\epsilon'}}$. The subgraph $G'$ is a random graph
    where each edge was picked with the same probability $\frac c
    {\card{V}} = \frac {c'} {\card{V'}}$ where $c' = c \frac {\card{V'}}{\card{V}} \geq
    \frac 2 5 c$. By \cref{lem:connected-component}, this
    subgraph $G'$ contains a connected component of size $\geq
    (1-\frac 1 {12})\card{V'} \geq \frac {11} {30}\card{V}$ with probability
    $\geq 1 - 2^{-\card{V'}} \geq 1-\frac {\epsilon'} 5$ for $c \geq \frac
    {720}{143}\ln(2) \approx 3.49$ as $ \card{V'} \geq 20 \ln\parens*{ \frac
    {10} {\epsilon'}} \geq
    \log{\parens*{\frac{5}{\epsilon'}}}$.

At this point, Alice (resp.\ Bob) computes the bitwise XOR of all pairs
of executions that correspond to an edge in $G$: $(X_i \oplus
X_j)_{(i,j)\in E,i<j}$ (resp.\ $(Y_i \oplus Y_j)_{(i,j) \in
  E,i<j}$). For $\epsilon'$ small enough, with high probability ($\geq
1 - \frac {\epsilon'} 5$), the set of edges of $G$ is smaller than
$2c\cdot T_{\epsilon'}$ by Hoeffding's inequality (the players can
abort the protocol otherwise). Then, Alice and Bob solve $\leq 2c\cdot
T_{\epsilon'}$ instances of Equality with (total) error $\leq \frac
{\epsilon'} 5$ to discover a large set of rows that XOR to a same bit
string. We now have groups of rows that we know XOR to the same bit
string, at least one of which represents more than $\frac {11} {30}$
of $S$'s rows because of the Hoeffding argument combined with the
random graph lemma.

Now for each submultiset of rows of $S$ that XOR
to the same bit string and represents more than $\frac {11} {30}$ of all of
$S$'s rows, pick an arbitrary row in the 
  submultiset. If there is only one such 
  submultiset, Alice and Bob can end the protocol here, 
  outputing the content of the
 row selected in this submultiset. If there were two
such submultisets, then let $i_1$ and $i_2$
be the indices picked in each submultiset.

\item[Step 3.]
To decide between their two candidates, Alice and Bob solve $N$
Equality instances between $X_{i_1} {\oplus} X_j$ and $Y_{i_1} {\oplus}
Y_j$ for all $j\in \range{N}$ with error~$\leq \frac {\epsilon'}
5$. If more than half of the $N$ rows XOR to the same string as the
$i_1^{th}$ row, Alice and Bob output their $i_1^{th}$ row. Otherwise,
they output the other candidate row $i_2$.

\end{description}

The complexity of computing $\GapMajX[N,k,\epsilon]$ with error $\epsilon'<\epsilon$ satisfies
  \label{eqn:amp-xor-no-k}
  \begin{align*}
  R^{\xor}_{\epsilon'}\parens*{ \GapMajX[N,k,\epsilon] }
  & \leq
    R_{\epsilon'/5}\parens*{ \EQ_k^{\otimes 2 c T_{\epsilon'}} }
    + R_{\epsilon'/5}\parens*{ \EQ_k^{N} }\ .
  \end{align*}

To conclude, we apply an amortized protocol for 
Equality (\cref{prop:eq}).
\end{proof}

Combining \cref{lem:amp-xor-hxor} and \cref{thm:hxor-no-k}
concludes the proof of \cref{thm:error-reduction-no-k}. 
We will return to the  $\GapMajX$ problem
in 
\cref{app:cc-hxor}
where we give upper bounds in various models (\cref{cor:simple-hxor-bounds}).

\subsection{XOR Error reduction for direct sum functions}
\label{sec:direct-sum}

The protocol of  \cref{thm:error-reduction-no-k} 
first generates a full instance of $\GapMajX$, then solves this instance. 
The generation of this instance might create an implicit dependency on the output length $k$ of $f$, which
in the regime where $\epsilon$ is very close to $1/2$ can be prohibitive. 
We give a different protocol in which the players are not required to fully generate these intermediate results.

For large output functions, generating one bit of the output can be much less {costly} than generating all $k$,
for example, when $f$ is a direct sum of $k$ instances of a function
$g$. We state our stronger amplification theorem for the case of direct sum problems of Boolean functions, but
we note that the protocol could be used for other problems where computing one bit of the output
is less costly than computing the entire output.

\begin{theorem}\label{thm:error-reduction-direct-sum}
  Let $0 < \epsilon' < \epsilon < \frac 1 2$ and 
$C_{\epsilon,\epsilon'}=   8 \epsilon {\parens*{ \frac 1 2 - \epsilon } ^{-2}}{\ln \parens*{\frac {12} {\epsilon'}}}.$
For any $g : \cX \times \cY \rightarrow \zo$ 
and $f=g^{\otimes k}$,
\label{eq:amp-xor-version}
 \begin{align*}
  R^{\xor}_{\epsilon'}(f) & \leq 
   50 \ln \parens*{ \tfrac {12} {\epsilon'} }\cdot R^\xor_\epsilon(f) 
  +C_{\epsilon, \epsilon'} \cdot R^\xor_\epsilon(g) 
  + O \parens*{
	 C_{\epsilon, \epsilon'}
  	+ \log (k) 
  }\ .
  \end{align*}

\end{theorem}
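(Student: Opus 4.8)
The plan is to mimic the three-step structure of the proof of \cref{thm:hxor-no-k}, but interleave the generation of the intermediate $\GapMajX$ instance with the reconciliation so that the players never need to materialize all $k$ output bits of a row at once. First I would run the base protocol $\pi$ for $f = g^{\otimes k}$ a number $N_1 = 50\ln(12/\epsilon')$ of times, producing $N_1$ candidate full-length outputs on each side, $X_1,\dots,X_{N_1}$ for Alice and $Y_1,\dots,Y_{N_1}$ for Bob, with $X_i \oplus Y_i = f(x,y)$ whenever run $i$ was correct. By Hoeffding's inequality (\cref{lem:hoeffding}), with probability $\geq 1 - \epsilon'/6$ a strict majority (in fact a $2/5$-fraction, picking constants as in \cref{thm:hxor-no-k}) of these runs are correct. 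This is the ``expensive'' part, and it contributes the $50\ln(12/\epsilon')\cdot R^\xor_\epsilon(f)$ term; the point of using only $O(\log(1/\epsilon'))$ repetitions here, rather than $C_{\epsilon,\epsilon'}$, is that the subsequent step boosts a constant-fraction-of-a-majority into a true majority without re-running $\pi$ on the full output.

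Next, instead of reconciling the $N_1$ full rows directly, I would reconcile them \emph{bit by bit} against a fresh, cheap-to-generate pool. The key observation is that $f = g^{\otimes k}$, so the $\ell$-th bit of $f(x,y)$ is just $g$ applied to the $\ell$-th pair of coordinates of $(x,y)$, and this single bit can be computed by a protocol for $g$ at cost $R^\xor_\epsilon(g)$. Run such a protocol $C_{\epsilon,\epsilon'} = 8\epsilon(\tfrac12-\epsilon)^{-2}\ln(12/\epsilon')$ times on the relevant coordinate to get, by Hoeffding again, a bit $b_\ell$ that equals the $\ell$-th bit of $f(x,y)$ with error $\leq \epsilon'/6$ — but we do this only for the $O(1)$ bit-positions we actually need to disambiguate candidate rows, not for all $k$. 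Concretely: using public coins the players select a sub-multiset of $O(\log(1/\epsilon'))$ of the $N_1$ rows, build an Erd\H{o}s--R\'enyi graph on them, and solve $O(\log(1/\epsilon'))$ Equality-on-pairs instances $\EQ_k(X_i\oplus X_j, Y_i\oplus Y_j)$ to cluster rows that XOR to the same string (\cref{lem:connected-component} guarantees one giant cluster of size $> \tfrac{11}{30}$ of the pool, leaving at most two candidate clusters). Then, to decide between the (at most two) surviving candidate rows $i_1, i_2$, rather than comparing $X_{i_1}\oplus X_j$ with $Y_{i_1}\oplus Y_j$ over all $j\in[N_1]$ as in \cref{thm:hxor-no-k}, the players find the first coordinate $\ell$ on which the two candidates differ (this costs $\FtFD_k$-style communication, $O(\log k + \log(1/\epsilon'))$, and is where the $\log k$ term enters), compute the amplified bit $b_\ell$ via the $C_{\epsilon,\epsilon'}$ repetitions of $g$ on coordinate $\ell$, and keep whichever candidate agrees with $b_\ell$. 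Finally both players output their copy of the chosen row.

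For correctness I would union-bound: the $N_1$ runs of $\pi$ give a good pool except with probability $\epsilon'/6$; the Hoeffding step inside the pool fails with probability $\epsilon'/6$; the Equality-batch and random-graph events fail with probability $O(\epsilon')$ (tune the Step-2 constants so the total is $\leq \epsilon'/6$ each); the $\FtFD$-type subroutine errs with probability $\leq \epsilon'/6$; and the $C_{\epsilon,\epsilon'}$ repetitions of $g$ on the one relevant coordinate err with probability $\leq \epsilon'/6$ by the choice of $C_{\epsilon,\epsilon'}$. For the cost, summing the contributions: $50\ln(12/\epsilon')\cdot R^\xor_\epsilon(f)$ from Step 1; $C_{\epsilon,\epsilon'}\cdot R^\xor_\epsilon(g)$ from amplifying the single decisive bit; $O(\log(1/\epsilon'))$ Equality-on-$k$-bit-rows instances costing $O(k \cdot \log(1/\epsilon'))$ — wait, that reintroduces $k$, so here I instead note that $\EQ_k$ costs only $O(\log(1/\text{err}))$ via hashing, so the clustering step costs $O(\log^2(1/\epsilon'))$, absorbed into $O(C_{\epsilon,\epsilon'})$; and $O(\log k + \log(1/\epsilon'))$ from the $\FtFD$ subroutine. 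Collecting, $R^\xor_{\epsilon'}(f) \leq 50\ln(12/\epsilon')\cdot R^\xor_\epsilon(f) + C_{\epsilon,\epsilon'}\cdot R^\xor_\epsilon(g) + O(C_{\epsilon,\epsilon'} + \log k)$.

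The main obstacle I anticipate is making sure the reconciliation step genuinely avoids a hidden $\Theta(k)$: the Equality-on-pairs instances compare $k$-bit strings, so one must use the hashing-based $R_\delta(\EQ_k) = O(\log(1/\delta))$ bound (Equality on $k$-bit inputs has complexity independent of $k$) rather than naively sending rows, and one must disambiguate the final two candidates using a ``find first difference'' primitive plus a \emph{single} amplified bit of $g$ rather than exchanging full rows — this is the precise point where the direct-sum structure $f=g^{\otimes k}$ is used and where the $\log k$ (and not $k$) term originates. A secondary subtlety is that $\FtFD_k$ as stated is solved in the open model, but here its $O(\log k + \log(1/\epsilon'))$-bit transcript can simply be part of the communication and both players learn the first-difference coordinate, so it is available in the XOR model at no extra cost.
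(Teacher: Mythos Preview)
Your proposal is essentially the paper's proof: run $\pi$ for $f$ a total of $50\ln(12/\epsilon')$ times, cluster the resulting rows via the Erd\H{o}s--R\'enyi/Equality trick to get at most two candidates $i_1,i_2$, use $\FtFD_k$ on $a_{i_1}\oplus a_{i_2}$ versus $b_{i_1}\oplus b_{i_2}$ to locate a distinguishing coordinate $\ell$, then run the protocol for $g$ on coordinate $\ell$ a total of $C_{\epsilon,\epsilon'}$ times and use the majority bit to pick the winner. The one step you leave implicit is that ``compute the amplified bit $b_\ell$'' is itself a Gap Hamming Distance instance of size $C_{\epsilon,\epsilon'}$ (since the runs of $g$ only give Alice and Bob bits whose XOR is $b_\ell$), which the paper solves explicitly at cost $O(C_{\epsilon,\epsilon'})$; your final bound already absorbs this, so there is no gap.
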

Notice that {the $C_{\epsilon, \epsilon'}$ factor -- which scales with $\parens*{\frac{1}{2} -\epsilon}^{-1}$ --} applies to the complexity of $g$, not of $f$. 

\idea
Instead of iterating the basic protocol {$C_{\epsilon,\epsilon'}$} times, we will start by iterating it a smaller number of times
which does not depend on $\epsilon$, but only on $\log(\frac {1}{\epsilon'})$.
{This number of iterations suffices to guarantee that the most frequent outcome represents more than a $1/3$ fraction of the rows. If no other outcome represents a large fraction of the rows, we output according to a row from this large fraction. Otherwise, still, at most two outcomes can represent more than a $1/3$ fraction of the rows.} We identify a ``critical index'' of the output function, one that will help 
us {identify the majority result among the two candidate outcomes.}
We do so by solving a Gap Hamming Distance instance on the critical index.
In these remaining {$C_{\epsilon,\epsilon'}$} runs, we only need one of the $k$ bits of the output.

Details of the proof are given  in 
\cref{app:direct-sum}.

\section{Deterministic versus randomized complexity}
\label{sec:derandomization}

We now turn to removing randomness from private coin protocols.

{The standard scheme to derive a deterministic protocol from a private coin protocol\footnote{For public coins, the exponential upper bounds do not hold, for example in the
case of the Equality function, which has an $O(1)$ public coin randomized protocol,
but requires $n$ bits of communication to solve deterministically.} proceeds as follows~\cite[Lemma 3.8, page~31]{KushilevitzN1997}.  
The players  exchange messages to estimate the probability of each transcript.
They use the fact that the probability of a transcript  can be factored into two parts,  each of which  can be computed by one of the two players. One of the players sends all of its factors to the other, up to some precision, and the second player can then estimate the probability of each transcript. Each transcript determines an output,  therefore from the estimate for the transcripts' probabilities, this player can derive an estimate for the probability of each output, and output the majority answer.
}

\begin{theorem}[Lemma 3.8 in~\cite{KushilevitzN1997}, page~31]
For any function $f:\cX\times\cY \to\cZ$ and $0<\epsilon < \frac 1 2$, let $R =
R^{\priv}_\epsilon(f)$. Then 
  \label{thm:derand-kn97}
$D(f) \leq 2^{R}
	\parens*{ 
	R+
	\log \parens*{\tfrac {1}{\frac 1 2 - \epsilon} } + 1
	}.$
\end{theorem}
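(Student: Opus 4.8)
The plan is to follow the classical derandomization strategy. Let $\Pi$ be a private-coin protocol for $f$ with error at most $\epsilon$ and communication cost $R = R^{\priv}_\epsilon(f)$; by fixing the worst case we may assume every branch of the protocol tree has depth exactly $R$, so there are at most $2^R$ possible transcripts. For a transcript $t \in \zo^R$ and inputs $x,y$, let $p_x(t)$ (resp. $q_y(t)$) be the probability, over Alice's (resp. Bob's) private coins, that Alice's messages (resp. Bob's messages) along $t$ are consistent with $t$ given that the partial transcript so far followed $t$. The key factorization is that $\Pr[\Pi(x,y)=t] = p_x(t)\cdot q_y(t)$, since the two players' private coins are independent. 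For each transcript $t$, the protocol prescribes an output $\out(t) \in \cZ$. The deterministic protocol will have Alice send, for every one of the $\le 2^R$ transcripts $t$, an approximation $\tilde p_x(t)$ of $p_x(t)$ to additive precision roughly $\delta = (\tfrac12-\epsilon)/2^{R+1}$ (so each value costs $O(R + \log\frac{1}{1/2-\epsilon})$ bits, for a total of $2^R\cdot O(R + \log\frac1{1/2-\epsilon})$ bits). Bob, who knows $q_y(t)$ for all $t$ exactly, then computes $\tilde p_x(t)\cdot q_y(t)$ for every $t$, sums these over all $t$ with $\out(t)=z$ to get an estimate of $\Pr[\Pi(x,y)=z]$ for each $z \in \cZ$, and outputs the $z$ maximizing this estimate.

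The main steps, in order, are: (i) normalize the protocol tree to fixed depth $R$ and record the factorization $\Pr[\Pi(x,y)=t]=p_x(t)q_y(t)$; (ii) fix the precision $\delta$ and bound the total bit cost of Alice transmitting all $2^R$ approximations; (iii) show the accumulated error in Bob's estimates is small — since $\sum_t q_y(t)\le 1$ and each approximation is off by at most $\delta$, the estimate of $\Pr[\Pi(x,y)=z]$ for any $z$ is within $\sum_t \delta\, q_y(t)\le \delta\cdot 2^R$... more carefully, within $\delta$ summed appropriately, giving total slack $<\tfrac12-\epsilon$; (iv) conclude correctness: the true correct output $z^\ast=f(x,y)$ has $\Pr[\Pi(x,y)=z^\ast]\ge 1-\epsilon$, every other output has true probability $\le\epsilon$, and $1-\epsilon-\epsilon = 1-2\epsilon > 0$; with each estimate perturbed by less than $(\tfrac12-\epsilon)$ — tuning $\delta$ so the two-sided perturbation on the gap is less than $1-2\epsilon$ — the argmax is still $z^\ast$. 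Counting bits: $2^R$ transcripts, each approximation needs $\lceil\log_2(1/\delta)\rceil = R + \lceil\log_2\frac{2}{1/2-\epsilon}\rceil + O(1)$ bits, yielding $D(f)\le 2^R(R + \log\frac{1}{1/2-\epsilon} + 1)$ after absorbing constants.

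The step I expect to be the main (though still routine) obstacle is calibrating the precision $\delta$ and the error accounting in step (iii) so that the final bit count comes out exactly as $2^R(R + \log\frac{1}{1/2-\epsilon}+1)$ rather than with extra additive constants or an extra $\log 2^R = R$ term. The subtlety is that the perturbation in Bob's estimate for a fixed output $z$ is $\big|\sum_{t:\out(t)=z}(\tilde p_x(t)-p_x(t))q_y(t)\big| \le \delta\sum_{t:\out(t)=z} q_y(t)$, and one must compare the estimate for $z^\ast$ against that for a competing $z$: the two-sided gap shrinks by at most $\delta\sum_t q_y(t) \le \delta$, so in fact $\delta < (1-2\epsilon) = 2(\tfrac12-\epsilon)$ suffices, and one checks $\lceil\log_2(1/\delta)\rceil \le R+1$ when, say, we use $R$-bit fixed-point approximations scaled appropriately. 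This careful bookkeeping is what delivers the clean $+1$ in the statement; everything else is a direct unrolling of the independence-based factorization.
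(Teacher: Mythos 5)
Your plan follows the same route as the paper (and KN97 Lemma~3.8): factor $\Pr[\Pi(x,y)=t]=p_x(t)\,q_y(t)$, have Alice send fixed-point approximations of $p_x(t)$ for each of the $\le 2^R$ leaves, and have Bob reconstruct an estimate of the leaf distribution and take the majority. The paper packages Alice's precision-exchange step as the $\TDE$ lemma (\cref{lem:tde-kn97}), but the protocol is the same. Your step~(iii) \emph{as first written} — per-leaf error at most $\delta$ because $q_y(t)\le 1$, hence total error at most $\delta\cdot 2^R$ summed over leaves — is the correct accounting, and with $\delta\approx(\tfrac12-\epsilon)/2^{R}$ it delivers $\log(1/\delta)\approx R+\log\frac{1}{1/2-\epsilon}$ bits per leaf and the advertised $2^R\parens[\big]{R+\log\frac{1}{1/2-\epsilon}+1}$ total.

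Your ``main obstacle'' paragraph, however, rests on a false claim: $\sum_t q_y(t)\le 1$ does \emph{not} hold. The numbers $q_y(t)=\beta(t\mid y)$ are Bob's \emph{marginal} path products, and these do not form a distribution over leaves — it is $\sum_t p_x(t)\,q_y(t)$ that equals~$1$. In fact $\sum_t q_y(t)=2^{a}$ where $a$ is the number of Alice-owned bits along a root-to-leaf path: e.g.\ in a depth-$2$ tree where Alice speaks first and Bob second, $q_y(00)+q_y(01)=1$ and $q_y(10)+q_y(11)=1$, so the sum is $2$, not $1$; in general it can be as large as $2^{R-1}$. The ``tighter'' requirement you derive from it, $\delta<2(\tfrac12-\epsilon)$, would make $\lceil\log(1/\delta)\rceil$ independent of $R$ and yield a total cost of $O\parens[\big]{2^R\log\frac{1}{1/2-\epsilon}}$ — strictly stronger than the theorem, which should be a red flag. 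Stick with the crude but correct bound $\sum_{t:\out(t)=z}q_y(t)\le|\cT_\pi|\le 2^R$; that is what forces $\delta\lesssim(\tfrac12-\epsilon)/2^R$ and produces the $R$ term inside the parenthesis. A minor separate point: in a private-coin protocol the output need not be a function of the transcript alone (it can depend on the outputting player's $r_\bob$ and~$y$), but this costs nothing — Bob knows his own leaf-conditional output distribution exactly, so he convolves your estimated leaf distribution with it before taking the argmax.
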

Using this well-known result for our output models (first adding $k$ bits of
communication to the original protocol of cost $R$ to obtain a
protocol that works in the unilateral model) would add $2^{R}R\cdot
2^k$ bits to the complexity.  For the XOR model, we reduce the
dependency to a $O(2^Rk)$ term.  In 
\cref{app:derandomization},
we show some lower
dependencies on~$k$ in our other models.

We formalize the problem which we call Transcript Distribution
Estimation. Let $\Delta(\mu,\nu)=\frac 1 2 \sum_{u \in \cU} \abs{\mu(u) - \nu(u)}$ be the
total variation distance between two probability distributions $\mu$
and $\nu$ over a universe $\cU$. For a protocol $\Pi$, let $\cT_\pi$ be the set of transcripts of $\Pi$, and for $(x,y) \in \cX \times \cY$, let us denote by
$T^{x,y}_\pi$ the distribution over $\cT_\pi$ witnessed when running $\Pi$ on $(x,y)$.

The key step of the proof of 
\cref{thm:derand-kn97} is a protocol (in the standard model)
for the following problem.
 
  \begin{definition}[Transcript Distribution Estimation problem]
For any protocol $\Pi$ and $\delta<\frac 1 2$, we say that a protocol $\widetilde \Pi$ solves $\TDE_{\Pi,\delta}$ in model $\cM$ if, for each input $(x,y)$, $\widetilde \Pi$ computes in the sense of model $\cM$ 
a distribution $\widetilde T^{x,y}_\pi$ such that $\Delta(\widetilde T^{x,y}_\pi,T^{x,y}_\pi) \leq \delta$.
\end{definition}

\begin{lemma}[Implicit in~\cite{KushilevitzN1997}, page~31]
    \label{lem:tde-kn97}
  Let $\Pi$ be a private coin communication protocol and $\cT_\pi$ its set of possible transcripts. 
For any $0<\delta< \frac 1 2$, 
  $ D(\TDE_{\Pi,\delta}) \leq \abs{\cT_\pi} \cdot \ceil*{
  \log \parens*{\frac {\card{\cT_\pi}}{\delta} }
  }.$
\end{lemma}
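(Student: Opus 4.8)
The plan is to reconstruct the classical derandomization argument of Kushilevitz--Nisan and extract from it the precise communication bound claimed. First I would recall that since $\Pi$ is a \emph{private-coin} protocol, for any fixed input $(x,y)$ and any transcript $t_\pi \in \cT_\pi$, the probability $\Pr[T^{x,y}_\pi = t_\pi]$ factors as a product $p^\ali_{x}(t_\pi) \cdot p^\bob_{y}(t_\pi)$, where $p^\ali_x(t_\pi)$ depends only on Alice's input $x$ and her private coins, and $p^\bob_y(t_\pi)$ only on Bob's input $y$ and his private coins. This is the standard "rectangle of distributions" property: walking down the protocol tree, at each Alice-node the bit sent is a function of $x$ and her coins, contributing a factor to $p^\ali_x$, and symmetrically for Bob. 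So each player can compute their own factor for every $t_\pi \in \cT_\pi$ with no communication.

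The protocol $\widetilde\Pi$ then works as follows: Alice rounds each of her $\abs{\cT_\pi}$ factors $p^\ali_x(t_\pi)$ to the nearest multiple of some precision parameter $\eta$, obtaining values $\tilde p^\ali_x(t_\pi) \in [0,1]$, and sends all $\abs{\cT_\pi}$ of these rounded values to Bob. Encoding each rounded value to additive precision $\eta = \delta / \card{\cT_\pi}$ costs $\ceil{\log(\card{\cT_\pi}/\delta)}$ bits, so the total communication is $\abs{\cT_\pi} \cdot \ceil{\log(\card{\cT_\pi}/\delta)}$ bits, matching the claimed bound (this is a deterministic protocol, since the only randomness internal to $\widetilde\Pi$ could be the rounding, which we fix deterministically). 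Bob then forms $\widetilde T^{x,y}_\pi(t_\pi) = \tilde p^\ali_x(t_\pi) \cdot p^\bob_y(t_\pi)$ (optionally renormalized), which he holds locally — so this works in the \emph{unilateral} model with Bob outputting, hence a fortiori in the open/standard model after one extra bit, though stated as $D(\cdot)$ it suffices that Bob knows the estimate.

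The one genuine calculation is the error analysis: I would bound $\Delta(\widetilde T^{x,y}_\pi, T^{x,y}_\pi) = \tfrac12\sum_{t_\pi}\abs{\tilde p^\ali_x(t_\pi) p^\bob_y(t_\pi) - p^\ali_x(t_\pi) p^\bob_y(t_\pi)} = \tfrac12\sum_{t_\pi}\abs{\tilde p^\ali_x(t_\pi) - p^\ali_x(t_\pi)}\, p^\bob_y(t_\pi)$. Since each $\abs{\tilde p^\ali_x(t_\pi) - p^\ali_x(t_\pi)} \le \eta$ and $\sum_{t_\pi} p^\bob_y(t_\pi) \le 1$ (it is a sub-distribution over the transcript tree — actually it sums to exactly the probability mass, at most $1$), this is at most $\tfrac12 \eta \cdot \card{\cT_\pi}$... which gives $\delta/2 \le \delta$; one should be slightly careful about whether renormalization is needed and about the factor of $2$, but the crude bound $\eta = \delta/\card{\cT_\pi}$ comfortably suffices. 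The main obstacle — really the only subtlety — is making the factorization $\Pr[T^{x,y}_\pi = t_\pi] = p^\ali_x(t_\pi)\,p^\bob_y(t_\pi)$ precise, i.e. verifying that in a private-coin protocol the per-edge transition probabilities multiply cleanly and that the two players' contributions are independent conditioned on the input; once that is in hand, the rounding-and-send scheme and its cost accounting are routine.
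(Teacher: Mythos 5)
Your proposal follows the same route as the paper: factor the leaf probability into an Alice part and a Bob part using the rectangle property of private-coin protocols, have Alice send her factors rounded to precision $\delta/\card{\cT_\pi}$, and let Bob multiply by his exact factors, for a total cost of $\card{\cT_\pi}\cdot\ceil{\log(\card{\cT_\pi}/\delta)}$ bits. That is exactly the structure of the paper's argument.

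The one place where you wave a hand and the paper does real work is the normalization step, and your ``optionally renormalized'' is not actually optional. By the definition of $\TDE_{\Pi,\delta}$ the output $\widetilde T^{x,y}_\pi$ must be a distribution (otherwise its total-variation distance to $T^{x,y}_\pi$ is not well-defined), and the raw product $\tilde p^\ali_x(t_\pi)\,p^\bob_y(t_\pi)$ sums to something strictly between $1-\delta$ and $1$ when Alice rounds down. The paper fixes this by computing the mass deficit $C = 1 - \sum_w p'(w\mid x,y)$ and adding $C/\card{\cT_\pi}$ uniformly to every leaf; since $0\le C\le \gamma\card{\cT_\pi}$, this additive correction keeps the estimate pointwise $\gamma$-close to the truth, hence $\delta$-close in total variation. (A multiplicative renormalization, which is what ``renormalized'' usually suggests, also works but requires a slightly different calculation to bound the extra distortion.) Once you commit to one of these repairs and check it preserves the pointwise bound, your error calculation goes through; as you noted, there is even a spare factor of $2$.

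Two smaller points. First, your bound $\sum_{t_\pi} p^\bob_y(t_\pi)\le 1$ is not quite right as stated: Bob's factor $\beta(w\mid y)$ is a product of conditional probabilities along $w$ of Bob's messages only, and these factors do not in general sum to at most $1$ over all leaves $w$ (it is the full products $\alpha\cdot\beta$ that sum to $1$). The correct observation, which the paper uses, is simply that $\beta(w\mid y)\in[0,1]$ for each $w$, so the per-leaf error is at most the rounding error $\gamma$, and then one sums over $\card{\cT_\pi}$ leaves. Second, the paper rounds \emph{down} rather than to the nearest multiple, so that $p'$ is an underestimate and the deficit $C$ is nonnegative; rounding to the nearest multiple would require a slightly different correction.
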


In their proof, Kushilevitz and Nisan~\cite{KushilevitzN1997} require only one of the players to learn an estimate of the 
probability of each leaf. Here we require both players 
to learn
the same estimate, which can be achieved with a factor of two in the communication.
Details are given in 
\cref{lem:tde-open} in \cref{app:tde}. 

In the XOR model, however, sharing such an estimate is not sufficient to  remove randomness.
At each leaf, each player outputs values with some probability (depending on their
private randomness), so there can be as many as $\card{\cZ}$  outputs per leaf by each player, 
making identifying the majority outcome impossible.
We prove the following bound on deterministic communication in the XOR model.

\begin{theorem}
        \label{thm:derand-xor}
   Let $0<\epsilon <1/2$ and $f : \cX \times \cY \rightarrow \cZ = \zo^k$. Let $R=R^{\xor,\priv}_\epsilon(f) $, $M=16 \cdot 
   \parens*{\frac 1 2 - \epsilon }^{-2} 
   \cdot 2^{R}$, 
   and $\epsilon' = \frac{5}{8} - \frac{\epsilon}{4}$.
   Then 
\begin{align*}
	D^\xor(f) &\leq   D(\TDE_{\Pi_f,\epsilon'-\frac 1 2}) 
 + D^\xor(\GapMajX[M,k,\epsilon',\mu])
\\ & \leq
\parens*{ 2^{R+1} } 
 \cdot \parens*{R + \log \parens*{\tfrac 8 {\frac 1 2 - \epsilon }} + 1}
 + k \cdot \parens*{\frac {5-2\epsilon} 4 M+1} .
\end{align*}
Where $\mu$ is an unspecified distribution over $[M]$.
\end{theorem}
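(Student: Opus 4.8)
The plan is to follow the two-phase structure already suggested by the statement: first, use a deterministic protocol for transcript-distribution estimation to put both players in possession of a common estimate of the leaf distribution, and second, extract the correct output from that estimate by reducing to a deterministic $\GapMajX$ instance. Fix a private-coin XOR protocol $\Pi$ for $f$ of cost $R = R^{\xor,\priv}_\epsilon(f)$, together with its output mappings $\cO_\ali,\cO_\bob$. First I would run the deterministic $\TDE$ protocol of \cref{lem:tde-kn97} (in its symmetric form, where \emph{both} players learn the estimate) with accuracy parameter $\delta = \epsilon' - \tfrac12$; this costs $D(\TDE_{\Pi_f,\epsilon'-1/2}) \le |\cT_\pi|\cdot\lceil\log(|\cT_\pi|/\delta)\rceil$ bits, and since $\Pi$ has cost $R$ we have $|\cT_\pi| \le 2^R$, which after bounding $\delta$ from below by $\tfrac18(\tfrac12-\epsilon)$ gives the first summand $2^{R+1}(R + \log\tfrac{8}{1/2-\epsilon} + 1)$. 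After this phase both players hold the same approximate transcript distribution $\widetilde T^{x,y}_\pi$.

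The heart of the argument is the reduction to $\GapMajX$. Conditioned on a transcript $t$, Alice's output is $\cO_\ali(t,\cdot,\cdot,x)$, a random variable over $\zo^k$ depending on her private coins, and similarly for Bob; the XOR of these equals $f(x,y)$ with probability $\ge 1-\epsilon$ over the joint private coins, averaged over $t \sim T^{x,y}_\pi$. The idea is to discretize: enumerate, for each transcript $t$ and each value $a \in \zo^k$, the probability $p^{x}_{t}(a) := \Pr_{r_\ali}[\cO_\ali(t,r_\ali,x)=a]$ that Alice outputs $a$; do likewise for Bob with $q^y_t(b)$. Then the probability that the protocol outputs a given string $z$ is $\sum_t \widetilde T^{x,y}_\pi(t)\sum_{a} p^x_t(a)\,q^y_t(a\oplus z)$. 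I would build an $M$-element weighted collection of "rows" — indexed by a triple $(t, j, \ell)$ where $j,\ell$ range over a sufficiently fine sampling/enumeration of Alice's and Bob's private-coin outcomes — with $M = 16(\tfrac12-\epsilon)^{-2}2^R$ and $\mu$ the distribution induced on these triples by $\widetilde T^{x,y}_\pi$ and by the private-coin distributions. Alice's content on row $(t,j,\ell)$ is her corresponding output string and Bob's is his; their XOR is $f(x,y)$ for a $\mu$-fraction at least $1-\epsilon'$ of the rows, where the slack from $\epsilon$ to $\epsilon' = \tfrac58 - \tfrac\epsilon4$ absorbs both the $\delta = \epsilon'-\tfrac12$ total-variation error of the $\TDE$ phase and the discretization error of sampling the private coins; the factor $(\tfrac12-\epsilon)^{-2}$ in $M$ is exactly what Hoeffding's inequality (\cref{lem:hoeffding}) requires for the sampling error to be controlled with the needed confidence. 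Both players know $\mu$ (it is computed from the common estimate and the fixed protocol description, so no communication is needed to agree on it), and each player knows their own row contents, so this is a genuine instance of $\GapMajX[M,k,\epsilon',\mu]$; running the deterministic $\GapMajX$ protocol on it and outputting its result yields $f(x,y)$ with zero error.

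The second summand then comes from bounding $D^\xor(\GapMajX[M,k,\epsilon',\mu])$: a trivial deterministic XOR protocol for $\GapMajX$ has one player send, for each of the $M$ rows, the $k$-bit content — or, more carefully, since $\epsilon' > \tfrac12$ one cannot simply pick a single row, but the naive deterministic bound is $O(kM)$, which the statement records as $k\cdot(\tfrac{5-2\epsilon}{4}M + 1)$ (the constant $\tfrac{5-2\epsilon}{4}$ presumably coming from a slightly sharpened count of how many rows' contents must actually be exchanged given the $(1-\epsilon')$-heaviness of the majority block under $\mu$). Adding the two phases gives the claimed bound. The main obstacle I anticipate is the bookkeeping in the reduction: making the sampling of private coins precise enough that the $\mu$-weight of correct rows provably exceeds $1-\epsilon'$ while keeping $M$ at $16(\tfrac12-\epsilon)^{-2}2^R$, and verifying that the total-variation error from $\TDE$ combines additively (not multiplicatively) with the sampling error — this is where one must be careful that $\epsilon' - \tfrac12$ was chosen so that $\epsilon + (\epsilon'-\tfrac12) + (\text{Hoeffding slack}) \le \epsilon'$, i.e. that the Hoeffding slack is at most $\tfrac12(\tfrac12-\epsilon)$, matching the $\delta$ used in \cref{lem:hoeffding} and hence the exponent that forces $M \ge 16(\tfrac12-\epsilon)^{-2}2^R$.
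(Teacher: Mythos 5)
Your high-level skeleton matches the paper's (run a symmetric deterministic $\TDE$ protocol, then reduce to a deterministic $\GapMajX[M,k,\epsilon',\mu]$ instance, then apply the trivial deterministic $\GapMajX$ bound from \cref{cor:simple-hxor-bounds}), and you even recover $\delta = \epsilon' - \tfrac12 = \tfrac14(\tfrac12-\epsilon)$ correctly. But the core mechanism you propose for constructing the $\GapMajX$ instance does not work as stated. You build rows by ``a sufficiently fine sampling/enumeration of Alice's and Bob's private-coin outcomes'' and then invoke Hoeffding's inequality (\cref{lem:hoeffding}) to argue that the $\mu$-weight of correct rows is high, claiming the factor $(\tfrac12-\epsilon)^{-2}$ in $M$ ``is exactly what Hoeffding's inequality requires.'' This is a genuine gap: Hoeffding controls the deviation of a \emph{random} sample, so this reasoning implicitly makes the row construction randomized -- which defeats the entire purpose of derandomization, since the resulting full protocol would not be deterministic. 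There is no way to salvage the Hoeffding argument without introducing randomness back into the protocol.

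What the paper does instead is a purely deterministic discretization of the conditional output distributions, not a sampling of private coins. For each leaf $w$, each player rounds her conditional output distribution $o_\ali(\cdot \mid w,x)$ (resp.\ $o_\bob(\cdot \mid w,y)$) to a distribution $\dot o_\ali$ (resp.\ $\dot o_\bob$) whose probabilities are multiples of $1/\ceil{\delta^{-1}}$, then deterministically lays out $\ceil{\delta^{-1}}^2$ rows per leaf, indexed by $(i,j) \in [\ceil{\delta^{-1}}]^2$, with Alice's (resp.\ Bob's) output on row $(i,j)$ depending only on $i$ (resp.\ only on $j$), in proportion to $\dot o_\ali$ (resp.\ $\dot o_\bob$); the row weight under $\mu$ is $p'^\lf(w) \cdot \ceil{\delta^{-1}}^{-2}$. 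So the factor $(\tfrac12-\epsilon)^{-2}$ in $M$ comes from $\ceil{\delta^{-1}}^2 \approx 16(\tfrac12-\epsilon)^{-2}$ discretization cells per leaf times $\card{\cT_\pi} \leq 2^R$ leaves, not from any concentration bound. The error analysis is likewise deterministic: one shows the discretized product distribution $\dot o_\ali \cdot \dot o_\bob$ is pointwise $2\delta$-close to $o_\ali \cdot o_\bob$, combines this with the $\delta$ statistical-distance guarantee from $\TDE$ via \cref{lem:leafs-and-outputs} to get that $p'^\out$ is pointwise $3\delta$-close to $p^\out$, and concludes $p'^\out(f(x,y)) \geq (1-\epsilon) - \tfrac34(\tfrac12-\epsilon) > \tfrac12$. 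You should replace your Hoeffding-based sampling step with this deterministic rounding argument to close the gap.
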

\idea
We reduce the problem of finding the majority 
outcome to a much smaller instance of $\GapMajX$ by discretizing the probabilities of the outputs. This 
 lets us  reduce the dependence on the size of the output
to just a factor of $k = \log(\card{\cZ})$ (instead of a factor of $2^{2k} = \card{\cZ}^2$).

\begin{proof}[Proof of \cref{thm:derand-xor}]
 Let $\Pi$ be an optimal private coin XOR protocol for $f$. The players start
 running the $\TDE_{\Pi,\delta}$ protocol of
 \cref{lem:tde-open} 
 (\cref{lem:tde-kn97} adapted to the local model, see \cref{app:tde} for details) 
with $\delta = \frac 1 4 \parens*{ \frac 1 2
 - \epsilon }$, thus learning within statistical distance
 $\delta$ the probability distribution over
 leaves that results from the protocol.

 Let $o_\ali(. \mid w,x)$ and $o_\bob(. \mid w,y)$ be the two independent probability
 distributions over $\zo^k$ according to which Alice and Bob output,
 conditioned on reaching leaf $w$, having received inputs $x$ and $y$.
To reduce the problem to  $\GapMajX$,
 they discretize $o_\ali$ and $o_\bob$ into $\ceil{\delta^{-1}}$
 events. Let $\dot o_\ali$ denote the discretization of $o_\ali$ with
 following properties for Alice (Similarly for $\dot o_\bob$ for  Bob):
 \[
 \forall z,w:\qquad \dot o_\ali(z \mid w,x) \cdot \ceil{\delta^{-1}} \in \bbN
\quad\text{and}\quad
\abs{o_\ali(z \mid w,x) - \dot o_\ali(z \mid w,x)}
               \leq \frac 1 {\ceil{\delta^{-1}}}.
\]
A simple greedy approach to discretization goes like this:
\begin{enumerate}
\item Replace all $o_\ali(z \mid w,x)$ by $\dot o_\ali(z \mid w,x) =
  \frac 1 {\ceil{\delta^{-1}}} \floor*{ {\ceil{
    \delta^{-1}}} o_\ali(z \mid w,x) }$.
\item While the probabilities of $\dot o_\ali$ sum to less than $1$,
  pick a $z$ s.t.\ $o_\ali(z \mid w,x) - \dot o_\ali(z \mid w,x)$ is maximal. For
  that $z$, set $\dot o_\ali(z \mid w,x) = \frac 1 {\ceil{\delta^{-1}}}
  \ceil*{ {\ceil{\delta^{-1}}} o_\ali(z \mid w,x) }$.
\end{enumerate}

The players then construct a distributional $\GapMajX$ instance with $M$ rows
where $M = \ceil{\delta^{-1}}^2 \card{\cT_\pi}$ in the following way:

\begin{itemize}
\item For each leaf $w$ the players define $\ceil{\delta^{-1}}^2$ rows.
      Rows are indexed by $(i,j) \in \range*{ \ceil{\delta^{-1}} }
      \times \range*{ \ceil{\delta^{-1}} }$ and are such that:
      \begin{itemize}
      \item For each $z$, there are exactly $\ceil*{ \delta^{-1}
            } \dot o_\ali(z \mid w,x)$ indices
            $i_z \in \range*{ \ceil{\delta^{-1}} }$ such that
            Alice outputs $z$ on all rows of the form $(i_z,j), \forall j$.
      \item For each $z$, there are exactly $\ceil{ \delta^{-1}
            } \dot o_\bob(z \mid w,y)$ indices
            $j_z \in \range*{ \ceil{\delta^{-1}} }$ such that
            Bob outputs $z$ on all rows of the form $(i,j_z), \forall i$.
      \end{itemize}
\item The probability of the row $(i,j)$ associated to the leaf $w$
  under the distribution $\mu$ is taken to be $p^\lf(w \mid x,y) \cdot
  \ceil{\delta^{-1}}^{-2}$,  where $p^\lf(w \mid x,y)$
    is the probability of ending in a leaf $w$ in the original
    protocol $\Pi$. ($\mu$ is the unspecified distribution
    over $[M]$ in the statement of \cref{thm:derand-xor}.)
\end{itemize}

The players then solve the $\GapMajX$ instance and output the result. Clearly, the above procedure  has the
previously claimed communication complexity. It remains to show that
the players  built a valid $\GapMajX$ instance whose result is
$f(x,y)$, that is,  picking a random row according
to $\mu$ from this $\GapMajX$ instance gives outputs $z_\ali$ and $z_\bob$ on
Alice and Bob's sides such that $z_\ali \oplus z_\bob = f(x,y)$ with
probability $> \frac 1 2$.

\begin{enumerate}
        \item  In the original protocol $\Pi$, 
          let $p^\out(z \mid x,y)$ be the probability of computing
          $z$ (after the XOR), $p^\out(z \mid w,x,y)$
          that same probability conditioned on the protocol ending in
          leaf $w$, and for all $w$ let $o_\ali(. \mid w,x)$ (resp.\ $o_\bob(. \mid w,y)$)
          be the  distribution according to which Alice (resp.\ Bob) outputs
          once in leaf $w$.  Then
            $p^\out(z \mid x,y)$ can be expressed as:
\begin{eqnarray*}
            p^\out(z \mid x,y) &=& \sum_w p^\lf(w \mid x,y) \cdot p^\out(z \mid w,x,y)\\
            &=& \sum_w p^\lf(w \mid x,y) \cdot \sum_{\mathclap{\substack{z_\ali,
              z_\bob\\z_\ali \oplus z_\bob = z}}} o_\ali(z_\ali \mid w,x)\cdot
            o_\bob(z_\bob \mid w,x).
\end{eqnarray*}

            By correctness of the protocol, $p^\out(f(x,y) \mid x,y) \geq 1 -
            \epsilon$. 
        \item Consider $p'^\lf(. \mid x,y)$, $p'^\out(. \mid x,y)$,
          $p'^\out(. \mid w,x,y)$, $\dot o_\ali(. \mid w,x)$ and $\dot o_\bob(. \mid w,y)$
          the approximations of the above quantities 
          encountered when building our instance of $\GapMajX$. The
          probability $p'^\out(z \mid x,y)$ that a random row of our weighted
          $\GapMajX$ instance corresponds to a given $z$ is:
          \[
          p'^\out(z\mid x,y) = \sum_w p'^\lf(w \mid x,y) \cdot
          \sum_{\mathclap{\substack{z_\ali, z_\bob\\z_\ali \oplus z_\bob = z}}} \dot
          o_\ali(z_\ali \mid w,x)\cdot \dot o_\bob(z_\bob \mid w,x) .
          \]
        \item $p'^\lf(. \mid x,y)$ is $\delta$-close to $p^\lf(. \mid x,y)$ in
          statistical distance. $\dot o_\ali(. \mid w,x)$ is point-wise
          $\delta$-close to $o_\ali(. \mid w,x)$ (and similarly for $\dot o_\bob$
          and $o_\bob$).  \label[point]{enum:delta-close}
\end{enumerate}

Consider $o_\ali\cdot o_\bob$ the distribution over $z\in\zo^k$
 defined by $o_\ali\cdot o_\bob(z)=\sum_{z'}o_\ali(z' \mid w,x)\cdot o_\bob(z\oplus
 z' \mid w,y) $. Similarly define $ o_\ali \cdot \dot o_\bob$ and $\dot
 o_\ali \cdot \dot o_\bob$. \Cref{enum:delta-close} above implies that
 $\dot o_\ali \cdot \dot o_\bob$ is point-wise $\delta$-close to
 $o_\ali \cdot \dot o_\bob$, which is itself point-wise $\delta$-close to
 $o_\ali \cdot o_\bob$. One can check that  $\dot o_\ali \cdot \dot o_\bob$ is point-wise
 $2\delta$-close to $o_\ali \cdot o_\bob$. 

 Using \cref{lem:leafs-and-outputs} (in the appendix)
 with $V
  \sim p^\out$, $V'\sim p'^\out$, $U \sim p^\lf$, $U'\sim
  p'^\lf$, $V_u \sim o_\ali \cdot o_\bob$ and $V'_{u} \sim \dot o_\ali \cdot
  \dot o_\bob$, we get that $p$ and $p'$ are point-wise $3\delta$-close.
Since $\delta$ was taken to be $\frac 1 4 \parens*{\frac 1 2
- \epsilon }$, the probability that the random row of the $\GapMajX$ instance corresponds to $f(x,y)$ is:
$
p'^\out(f(x,y)) \geq p^\out(f(x,y))-3\delta \geq (1 - \epsilon)
     - \frac 3 4 \parens*{\frac 1 2 - \epsilon } 
     = \frac 1 2 + \frac 1 4 \parens*{\frac 1 2 - \epsilon } > \frac 1 2.
 $
\end{proof}


\section{Rank lower bounds for weak output models }
\label{sec:large-output-rank}

  Since the output requirements are weaker in our new models, standard lower bound techniques may
 no longer apply. 
 We adapt the
  standard rank lower bound to all of
  our output models (\cref{thm:rank-lb-large-output}). While we
  do not prove any new lower bound with this result, the main contribution of this section is to show how to adapt an existing
  lower bound to our new communication complexity models. Our techniques can also be applied to  other lower bound techniques in a similar fashion.

\paragraph{Reconsidering monochromatic rectangles}

Let $f:\cX\times\cY\rightarrow \bbF$ where the value of the function is interpreted as an element of a field $\bbF$. The communication matrix associated with $f$ is the matrix whose rows are indexed by elements of $\cX$ and columns by elements of $\cY$ and is defined as $M_f = (f(x,y))_{x\in\cX, y\in \cY}$.

In the open model, since there is a mapping from leaf nodes to
outputs, a communication protocol partitions the communication matrix into
monochromatic rectangles.
This is not the case with the other models of computation.
In the unilateral and one-out-of-two models, the
rectangles at the leaves are ``striped'' horizontally or vertically (see \cref{fig:stripes} for an illustration), since a player can
change her answer depending on her input. In the unilateral models,
the direction of the stripes is always the same in all rectangles,
while the stripes  can have different directions  in the one-out-of-two model, depending on which player produces the output.
The
local model is more subtle: the two players can decide to output different elements
of $\bbF$ depending of their local information (their input and
randomness). Whenever the two players output something different, the
result is incorrect, which gives their rectangles a look similar to
permutation matrices. 

In the rest of this section we will 
use the term ``leaf rectangle'' to designate rectangles corresponding to leaves  of the protocol tree.

\begin{figure}[ht]
  \center
  \begin{subfigure}[t]{0.3\textwidth}
   \center
   \includegraphics[page=1,scale=0.7]{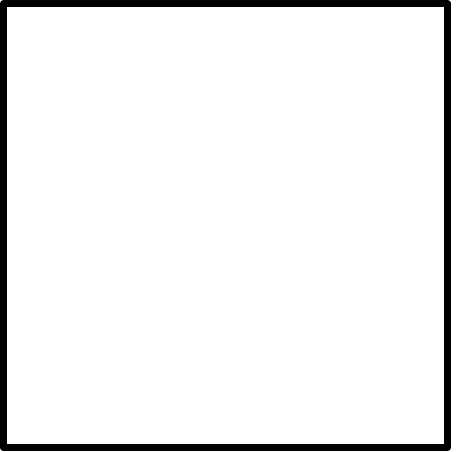}
  \end{subfigure}
  ~
  \begin{subfigure}[t]{0.3\textwidth}
   \center
   \includegraphics[page=2,scale=0.7]{rank_matrices}
  \end{subfigure}
  ~
  \begin{subfigure}[t]{0.3\textwidth}
   \center
   \includegraphics[page=3,scale=0.7]{rank_matrices}
  \end{subfigure}
  \caption{Rectangles corresponding to leaves at the end of a protocol in the open model are monochromatic, while in
    the unilateral and the one-out-of-two models they have
    monochromatic ``stripes'', by which we mean that they are further  partitioned into monochromatic subrectangles by partitioning the rows (for horizontal stripes) or the columns (for vertical stripes) according to what is output by the player responsible for outputting the function value. The direction (horizontal or vertical) depends on which player outputs the value of the function.}
\label{fig:stripes}
\end{figure}

The situation of the split and the XOR models is somewhat different,
as their leaf rectangles have a more complicated structure. In the XOR model,
the leaf rectangles generated by a XOR protocol are similar to the
communication matrix for the XOR function $\XOR_k$.

\paragraph{Rank lower bound}

In order to derive rank lower bounds for our models, we study the ranks of the leaf rectangles. The ranks of the leaf rectangles for the various models imply the following theorem.

  \begin{theorem}
    \label{thm:rank-lb-large-output}
  Let $f$ be a total function. Then
  \begin{align*}
  D^\op(f) &= D^\loc(f) \geq D^\uni(f) \geq D^\oot(f) \geq \log \rank(M_f)\\
 D^\spt(f) &\geq \log \rank(M_f) - 1\\
  D^\xor(f) &\geq \log \rank(M_f) - \log(k + 1)
  \end{align*}
\end{theorem}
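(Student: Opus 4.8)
The plan is to prove each of the three inequalities by analyzing the rank of the leaf rectangles of a protocol in the corresponding model, then summing the ranks over the leaves and using the standard fact that a deterministic protocol of cost $c$ has at most $2^c$ leaves. For the first line, $D^\op(f) = D^\loc(f)$ follows from the observation (already noted in the related-work discussion, citing Bauer et al.) that for total functions the two models coincide, since if both players can compute the value locally, one of them can append it deterministically and make it open at no asymptotic cost; more carefully, in the deterministic total-function setting the leaf rectangle in the local model must be monochromatic anyway (if the two players ever output different values on some $(x,y)$ in the same leaf the protocol is wrong, and totality forces correctness everywhere), so it reduces to the open model. The chain $D^\op(f) \geq D^\uni(f) \geq D^\oot(f)$ is immediate from the hierarchy in \cref{fig:hierarchy} (an open protocol is in particular a one-out-of-two protocol, etc.). So the real content is the rank bounds: $\log\rank(M_f)$ for the one-out-of-two model, $\log\rank(M_f)-1$ for the split model, and $\log\rank(M_f)-\log(k+1)$ for the XOR model.

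First I would handle the one-out-of-two (hence also unilateral and open) bound. A deterministic one-out-of-two protocol with $\ell$ leaves partitions $\cX\times\cY$ into $\ell$ leaf rectangles $R_1,\dots,R_\ell$; on each leaf, one designated player outputs, and that player's output depends only on its own input, so $R_w$ is "striped" — partitioned into monochromatic sub-rectangles along the rows if Alice outputs, along the columns if Bob outputs (as in \cref{fig:stripes}). The key point is that a single striped rectangle, viewed as a submatrix of $M_f$ (with all other entries zeroed out), has rank at most... well, that is actually not bounded by a constant. The correct argument is the classical one: each monochromatic sub-stripe that carries value $v\neq 0$ contributes a rank-one matrix, but there can be many stripes. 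Instead, the standard rank argument proceeds leaf-by-leaf differently: write $M_f = \sum_w M_w$ where $M_w$ is $M_f$ restricted to $R_w$ and zero elsewhere. For the open model $M_w$ is monochromatic hence rank $\leq 1$, giving $\rank(M_f)\leq \ell \leq 2^{D^\op(f)}$. For the one-out-of-two model $M_w$ is a union of horizontal (or vertical) monochromatic stripes; crucially, all stripes in a vertical-striped leaf share the same \emph{row set}, and the row set of a leaf rectangle, being a combinatorial rectangle determined by Alice's transcript-consistent inputs — hmm, this needs care. The cleaner route, which I would take, is to observe that in the one-out-of-two model each leaf rectangle $R_w = A_w\times B_w$ and the output on $R_w$ depends on only one coordinate, so $M_w$ restricted to its support is a matrix that is constant along the "other" direction; such a matrix has rank $\leq 1$ as well (it is an outer product of an indicator-of-rows vector with a values-on-columns vector, or vice versa). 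Hence again $\rank(M_f)\leq \sum_w \rank(M_w) \leq \ell \leq 2^{D^\oot(f)}$, giving the bound.

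For the split model I would use the same leaf-decomposition, but now on a fixed leaf the output is $f(x,y) = a(x)\,\|\,b(y)$ up to a fixed known partition of the $k$ coordinates into Alice's block and Bob's block (the partition may vary by leaf). So $M_w$, restricted to its support, is an outer-product-like structure: its $(x,y)$ entry is the concatenation $a_w(x)\|b_w(y)$. Interpreting outputs as elements of a field $\bbF$ via some encoding is the subtle part — concatenation of bit-strings is not a field operation. The trick the authors must be using (and which I would adopt) is to split the output matrix into its Alice-block and Bob-block \emph{coordinates}: $M_f$ over $\zo^k$ decomposes as two matrices $M_f^{\ali}$ (the Alice-block bits) and $M_f^{\bob}$ (the Bob-block bits) — but the partition varies per leaf, so globally one cannot cleanly split. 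Instead, observe $\rank(M_w)$ over $\bbF$ where we treat the $k$-bit string as, say, an integer or a vector: on a split leaf, $M_w$ has the form (rows indexed by $x$, constant contribution from Alice's block) plus (columns indexed by $y$, constant contribution from Bob's block), i.e. $M_w = u_w \mathbf{1}^{\!\top} + \mathbf{1} v_w^{\!\top}$ restricted to $R_w$, which has rank $\leq 2$. Summing, $\rank(M_f)\leq 2\ell \leq 2^{D^\spt(f)+1}$, i.e. $D^\spt(f)\geq \log\rank(M_f)-1$.

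For the XOR model, on each leaf the output is $a(x)\oplus b(y)\in\zo^k$. Viewing $\zo^k=\bbF_2^k$ and looking at $M_f$ \emph{coordinate by coordinate}: for each of the $k$ output coordinates $t\in[k]$, the $t$-th bit matrix $M_f^{(t)}$ restricted to leaf $R_w$ equals $a_t(x)\oplus b_t(y)$, whose $\pm 1$ (or GF(2)) version is a rank-$\leq 2$ matrix on that leaf (it is $a_t(x) + b_t(y) \bmod 2$, an outer-product-plus-outer-product again). So each $M_f^{(t)}$ has $\bbF$-rank at most $2\ell$. Now I must relate $\rank(M_f)$ — where entries live in $\zo^k$ encoded somehow — to the ranks of the $M_f^{(t)}$; the natural choice is to work over a field large enough to encode $\zo^k$, e.g. think of each $z\in\zo^k$ via its binary expansion $\sum_{t} z_t 2^{t-1}$ so that $M_f = \sum_{t=1}^k 2^{t-1} M_f^{(t)}$ over $\bbR$ (now $M_f^{(t)}$ is the genuine $0/1$ bit matrix, not its $\bbF_2$ reduction), giving $\rank(M_f)\leq \sum_t \rank(M_f^{(t)})$. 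Each $M_f^{(t)}$ restricted to a leaf is the $0/1$ matrix of $a_t(x)\oplus b_t(y)$; over $\bbR$ this equals $a_t(x)+b_t(y)-2a_t(x)b_t(y)$, which is a sum of three rank-$\leq 1$ matrices, so rank $\leq 3$ per leaf — hmm, that gives a slightly worse constant. To match $\log(k+1)$ exactly the authors presumably argue more tightly: $M_f^{(t)}$ over all of $\cX\times\cY$ has rank at most $\ell+1$ (each XOR-leaf $\{a_t\oplus b_t\}$ is affine, and the all-ones correction is shared), so $\rank(M_f)\leq \sum_{t=1}^k(\ell+1) \le k(\ell+1) \le (k+1)\cdot 2^{D^\xor(f)}$ after folding the "$+1$" — actually the cleanest is: $\rank(M_f) \le (k+1)\,2^{D^\xor(f)}$ hence $D^\xor(f)\ge \log\rank(M_f) - \log(k+1)$. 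I expect the main obstacle to be precisely this last bookkeeping: pinning down the exact rank of a single XOR-leaf-rectangle's bit matrix (is it $\le 2$ or $\le 3$?) and whether the "$+1$" per coordinate is absorbed across leaves or across coordinates, so that the constants line up to give exactly $\log(k+1)$ rather than $\log(k)+O(1)$. The right framing, which I would make precise, is: a single $\zo^k$-valued XOR-rectangle is a submatrix of the communication matrix of $\XOR_k$, and $\rank(M_{\XOR_k})$ over $\bbR$ is exactly $k+1$ (classical: $M_{\XOR_k}(x,y) = \sum_t 2^{t-1}(x_t+y_t-2x_ty_t)$, but more simply its rank is $k+1$ by a Hamming-weight / character argument) — so each of the $\le 2^{D^\xor(f)}$ leaf rectangles contributes rank $\le k+1$, and subadditivity of rank over the leaf-decomposition of $M_f$ gives $\rank(M_f) \le (k+1)2^{D^\xor(f)}$, which is the claimed bound.
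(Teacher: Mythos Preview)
Your proposal is correct and, after some exploratory detours, lands on essentially the same approach as the paper: bound the rank of each leaf rectangle (rank $\leq 1$ for one-out-of-two via the striped/constant-along-one-direction structure, rank $\leq 2$ for split via $M_w = u\,\mathbf{1}^\top + \mathbf{1}\,v^\top$, rank $\leq k+1$ for XOR via the observation that an XOR leaf rectangle is a submatrix of $M_{\XOR_k}$ with $\rank(M_{\XOR_k})=k+1$), then use subadditivity of rank over the at most $2^c$ leaves. The paper fills in the one detail you left as ``classical'' by giving an explicit rank-$(k{+}1)$ decomposition of $M_{\XOR_k}$ using the all-ones vector together with $k$ Hadamard-type vectors.
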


\begin{proof}
  Let us call rank of a rectangle of $M_f$ the rank of the submatrix
  of $M_f$ obtained by restricting $M_f$ to the rectangle. If there
  exists a partition of $M_f$ into $C$ rectangles such that the rank
  of each rectangle is bounded by $R$, then $\rank(M_f) \leq C \times
  R$. Since for every model $\cM$, $M_f$ is covered by at most
  $2^{D^\cM(f)}$ rectangles of type $\cM$, we only need to bound
  the rank of rectangles of type $\cM$ for each model $\cM$.
  
  \paragraph*{Open, local, unilateral, and one-out-of-two leaf rectangles.}
    Leaf rectangles of these types are of rank at most $1$, because of
    their striped structure. Also note that open and local leaf rectangles
    are similar for total functions in the deterministic setting.

  \paragraph*{Split leaf rectangles.} Leaf rectangles of this type are of rank at most
    $2$. Intuitively, this is because the leaf 
    rectangles in this model are of the following form: there exists
    numbers $a_1,\ldots, a_s$ and $b_1, \ldots, b_t$ such that the
    value of the cell $(i,j)$ of the rectangle of size $s \times t$,
    is $a_i + b_j$. The rectangle is then the product of the following
    two rank-$2$ matrices: the $s \times 2$ matrix containing the
    values $a_1$ to $a_s$ in the first column and the value $1$ in all
    cells of the second column and the $2 \times t$ matrix containing
    only the value $1$ in its first line and the values $b_1$ to $b_t$
    in the second line, as shown in \cref{fig:split-rectangles}.

    \begin{figure}[ht]
      \center
      \includegraphics[page=1,scale=1]{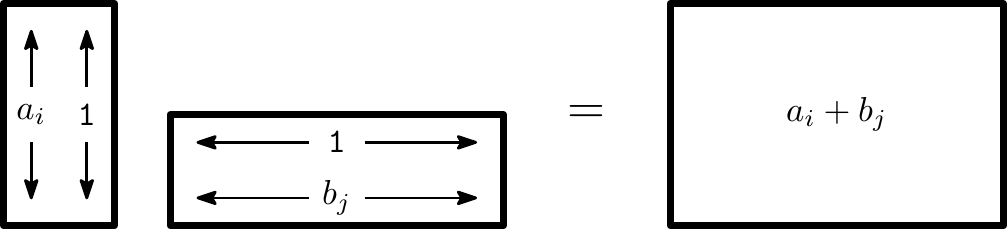}

      \caption{A matrix whose cells $M_{i,j}$ can be expressed as the
        sum of the $i$th entry of a first vector and the $j$th entry
        of another one is of rank at most 2. Split rectangles follow
        this pattern.}
      \label{fig:split-rectangles}
\end{figure}

    More formally: consider how the $k$ bits of the output are
    split between the two players: let us consider the $k$ bit string
    $(s_i)_{1\leq i \leq k}$ such $s_i = 1$ iff Alice outputs the
    $i^{th}$ bit of the output.

    Let us now define the $1 \times 1$ matrix $S_0 = \begin{bmatrix}
      0\end{bmatrix}$, and let $H_c$ and $V_c$ the matrix
      transformations defined by:
      \begin{itemize}
      \item $H_c(A) = \begin{bmatrix} A & A+c\cdot J\end{bmatrix}$
      \item $V_c(A) = \begin{bmatrix} A \\ A+c\cdot J\end{bmatrix}$
      \end{itemize}

      Now we define three series of matrices $S_1\ldots S_k$,
      $U_0\ldots U_k$ and $V_0\ldots V_k$ such that $S_i = U_i \times
      V_i$ for all $i$, which will prove that $S_k$ has rank at most $2$:

      \begin{itemize}
      \item Let $S_{i+1} =
      \begin{cases}H_{2^i} (S_i) & \textrm{if $s_i=0$} \\
        V_{2^i} (S_i) & \textrm{if $s_i=1$}
      \end{cases}$.
      \item Let $U_0 = \begin{bmatrix} 0 & 1\end{bmatrix}$ and $V_0
        = \begin{bmatrix} 1 \\ 0\end{bmatrix}$.
      \item Let $U_{i+1} =
          \begin{cases} U_i & \textrm{if $s_i=0$} \\
            \begin{bmatrix}U_i \\ U_i +
              \begin{bmatrix} 2^i & 0 \\ \vdots & \vdots \\ 2^i & 0 \end{bmatrix} \end{bmatrix} & \textrm{if $s_i=1$}
          \end{cases}$
        \item Let $V_{i+1} =
          \begin{cases}
            \begin{bmatrix}V_i & V_i +
              \begin{bmatrix} 0 & \ldots & 0 \\2^i & \ldots & 2^i \end{bmatrix} \end{bmatrix} & \textrm{if $s_i=0$} \\
            V_i & \textrm{if $s_i=1$} \\
          \end{cases}$
      \end{itemize}

  To see that the property $S_i = U_i \times V_i$ is true for all $i
  \in [k]$, notice that the second column of $U_i$ and the top row of
  $V_i$ only contain $1$'s, since this is true for $i=0$ and the
  property is preserved as $i$ increases. Adding a constant $c$ to the
  second half of the second line of $V_i$, this constant gets
  multiplied by the second column of $U_i$, that only contains
  $1$'s. The end result is that we add a $c \cdot J$ matrix to half of
  the matrix, which is exactly what we want.

  Finally, notice that $S_k$ is a matrix containing all that Alice and
  Bob can output in the split model given a specific split. A
  leaf rectangle in the split model is a submatrix of a matrix of this
  form, where some lines and columns have possibly been permuted or
  duplicated. Therefore, leaf rectangles in the split model have rank at
  most $2$.
      
  \paragraph*{XOR leaf rectangles} We prove that leaf rectangles produced by XOR
    protocols have rank at most $(k+1)$.

    Consider the communication matrix of the $\XOR_k$ function. An XOR
    leaf rectangle can be obtained as a submatrix of this communication
    matrix, possibly after permuting or duplicating some rows and
    columns. Thus, it suffices to show that $M_{\XOR_k}$ has rank
    $k+1$. We do this by directly giving a rank $k+1$ decomposition of
    $M_{\XOR_k}$.
    Consider the following $2^k \times 1$ vectors:
    \begin{itemize}
    \item $v^k$ is the all-one vector.
    \item For $0 \leq i < k$, $u^{k,i}$ is such that $u^{k,i}_j =
      (-1)^{1+j_i}$ (for $0\leq j < 2^k$). Such vectors are sometimes
      called Hadamard vectors.
    \end{itemize}
    
    Let $S_k$ be the following $2^k \times (k+1)$ matrix:
    
    \[S_k = \begin{bmatrix} \sqrt{2^{k-1}-2^{-1}}\cdot v & \ {\sqrt{2^{-1}}} \cdot
      u^{k,0}& \ldots & {\sqrt{2^{k-2}}} \cdot u^{k,k-1}
      \end{bmatrix}\]

    \begin{figure}[ht]
  \center
  \begin{subfigure}[t]{0.23\textwidth}
   \center
   \includegraphics[page=1,scale=0.6]{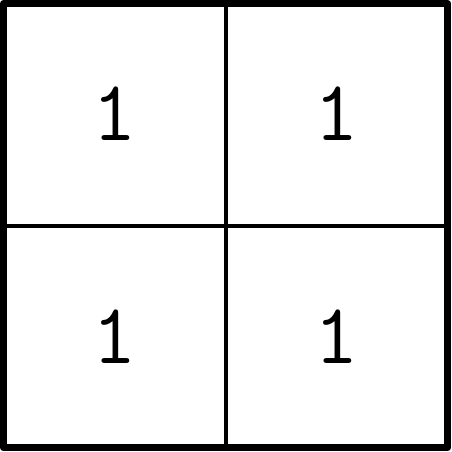}
   \caption{}
  \end{subfigure}
  ~
  \begin{subfigure}[t]{0.23\textwidth}
   \center
   \includegraphics[page=2,scale=0.6]{xor_decomp}
   \caption{}
  \end{subfigure}
  ~
  \begin{subfigure}[t]{0.23\textwidth}
   \center
   \includegraphics[page=3,scale=0.6]{xor_decomp}
   \caption{}
  \end{subfigure}
  ~
  \begin{subfigure}[t]{0.23\textwidth}
   \center
   \includegraphics[page=4,scale=0.6]{xor_decomp}
   \caption{}
  \end{subfigure}

  \caption{The $M_{\XOR_3}$ communication matrix can be obtained by a
    linear combination of those matrices.}
  \label{fig:xor-decomposition}
\end{figure}

    We have that $S_k \transpose{S_k} =
    M_{\XOR_k}$. \Cref{fig:xor-decomposition} gives an intuition
    of how the $M_{\XOR_k}$ matrix is obtained.
\end{proof}

\section{Conclusion and open questions}
\label{sec:open-questions}

We have presented output models that 
are tailored for  non-Boolean functions. 
We hope that these will 
find many applications, including extensions to information complexity,
a better understanding of direct sum problems, simulation protocols,
new lower bounds tailored to these models, to name just a few.

The Gap Majority composed with XOR problem (\cref{def:hxor})
is closely related
to the Gap Hamming Distance, extended to 
a large alphabet but with an additional promise, so 
lower bounds for $\GHD$ do not apply. 
We conjecture that its deterministic communication complexity is 
$\Omega(\epsilon N k)$, matching the trivial  upper bound. 
If true, this would indicate that our
randomness removal scheme (\cref{thm:derand-xor}) is close to
tight.

\section{Acknowledgments}
\label{sec:discussion}

We thank Jérémie Roland and Sagnik Mukhopadhyay for helpful
conversations and the anonymous referees for their numerous suggestions to improve the paper's presentation.
This work was funded in part by the ANR grant FLITTLA ANR-21-CE48-0023.

\pagebreak

\bibliography{0_new_refs_2022}

\pagebreak
\appendix

\section{Models for large-output functions}
\label{app:outputs-models}

One standard definition of communication complexity requires that at the end of the communication protocol, the output of the
computation can be determined from the transcript of the
communication and the public randomness (it is the model used in rectangle bounds).
It is  easy to find examples
where such a definition  makes it necessary to exchange much more
communication than seems natural. For example,
\begin{example}
  \label{ex:promise-gap}
  Consider the function $f:\{0,1\}^n \times \{0,1\}^n \rightarrow
  \{0,1\}^n, f(x,y) = x$, and assume we want to compute it with the
  promise $x=y$.
\end{example}

A protocol for $f$ requires $n$ bits of communication if the result of
the protocol has to be apparent from the communication and the public
randomness, even though both players know $f(x,y)$ right from the
start.

  In this section, we formally define the output models and prove
  separation results. The most interesting models are
  arguably the weakest ones: the one-out-of-two
  (\cref{model:oot}), the split
  (\cref{model:split}), and the XOR models
  (\cref{model:xor}).

\subsection{The open model}
\label{sec:open-model}

We start with the formal definition of our model which  reveals the most information regarding the outcome of the computation. We call it the \emph{open} model. 

This is the model
for which the partition bounds~\cite{JainK10}, in the form in which they appear in the literature, give
lower bounds.

\begin{definition}[Open computation]
\label{model:open}
  A protocol $\Pi$ is said to \emph{openly} compute $f$ with $\epsilon$ error
  if there exists a mapping $\cO : \cT_\pi \times \cR^\pub \rightarrow
  \cZ$ such that:  for all $(x,y) \in \cX \times \cY$, 
  \[\Pr_{r,r_\ali,r_\bob}[ \cO(t_\pi,r) = f(x,y)] \geq 1-\epsilon.\]
\end{definition}

\subsection{The local model}

In the previous model, protocols are \emph{revealing}, in the sense
that the result of the computation can not be a secret only known to
the players. In the \emph{local} model, we only require that both
players, at the end of the protocol, can output the value of the
function (or the same valid output, in the case of a relation).


\begin{definition}[Local computation]
\label{model:local}
  A protocol $\Pi$ is said to \emph{locally} compute $f$ with
  $\epsilon$ error if there exist two mappings $\cO_\ali$ and $\cO_\bob$
  with $\cO_\ali : \cT_\pi \times \cR^\pub \times \cR_\ali \times \cX
  \rightarrow \cZ$ and similarly $\cO_\bob : \cT_\pi \times \cR^\pub \times \cR_\bob \times \cY
  \rightarrow \cZ$ such that:  for all $(x,y) \in \cX \times \cY$, 
  \[\Pr_{r,r_\ali,r_\bob}[ \cO_\ali(t_\pi,r,r_\ali,x) = \cO_\bob(t_\pi,r,r_\bob,y) = f(x,y)] \geq 1-\epsilon. \]
\end{definition}

Bauer et al.~\cite{BauerMY15} remarked that for total functions
and relations, the deterministic open and local communication
complexities are the same. \Cref{ex:promise-gap} shows a
separation between the deterministic complexities of computing a
function with a promise.

For randomized communication, the local model is separated from the
open model by the following total function, as seen in
\cref{thm:eqout-open-loc-sep}:

\begin{definition}[Equality with output problem]
\label{def:eqout}
$\EQout_n : \{0,1\}^n \times \{0,1\}^n  \rightarrow  \{0,1\}^n \cup \{\top\} $ 
is defined as 
\[
  \EQout_n (x,y) =  
  	\begin{cases} 
		x	& \text{if $x=y$}\\ 
		\top 	& \text{otherwise} 
	\end{cases}
\]
\end{definition}
\begin{figure}[ht]
  \center
  \includegraphics[page=4]{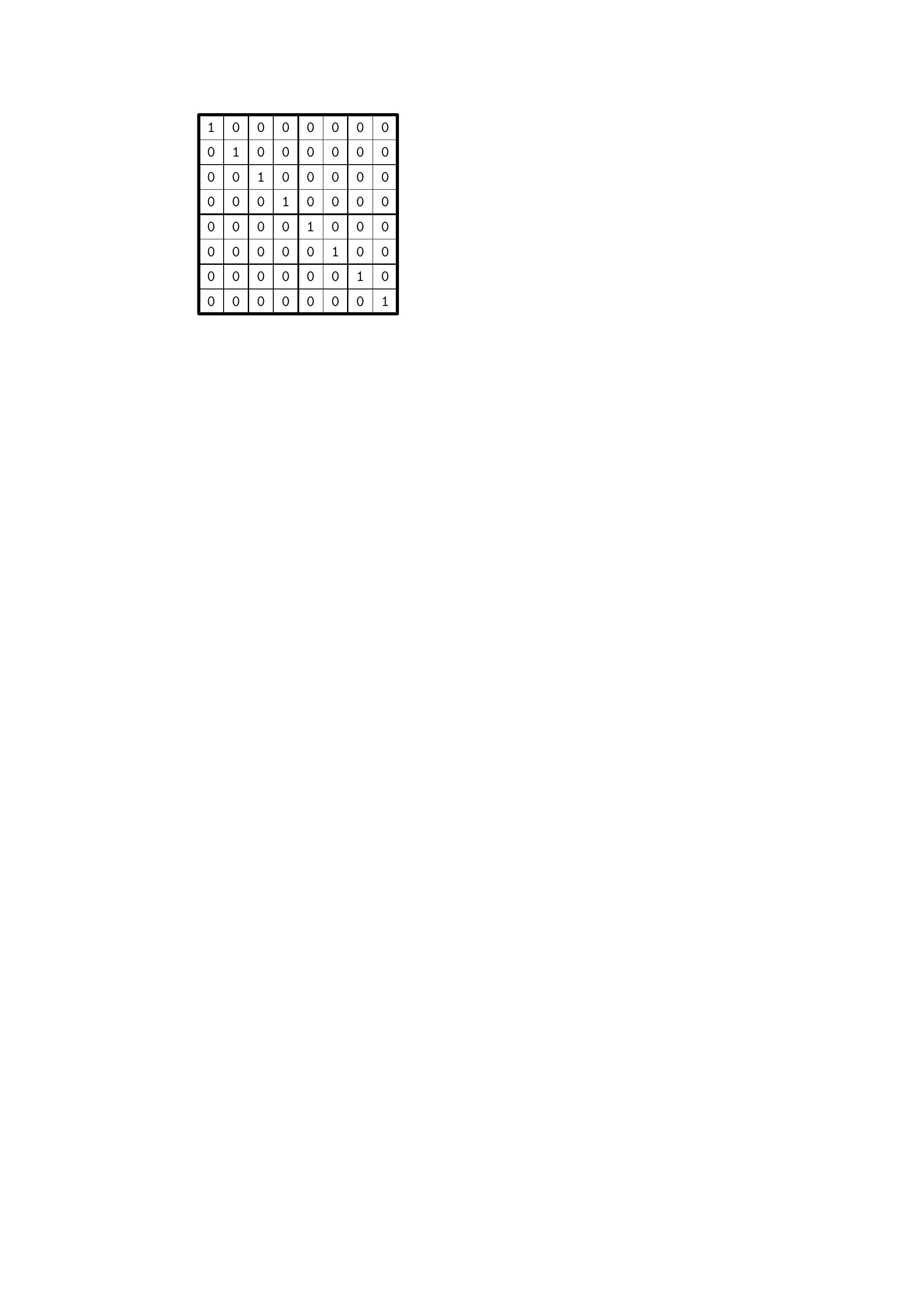}
  \label{fig:eqout}
  \caption{The communication matrix of $\EQout_3$}
\end{figure}


\begin{theorem}
  \label{thm:eqout-open-loc-sep}
  $\forall f : \cX \times \cY \rightarrow \cZ$ with $k=  \ceil*{ \log \card{\cZ} } $ and $\epsilon > 0$,
\begin{gather*}
R_\epsilon^\loc(f)  \leq R_\epsilon^\op(f) \leq R_\epsilon^\loc(f) +k, \quad \text{ and}\\
 R^{\loc}_{1/4} (\EQout_n)  \leq 4, \quad \quad R^{\op}_{1/4} (\EQout_n)  \in \Omega(n).
\end{gather*}
\end{theorem}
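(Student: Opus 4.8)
The statement bundles three claims: the general sandwich $R_\epsilon^\loc(f) \leq R_\epsilon^\op(f) \leq R_\epsilon^\loc(f) + k$, the upper bound $R^{\loc}_{1/4}(\EQout_n) \leq 4$, and the lower bound $R^{\op}_{1/4}(\EQout_n) \in \Omega(n)$. The two inequalities in the sandwich are essentially definitional: the left one holds because any open protocol is in particular a local protocol (the mapping $\cO(t_\pi,r)$ ignoring the extra arguments serves as both $\cO_\ali$ and $\cO_\bob$); the right one holds because, given a local protocol, we append one extra round in which Alice (who can compute $\cO_\ali(t_\pi,r,r_\ali,x) = f(x,y)$ with probability $\geq 1-\epsilon$) sends her $k$-bit output, and the external observer reads it off the transcript. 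This adds $k$ bits and preserves the error, since conditioned on Alice's local output being correct the appended message is correct.

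For $R^{\loc}_{1/4}(\EQout_n) \leq 4$: the key observation is that in the \emph{local} model, whenever $x = y$ both players already know the answer ($x$ itself) from the start, with zero communication; the only task is to agree, with good probability, on whether to output $x$ or $\top$, i.e.\ to compute $\EQ_n(x,y)$. So the protocol is: the players use a constant-cost public-coin protocol for Equality (hash both inputs with a shared random function and exchange the one-bit answers, as in the folklore $O(1)$ public-coin protocol for $\EQ_n$), and if the test reports ``equal'' they both output their own input, otherwise both output $\top$. If $x=y$ the test always says ``equal'' and both correctly output $x$; if $x \neq y$ the test says ``equal'' with probability at most $1/4$ (tune the hash length / number of repetitions), and otherwise both correctly output $\top$. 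Either way both players output $\EQout_n(x,y)$ with probability $\geq 3/4$, and the communication is a constant, which one checks is at most $4$ bits in a suitably concrete implementation.

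For $R^{\op}_{1/4}(\EQout_n) \in \Omega(n)$: here the external-observer requirement bites. The plan is to reduce from a known $\Omega(n)$ lower bound. The natural route is to reduce the deterministic/randomized complexity of Equality-with-a-revealed-witness, or more directly to argue via a fooling-set / rank-type argument on the $2^n \times 2^n$ diagonal block: on inputs with $x = y$, an open protocol must produce a transcript-plus-public-coin string from which $\cO$ outputs exactly $x$, so the number of distinct $(t_\pi, r)$ pairs that can occur on the diagonal must be at least $2^n$ (each must map to a distinct value $x$), and on the other hand in any fixed run the transcript has bounded length; more carefully, fix the public randomness to its best value and observe that the protocol restricted to the diagonal is a randomized protocol in which the transcript alone must, with probability $\geq 3/4$, determine $x$ — which is exactly a one-way-style information bottleneck forcing $\Omega(n)$ bits. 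I would phrase this cleanly by noting that an open protocol for $\EQout_n$ with error $1/4$ yields a (possibly randomized, public-coin) protocol computing the \emph{identity function} $g(x,x) = x$ on the promise $x = y$ with the open output condition, and an open protocol revealing an $n$-bit output on $2^n$ distinct inputs needs $\Omega(n)$ communication by a counting/entropy argument on transcripts. I would also point to \cref{ex:promise-gap}, which already isolates exactly this phenomenon in the deterministic setting; the randomized version follows by fixing public coins and applying a standard averaging plus the fact that a short transcript cannot, on average, pin down one of $2^n$ equally likely values with probability $3/4$.

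The main obstacle is the last step: making the $\Omega(n)$ lower bound for $R^{\op}_{1/4}(\EQout_n)$ fully rigorous in the \emph{randomized public-coin} setting, since naively the external observer sees both the transcript and the public coins, and one must rule out that the public coins secretly carry the $n$ bits of $x$ (they cannot, being independent of the input, but this needs the averaging argument: fix $r$ to a value achieving error $\leq 1/4$ over inputs drawn uniformly from the diagonal, then the transcript alone must determine $x$ with probability $\geq 3/4$ over a uniform $x \in \zo^n$, and a transcript of length $\ell$ takes at most $2^\ell$ values, so $2^\ell \geq \tfrac{3}{4}\cdot 2^n$, giving $\ell \geq n - O(1)$). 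Everything else is routine.
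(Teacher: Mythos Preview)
Your proposal is correct and follows essentially the same approach as the paper. The sandwich inequalities and the local upper bound (run a constant-cost public-coin $\EQ$ test, then each player outputs their own input or $\top$) match the paper exactly; for the $\Omega(n)$ open-model lower bound, the paper does the same counting argument over transcript/public-coin pairs, while you first fix the public randomness by averaging over the uniform diagonal distribution and then count transcripts --- this is the same idea, and your phrasing arguably handles the interplay of public and private randomness a bit more explicitly.
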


We provide a full proof of this theorem, but because all the results
of the form $R_\epsilon^{\cM_1}(f) \leq R_\epsilon^{\cM_2}(f)$ or
$R_\epsilon^{\cM_1}(f) \leq R_\epsilon^{\cM_2}(f) +k$ for two models
$\cM_1$ and $\cM_2$ can be proved by essentially the same proof, we
will omit them in proofs of later similar theorems, only proving the
separation result.

\begin{proof}[Proof of \cref{thm:eqout-open-loc-sep}]

  \begin{description}
    \item[Proof of $R_\epsilon^\loc(f) \leq R_\epsilon^\op(f)$:] An
      open protocol for a function $f$ is also a local protocol for
      $f$, as the players can take as mappings $\cO_\ali$ and $\cO_\bob$ the
      mapping $\cO$ of the open protocol (ignoring both players'
      randomness and input).
    \item[Proof of $R_\epsilon^\op(f) \leq R_\epsilon^\loc(f) +k$:]
      Let $\Pi$ be a local protocol for computing $f$ with error at
      most~$\epsilon$. Consider $\Pi'$, the protocol that consists of
      first running the protocol $\Pi$, and then Alice sends
      $\cO_\ali(t_\pi,r,r_\ali,x)$ -- what she would output at the end of
      $\Pi$ to locally compute $f$ -- over the communication
      channel. This only requires $k$ additional bits of
      communication. Now $\Pi'$ is an open protocol, since an
      external observer can use the last $k$ bits of the transcript as
      probable $f(x,y)$.
  \end{description}
  
  Both the lower bound and the upper bound on $\EQout$ directly follow
  from propositions and theorems previously seen in this manuscript.
  
  \begin{description}
    \item[Local model upper bound:] The players apply the standard
      protocol for $\EQ$ (\cref{prop:eq}). If the strings
      are different, they output $\top$, otherwise Alice outputs $x$
      and Bob outputs $y$.
    \item[Open model lower bound:] Consider the
  mapping $\cO$ of the open protocol $\Pi$ and notice that for all
  $x$, $\Pr_r[\cO(\Pi(x,x,r),r)=x] \geq 3/4$. Consider that
  the players have a public randomness source $\cR^\pub$ that is the
  uniformly random distribution over $\zo^k$. Then the above
  statement implies $\card{\cO^{-1}(x)} \geq \frac 3 4 \cdot 2^k$. Since
  $\cup_x \cO^{-1}(x) \subseteq \cT_\pi \times \zo^k$, we have that
  $\frac 3 4 \cdot 2^k \cdot 2^n \leq 2^{\CC(\Pi)} \cdot 2^k$ hence $\CC(\Pi) \geq n
  + \log \parens*{\frac 3 4} \in \Omega(n)$. This is also
  true when the source of public randomness is not a uniform
  distribution over $\set{0,1}^k$ because of the fact that any
  non-uniform source of randomness can be simulated with arbitrary
  precision by a uniform source of randomness.  
  \end{description}
\end{proof}

In \cref{app:wprt} we generalize this to show that any open protocol for a problem requires $\Omega(k)$ communication. This result follows from a lower bound known as the weak partition bound~\cite{FontesJKLLR16}.

\subsection{The unilateral models}

In this section, we consider models of communication complexity where
we require that at the end of the protocol, one player can output the
value of the function (or a valid output, in the case of a
relation). One-way problems are usually stated in this model.


\begin{definition}[Unilateral computation]
\label{model:uni}
  A protocol $\Pi$ is said to \emph{Alice}-compute $f$ with
  $\epsilon$ error if there exists a mapping $\cO_\ali : \cT_\pi \times
  \cR^\pub \times \cR_\ali \times \cX \rightarrow \cZ$ such that:   for all $(x,y) \in \cX \times \cY$, 
  \[\Pr_{r,r_\ali,r_\bob}[ \cO_\ali(t_\pi,r,r_\ali,x) = f(x,y)] \geq 1-\epsilon. \]

  Bob-computation is defined in a similar manner.

  A protocol is said to \emph{unilaterally} compute $f$ if it
  Alice-computes or Bob-computes $f$.
\end{definition}

Our definition of the unilateral model corresponds to a minimum of two
models, each assigned to a player.

\begin{definition}[Unilateral identity problems]
\label{def:id}
    $\id^\ali_n : \{0,1\}^n \times \{0,1\}^n \rightarrow \{0,1\}^n$ is
defined as \[\id^\ali_n (x,y) = x\]

$id^\bob_n$ is defined similarly, with opposite roles for Alice and Bob.
\end{definition}

\begin{figure}[ht]
  \center
  \begin{subfigure}[t]{0.4\textwidth}
   \center
   \includegraphics[page=7]{comm_matrices}
  \end{subfigure}
  ~
  \begin{subfigure}[t]{0.4\textwidth}
   \center
   \includegraphics[page=8]{comm_matrices}
  \end{subfigure}
  \label{fig:idA-idB}
  \caption{The communication matrix of $\id^\ali_3$ and $\id^\bob_3$}
\end{figure}

\begin{theorem}
  \label{thm:id-loc-uni-sep}
  $\forall f : \cX \times \cY \rightarrow \cZ$ with $k=  \ceil*{ \log \card{\cZ} } $, $\lambda \in [0,1]$ and $\epsilon > 0$ 
\begin{gather*}
  R_\epsilon^\uni(f) \leq  R_\epsilon^\loc(f)  \leq  R_\epsilon^\op(f)  \leq R_\epsilon^\uni(f) + k,\\
  D^{\loc}(f) \leq D^{\ali}(f)+D^{\bob}(f), \quad \quad
  R_\epsilon^{\loc}(f) \leq R_{\lambda \epsilon}^{\ali}(f)+R_{(1-\lambda) \epsilon}^{\bob}(f), \quad \text{and}\\
    D^{\uni} (\id^\ali_n) = D^{\ali} (\id^\ali_n) = D^{\bob} (\id^\bob_n)  = 0, \quad \quad
    R^{\loc}_{1/4} (\id^\ali_n) = R^{\loc}_{1/4} (\id^\bob_n)  \in \Omega(n).
\end{gather*}
\end{theorem}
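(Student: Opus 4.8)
\textbf{Proof proposal for \cref{thm:id-loc-uni-sep}.}

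The plan is to treat the three groups of claims separately, reusing the template proof structure from \cref{thm:eqout-open-loc-sep} for the ordering inequalities and then focusing energy on the separation. First I would dispatch the chain $R_\epsilon^\uni(f) \leq R_\epsilon^\loc(f) \leq R_\epsilon^\op(f) \leq R_\epsilon^\uni(f) + k$ exactly as in the previous theorem: a local protocol is in particular a unilateral one (pick whichever of $\cO_\ali$, $\cO_\bob$ the unilateral model gives and forget the other), an open protocol is local, and given an Alice-computing (say) protocol one appends the $k$ bits $\cO_\ali(t_\pi,r,r_\ali,x)$ to the transcript to get an open protocol. For $D^\loc(f)\leq D^\ali(f)+D^\bob(f)$, run the optimal Alice-computing protocol and then the optimal Bob-computing protocol in sequence; afterwards Alice can output using her mapping from the first phase and Bob using his mapping from the second, and since both phases are deterministic and correct, both outputs equal $f(x,y)$. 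The randomized version $R_\epsilon^\loc(f)\leq R_{\lambda\epsilon}^\ali(f)+R_{(1-\lambda)\epsilon}^\bob(f)$ is the same construction with a union bound: run an Alice-computing protocol with error $\lambda\epsilon$ and a Bob-computing protocol with error $(1-\lambda)\epsilon$, and the probability that either player's output is wrong is at most $\lambda\epsilon + (1-\lambda)\epsilon = \epsilon$.

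Next I would handle the trivial-cost claims for $\id^\ali_n$ and $\id^\bob_n$. Since $\id^\ali_n(x,y)=x$ depends only on Alice's input, the empty protocol Alice-computes it: $\cO_\ali$ ignores the (empty) transcript and randomness and returns $x$. Hence $D^\ali(\id^\ali_n)=0$, and since unilateral cost is the min over the two players' costs, $D^\uni(\id^\ali_n)=0$ as well; symmetrically $D^\bob(\id^\bob_n)=0$.

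The main obstacle, and the real content, is the lower bound $R^\loc_{1/4}(\id^\ali_n)\in\Omega(n)$ (the bound for $\id^\bob_n$ follows by symmetry). Here the point is that in the local model Bob must also output $f(x,y)=x$ with good probability, and Bob's input $y$ is useless for this — it carries no information about $x$ since $x,y$ range independently. So morally this reduces to a one-way communication lower bound for transmitting $x$ to Bob. Concretely, I would fix the public randomness $r$ and Bob's private randomness $r_\bob$ to values achieving error at most $1/4$ on average over Alice's randomness (an averaging argument), so that Bob's output is a deterministic function of the transcript alone, $z(t_\pi)$; then $\Pr_{r_\ali}[z(\Pi(x,y_0,r_\ali))=x]\geq 3/4$ for every $x$ and some fixed $y_0$. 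Since a cost-$C$ protocol produces at most $2^{C}$ distinct transcripts on the fixed column $y=y_0$ (the transcript being determined by $x$ and $r_\ali$, but ranging over a set of size $\le 2^C$), the map $x\mapsto$ (distribution over transcripts) followed by $z$ must hit each of the $2^n$ values $x$ with probability $\ge 3/4$; a counting/Fano-style argument then forces $2^C\geq \Omega(2^n)$, i.e.\ $C\in\Omega(n)$. Alternatively one can invoke the rank lower bound of \cref{thm:rank-lb-large-output} together with $\rank(M_{\id^\ali_n})=2^n$ to get the deterministic bound, but since we want a randomized-error bound the direct counting argument (or a reduction from one-way indexing / a standard distributional argument with $x$ uniform) is the cleaner route. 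I expect the only delicate point to be making the "Bob's input is irrelevant" step fully rigorous — pinning down that conditioning on a fixed column and fixed randomness leaves Bob's output a function of the transcript, and that the transcript ranges over a set of size at most $2^C$ — after which the entropy/counting estimate is routine.
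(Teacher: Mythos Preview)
Your proposal is essentially correct and follows the same route as the paper: the ordering inequalities and the sequential-composition bounds $D^{\loc}\le D^{\ali}+D^{\bob}$, $R_\epsilon^{\loc}\le R_{\lambda\epsilon}^{\ali}+R_{(1-\lambda)\epsilon}^{\bob}$ are handled exactly as you describe (the paper in fact omits these details entirely), and the separation is a counting argument on the number of distinct outputs Bob can produce. The paper phrases the latter as a distributional bound: it lower-bounds $D^{\bob}_{1/4}(\id^\ali_n,\mu)$ for $\mu$ uniform, observing that with communication $C$ Bob has at most $2^C$ possible answers while the correct one is uniform over $2^n$ values, hence error $\ge 1-2^{C-n}$. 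This is precisely the ``standard distributional argument with $x$ uniform'' you list as your alternative.

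One small point to tighten: your primary phrasing fixes $r,r_\bob$ by averaging and then asserts $\Pr_{r_\ali}[z(\Pi(x,y_0,r_\ali))=x]\ge 3/4$ \emph{for every} $x$. Averaging cannot give a per-$x$ guarantee here; what you actually obtain (after also fixing $y=y_0$ and averaging over uniform $x$) is that the \emph{expected} success over $x$ is at least $3/4$. That weaker statement is all you need, since for each fixed $r_\ali$ the map $x\mapsto z(t_\pi)$ has range of size at most $2^C$, so $\Pr_x[z=x]\le 2^{C-n}$, and averaging over $r_\ali$ finishes. This is exactly the paper's argument (and, as the paper notes, it in fact yields the slightly stronger $R^{\bob}_{1/4}(\id^\ali_n)\in\Omega(n)$, which your argument also gives since it only uses Bob's output mapping).
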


The first line also holds for relations, but the second line does not:
consider as counterexample the relation $f : \{0,1\}^n \times
\{0,1\}^n \rightarrow 2^{\{0,1\}^n}, f(x,y) = \{x,y\}$. This problem
does not require any communication in both unilateral models ($D^{\ali}(f) = D^{\bob}(f) = 0$), but in
the local model, the fact that the players need to agree on a single
output makes the communication of order $\Omega(n)$ in both the
deterministic and the randomized setting ($D^{\loc}(f) \geq R_{\epsilon}^{\loc}(f) \in \Omega(n)$).

\begin{proof}[Proof of \cref{thm:id-loc-uni-sep}]  
  We omit the proof of the first two lines, that are only based on
  using the same protocol with the different proper mappings, or
  sending what one would output in a lower model over the
  communication channel.

  We prove a slightly stronger result for the separation: that
  $R^{\bob}_{1/4} (\id^\ali_n) \in \Omega(n)$.

  \begin{description}
    \item[Alice model upper bound:] Alice outputs her $x$, which
      requires no communication.
    \item[Bob model lower bound:] Let us consider
      $D^{\bob}_{1/4}(\id^\ali_n,\mu)$ where $\mu$ is the uniform
      distribution. Bob has to output one of $2^n$ equiprobable
      answers. With communication $C$, Bob can only have $2^C$
      different answers, so Bob is wrong with probability $\geq
      1-2^{C-n}$. Since Bob is supposed to make less than $\frac 1 4$
      error, we have: $C \geq n + \log \parens*{\frac 3 4}$, so $R^{\bob}_{1/4}
      (\id^\ali_n) \in \Omega(n)$.
  \end{description}
\end{proof}

\subsection{The one-out-of-two model}

In the unilateral models, the player that outputs the result at the
end of the protocol is fixed. In particular, it does not depend on the
inputs. In the one-out-of-two model, we relax this condition:
correctly computing a function in the one-out-of-two model corresponds
to an execution such that at the end of the protocol:
\begin{itemize}
\item one player outputs a special symbol $\top \not \in \cZ$ (which
  corresponds to silence)
\item the other players outputs $f(x,y)$.
\end{itemize}

Intuitively, we not only require that one of the players outputs the
correct answer, but also that she knows that her output is probably
correct, while the other knows that other player has a good answer to
output. If we were only requiring that one player gives the correct
answer, then all Boolean functions would be solved with zero
communication in this model. In contrast, our model does not
trivialize the communication complexity of Boolean functions.


\begin{definition}[One-out-of-two computation]
\label{model:oot}
  A protocol $\Pi$ is said to \emph{one-out-of-two} compute $f$ with
  $\epsilon$ error if there exist two mappings $\cO_\ali$ and $\cO_\bob$
  with $\cO_\ali : \cT_\pi \times \cR^\pub \times \cR_\ali \times \cX
  \rightarrow \cZ \cup \{\top\}$ and similarly $\cO_\bob : \cT_\pi \times \cR^\pub \times \cR_\bob \times \cY
  \rightarrow \cZ \cup \{\top\}$ 
  such that:   for all $(x,y) \in \cX \times \cY$, 
  \[
	\Pr_{r,r_\ali,r_\bob}[ \parens*{\cO_\ali(t_\pi,r,r_\ali,x),\cO_\bob(t_\pi,r,r_\bob,y)} \in \set{ (f(x,y),\top),(\top,f(x,y)) } ] \geq 1-\epsilon.
  \]
\end{definition}

The next proposition shows that any one-out-of-two protocol can be
transformed into another one-out-of-two protocol of lesser  or
equal error and using only one additional bit of communication, such
that at the end of the protocol it is always the case that exactly one
player outputs a value in $\cZ$ and the other stays silent (outputs~$\top$).

\begin{proposition}
  \label{prop:oot-one-speaker}
  Consider a function $f : \cX \times \cY \rightarrow \cZ$ and
  $\Pi$ a one-out-of-two protocol for $f$ with error $\epsilon > 0$ of
  communication cost $C$. Then there exists a one-out-of-two protocol
  $\Pi'$ of communication cost $(C+1)$ that computes $f$ with the same error but
  with mappings such that it is always the case that only one of them
  speaks at the end:
  \begin{gather*}
    \forall x,y,r_\ali,r_\bob,r,t_{\pi'}=\Pi'(x,y,r_\ali,r_\bob,r): \\
    \parens*{ \cO'_\ali(t_{\pi'},r,r_\ali,x),\cO'_\bob(t_{\pi'},r,r_\bob,y) }
  \in \parens*{ \cZ \times \{\top\} }
  \cup \parens*{ \{\top\} \times \cZ }. \end{gather*}
\end{proposition}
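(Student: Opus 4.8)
The plan is to show that an arbitrary one-out-of-two protocol $\Pi$ can be augmented with a single reconciliation bit that resolves every execution into the ``exactly one speaker'' form without increasing the error. First I would observe that, fixing a transcript $t_\pi$, public randomness $r$, and private randomness $r_\ali, r_\bob$, the pair of outputs $(\cO_\ali(t_\pi,r,r_\ali,x),\cO_\bob(t_\pi,r,r_\bob,y))$ falls into exactly one of four cases: (i) $(\top,\top)$, (ii) $(z,\top)$ with $z \in \cZ$, (iii) $(\top,z')$ with $z' \in \cZ$, (iv) $(z,z')$ with $z,z' \in \cZ$. Cases (ii) and (iii) are already good and contribute correctly exactly when $z = f(x,y)$ (resp. $z' = f(x,y)$); cases (i) and (iv) are always incorrect under the original definition, since neither matches an element of $\set{(f(x,y),\top),(\top,f(x,y))}$. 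So the $\epsilon$ error bound already accounts for cases (i) and (iv) as losses, and whatever $\Pi'$ does in those cases cannot make things worse as long as it does not spoil cases (ii) and (iii).

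The construction of $\Pi'$: run $\Pi$, and then have Alice send one extra bit $c$ equal to $1$ if $\cO_\ali(t_\pi,r,r_\ali,x) \in \cZ$ (i.e.\ Alice is ``speaking'') and $0$ if $\cO_\ali = \top$. Define the new output mappings $\cO'_\ali, \cO'_\bob$ as follows. If $c = 1$: Alice outputs $\cO_\ali(t_\pi,r,r_\ali,x)$ and Bob outputs $\top$ (regardless of what Bob's original mapping said). If $c = 0$: Alice outputs $\top$, and Bob outputs $\cO_\bob(t_\pi,r,r_\bob,y)$ if that is in $\cZ$, and otherwise outputs some fixed default element $z_0 \in \cZ$ (so that Bob is always the designated speaker in this branch). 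By construction, for every execution exactly one of the two players outputs an element of $\cZ$ and the other outputs $\top$, which is the desired property. The communication cost is $C+1$.

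It remains to check the error does not increase. I would argue that every execution that was correct for $\Pi$ remains correct for $\Pi'$. In case (ii), $\cO_\ali \in \cZ$ so $c=1$, and $\Pi'$ outputs $(\cO_\ali,\top)$, unchanged, hence still correct iff $\cO_\ali = f(x,y)$. In case (iii), $\cO_\ali = \top$ so $c=0$, and $\Pi'$ outputs $(\top,\cO_\bob)$ since $\cO_\bob \in \cZ$, unchanged, hence still correct iff $\cO_\bob = f(x,y)$. In cases (i) and (iv), $\Pi$ was already incorrect, so $\Pi'$ cannot do worse. Therefore the set of random strings $(r,r_\ali,r_\bob)$ on which $\Pi'$ is correct contains the set on which $\Pi$ is correct, for every fixed $(x,y)$, giving $\Pr[\Pi' \text{ correct}] \geq \Pr[\Pi \text{ correct}] \geq 1-\epsilon$. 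The error only goes down (weakly).

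The main subtlety — not really an obstacle, but the point to be careful about — is that in case (iv) both players originally output elements of $\cZ$, and one of them might have been $f(x,y)$; one could worry we are ``throwing away'' a correct answer. But under the one-out-of-two \emph{definition}, case (iv) is never a correct execution regardless (the output pair must literally be $(f(x,y),\top)$ or $(\top,f(x,y))$, with a genuine $\top$ in one coordinate), so no correctness is lost. The only other thing to verify is that the extra bit $c$ is computable by Alice from her own data $(t_\pi, r, r_\ali, x)$, which is immediate since $\cO_\ali$ has exactly that as its input domain, and that in the branch $c=0$ Bob can compute his new output from $(t_{\pi'}, r, r_\bob, y)$ — also immediate, as he just checks whether $\cO_\bob(t_\pi,r,r_\bob,y) \in \cZ$ and falls back to $z_0$ otherwise. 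This completes the argument.
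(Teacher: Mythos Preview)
Your proof is correct and essentially identical to the paper's: both have Alice send one bit indicating whether she is speaking, force Bob silent when she is, and have Bob output a filler value (you use a fixed $z_0$, the paper picks $z$ uniformly at random) when neither spoke originally. Your case analysis and the observation that cases (i) and (iv) were already errors under the one-out-of-two definition match the paper's reasoning exactly.
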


\begin{proof}[Proof of \cref{prop:oot-one-speaker}]
  Let $\Pi$ be a one-out-of-two protocol for $f$ and $\cO_\ali$,$\cO_\bob$
  the associated mappings. We define the protocol $\Pi'$ to be a
  protocol that first behaves as $\Pi$ (getting a transcript $t_\pi$)
  and when we hit a leaf in the protocol for $\Pi$, Alice sends a bit
  of communication to Bob following this rule:
  
  \begin{itemize}
  \item If $\cO_\ali(t_\pi,r,r_\ali,x)=\top$, Alice sends $0$ to Bob.
  \item Otherwise Alice sends $1$ to Bob.
  \end{itemize}
  Let $c_\ali$ be this control bit, sent by Alice in the last round of the new protocol $\Pi'$. Then, Alice keeps the same mapping $\cO_\ali$ whereas Bob's new mapping
  $\cO'_\bob$ is such that:

  \[\cO'_\bob(t_{\pi'},r,r_\bob,y) =
  \begin{cases} \top & \textrm{if $c_a = 1$}, \\
  \cO_\bob(t_\pi,r,r_\bob,y)  & \textrm{if $c_a = 0$ and  $\cO_\bob(t_\pi,r,r_\bob,y) \neq \top$,}    \\
  z & \textrm{picked u.a.r.\ in $\cZ$, otherwise.}
  \end{cases}\]
  
  Intuitively, Alice tells Bob whether to speak or not, and he
  obeys. Since the only cases where this changes what the players
  output is when they were going to both speak or both stay silent,
  the error does not increase in the process.
\end{proof}

\begin{definition}[Conditional identity problem]
\label{def:condid}
    The function $\CondId_n  : \{0,1\}^{n} \times \{0,1\}^{n}  \rightarrow  \{0,1\}^{n}$ is defined as  
\[
  \CondId_n(x,y) =  
		\begin{cases} 
			x & \text{if}\ x_0 = y_0,\\ y & \text{otherwise,} 
		\end{cases}
\]
where $x_0$ is the fist bit of $x$, similarly for $y$.
\end{definition}

\begin{figure}[ht]
  \center
  \includegraphics[page=9]{comm_matrices}
  \label{fig:condid}
  \caption{The communication matrix of $\CondId_3$}
\end{figure}

\begin{theorem}
  \label{thm:condid-uni-oot-sep}
$\forall f : \cX \times \cY \rightarrow \cZ$ with $k=  \ceil*{ \log \card{\cZ} } $ and $\epsilon > 0$ 
\begin{gather*}
R_\epsilon^\oot(f) \leq  R_\epsilon^\uni(f) \leq  R_\epsilon^\loc(f)  \leq R_\epsilon^\op(f)  \leq R_\epsilon^\oot(f) + k + 1, \quad \text{and}
\\
    D^\oot(\CondId_n)  \in O(1),  \quad  \quad R^\uni_\epsilon(\CondId_n)  \in \Omega(n).
\end{gather*}
\end{theorem}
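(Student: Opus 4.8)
The first line of inequalities is the routine chain of model simulations: a one-out-of-two protocol is a unilateral protocol up to having the silent player defer, a unilateral protocol is local, local is open; the only nontrivial direction is $R_\epsilon^\op(f) \leq R_\epsilon^\oot(f) + k + 1$, which we get by first using \cref{prop:oot-one-speaker} to make exactly one player speak at the end of a one-out-of-two protocol (costing one bit), then having that player announce which of the two it is and send its $k$-bit output, so an external observer can read off $f(x,y)$ from the transcript. As the authors note after \cref{thm:id-loc-uni-sep}, these simulation arguments are all the same and can be omitted; I would just cite the pattern.

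\textbf{Upper bound $D^\oot(\CondId_n) \in O(1)$.} The plan is: Alice sends $x_0$ and Bob sends $y_0$ (two bits of communication). Now both players know whether $x_0 = y_0$. If $x_0 = y_0$, then $\CondId_n(x,y) = x$, so Alice outputs $x$ and Bob outputs $\top$. If $x_0 \neq y_0$, then $\CondId_n(x,y) = y$, so Bob outputs $y$ and Alice outputs $\top$. In every case exactly one player outputs the correct value and the other outputs $\top$, with zero error, so $D^\oot(\CondId_n) \leq 2$.

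\textbf{Lower bound $R_\epsilon^\uni(\CondId_n) \in \Omega(n)$.} Here is where the real work is, and it splits by which player is designated to output. First suppose Bob is the designated outputter. I would argue via a reduction from (or a direct counting argument mimicking) the bound $R^\bob_{1/4}(\id^\ali_n) \in \Omega(n)$ from \cref{thm:id-loc-uni-sep}: restrict to inputs with $x_0 = y_0$, say $x_0 = y_0 = 0$; on this rectangle $\CondId_n(x,y) = x$ depends only on Alice's $(n-1)$ free bits, so Bob must output one of $2^{n-1}$ equiprobable values under the uniform distribution, and a protocol of cost $C$ gives Bob at most $2^C$ possible outputs, forcing $C \geq n - 1 + \log(1-\epsilon) \in \Omega(n)$ (I'd phrase this for constant $\epsilon < 1/2$). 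Symmetrically, if Alice is the designated outputter, restrict to $x_0 \neq y_0$, where $\CondId_n(x,y) = y$ depends only on Bob's free bits, and Alice must reproduce one of $2^{n-1}$ equiprobable strings of Bob's — the same counting gives $\Omega(n)$. Since the unilateral model is the minimum over the two designations, both cases being $\Omega(n)$ yields $R_\epsilon^\uni(\CondId_n) \in \Omega(n)$.

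\textbf{Main obstacle.} The delicate point is not any single calculation but making sure the two cases of the unilateral lower bound are genuinely symmetric and that restricting to a sub-rectangle ($x_0 = y_0$ vs. $x_0 \neq y_0$) is legitimate — one must check that the promise/input distribution restricted to such a rectangle still forces $2^{n-1}$ equiprobable outputs on the relevant side, and that the designated-outputter assumption in the unilateral model doesn't let the protocol "cheat" by using the other player's knowledge without communication. Everything else is bookkeeping: the upper bound is two bits, and the counting bound is identical in spirit to the one already proved for $\id^\ali_n$.
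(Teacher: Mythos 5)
Your proposal is correct and follows essentially the same approach as the paper: the two-bit protocol exchanging $x_0,y_0$ for the one-out-of-two upper bound, and the counting argument over the restricted distribution where $x_0=y_0$ (respectively $x_0\neq y_0$) for the unilateral lower bound, handling both Alice- and Bob-outputting cases since the unilateral model is a minimum over the two. The only cosmetic difference is that the paper dispatches the Alice case with ``by symmetry'' while you spell it out, and you are slightly more careful about the $\epsilon$ parameter.
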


\begin{proof}[Proof of \cref{thm:condid-uni-oot-sep}]
  Again, we focus on the separation result.
  
  \begin{description}
    \item[One-out-of-two model upper bound:] Alice and Bob send each
      other $x_0$ and $y_0$. If $x_0=y_0$, Alice outputs $x$,
      otherwise Bob outputs $y$. This only takes $2$ bits of
      communication.
    \item[Unilateral model lower bound:] Let us consider
      $D^{\bob}_{1/4}(\CondId_n,\mu)$ where $\mu$ is the uniform distribution
      over $(x,y)$ such that $x_0=y_0$. Having received any given $x$,
      Bob has to output one of $2^{n-1}$ equiprobable answers. With
      communication $C$, Bob can only have $2^C$ different answers, so
      Bob is wrong with probability $\geq 1-2^{C-n+1}$. Since Bob is
      supposed to make less than $\frac 1 4$ error, we have: $C \geq n
      -1 + \log \parens*{\frac 3 4}$, so $R^{\bob}_{1/4} (\CondId_n) \in
      \Omega(n)$. By symmetry, we also have $R^{\ali}_{1/4}
      (\CondId_n) \in \Omega(n)$, so $R^{\uni}_{1/4} (\CondId_n) \in
      \Omega(n)$.
  \end{description}
\end{proof}

\subsection{The split model}

In our next model, we allow the answer to be split between the two
players. In the one-out-of-two model, one of the player had to output
the full output, while the other stayed fully silent. In contrast, in
the split model we allow both players to output part of the result. We
only require that any given bit is output by exactly one player (the
other player stays silent on this particular bit). In a valid split
computation, it may be that the first bit of $f(x,y)$ is output by
Alice, while the second one is output by Bob.


\begin{definition}[Split computation]
\label{model:split}
  A protocol $\Pi$ is said to \emph{split} compute $f$ with
  $\epsilon$ error if there exist two mappings $\cO_\ali$ and $\cO_\bob$
  with $\cO_\ali : \cT_\pi \times \cR^\pub \times \cR_\ali \times \cX
  \rightarrow \{0,1,*\}$ and similarly $\cO_\bob : \cT_\pi \times \cR^\pub \times \cR_\bob \times \cY
  \rightarrow \{0,1,*\}$ 
  such that:   for all $(x,y) \in \cX \times \cY$, 
  \[
	\Pr_{r,r_\ali,r_\bob}[ \cO_\ali(t_\pi,r,r_\ali,x) \weave \cO_\bob(t_\pi,r,r_\bob,y) = f(x,y) ] \geq 1-\epsilon.
        \]
        where $ ( a \weave b)_i
        \begin{cases}
          a_i & \textrm{if }b_i=*,\\
          b_i & \textrm{if }a_i=*,\\
          *   & \textrm{otherwise}.
        \end{cases}$
\end{definition}

We call \emph{weave} the binary operator $\weave:\set{0,1,*}^k \times \set{0,1,*}^k \rightarrow \set{0,1,*}^k$ described at the end of \cref{model:split}, that recombines the parts split among the players.

To separate this model from the one-out-of-two model, we introduce a
problem where the information about the output is naturally split
between the two players. We do so in a manner which makes computing this problem in the split
model trivial, while the fact that one of
the players must aggregate complete information about the output in the one-out-of-two model leads
to a large amount of communication.

\begin{definition}[Split identity problem]
\label{def:splitid}
    $\SplitId_n  : \{0,1\}^{n} \times \{0,1\}^{n}  \rightarrow  \{0,1\}^{n}$ is defined as  
\[
\SplitId_n(x,y)_i = \begin{cases}
  x_i & \textrm{if $i = 0 \mod 2$},\\
  y_i & \textrm{otherwise.}
\end{cases}
\]
\end{definition}

\begin{figure}[ht]
  \center
  \includegraphics[page=10]{comm_matrices}
  \label{fig:splitid}
  \caption{The communication matrix of $\SplitId_3$}
\end{figure}

\begin{theorem}
  \label{thm:splitid-oot-spt-sep}
$\forall f : \cX \times \cY \rightarrow \cZ$ with $k=  \ceil*{ \log \card{\cZ} } $ and $\epsilon > 0$ 
\begin{gather*}
R_\epsilon^\spt(f) \leq  R_\epsilon^\oot(f) \leq  R_\epsilon^\spt(f) + \floor*{ {k} / {2} } +1 , \quad \text{and}
\\
    D^\spt(\SplitId_n)  \in O(1),  \quad  \quad R^\oot_\epsilon(\SplitId_n)  \in \Omega(n).
\end{gather*}
\end{theorem}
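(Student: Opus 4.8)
The plan is to prove \cref{thm:splitid-oot-spt-sep} in two parts: the generic inequalities relating the split and one-out-of-two models, and the concrete separation via $\SplitId_n$. For the generic part, $R_\epsilon^\spt(f) \leq R_\epsilon^\oot(f)$ is immediate since any one-out-of-two protocol, with mappings $\cO_\ali, \cO_\bob$ into $\cZ \cup \{\top\}$, becomes a split protocol by reinterpreting $\top$ as the all-$*$ string $*^k$ and any output $z \in \cZ$ as itself in $\{0,1,*\}^k$: the weave of $z$ with $*^k$ is $z$, and the weave of $*^k$ with $z$ is $z$, so the success probability is unchanged (formally $(f(x,y),\top)$ and $(\top,f(x,y))$ both weave to $f(x,y)$). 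For the reverse direction $R_\epsilon^\oot(f) \leq R_\epsilon^\spt(f) + \floor{k/2} + 1$, I would take a split protocol $\Pi$ with mappings $\cO_\ali, \cO_\bob$ into $\{0,1,*\}^k$, and note that on any fixed transcript/randomness the partition of the $k$ output bits between the two players is determined by $(t_\pi, r, r_\ali, x)$ on Alice's side and $(t_\pi, r, r_\bob, y)$ on Bob's side; in a successful run these partitions are complementary. The natural fix is to have whichever player holds the minority of the bits (at most $\floor{k/2}$ of them) send those bits to the other, plus one control bit to indicate who is sending — this requires at most $\floor{k/2}+1$ additional bits and lets the receiving player assemble the full $k$-bit answer while the sender outputs $\top$. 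One subtlety to address: the two players' views of the partition may disagree when the run is unsuccessful, so the control-bit convention (say, based on the parity of Alice's number of non-$*$ bits, with the tie broken deterministically) must be such that the players never deadlock; since only correctness on successful runs matters, on a disagreement we may let the protocol produce garbage without increasing the error. I expect this reconciliation bookkeeping — exactly as in the proof of \cref{prop:oot-one-speaker} — to be the only mildly delicate point in the generic inequalities.

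For the separation, the upper bound $D^\spt(\SplitId_n) \in O(1)$ is essentially trivial: by definition of $\SplitId_n$, bit $i$ of the output equals $x_i$ when $i$ is even and $y_i$ when $i$ is odd, and this partition is fixed and known to both players, so with zero communication Alice sets $\cO_\ali$ to output $x_i$ on even positions and $*$ elsewhere, Bob outputs $y_i$ on odd positions and $*$ elsewhere, and their weave is exactly $\SplitId_n(x,y)$; hence in fact $D^\spt(\SplitId_n) = 0$ (the $O(1)$ in the statement is just slack). For the lower bound $R^\oot_\epsilon(\SplitId_n) \in \Omega(n)$, I would use a distributional argument mirroring the lower bounds in \cref{thm:id-loc-uni-sep,thm:condid-uni-oot-sep}. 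Fix the uniform distribution $\mu$ over $(x,y) \in \{0,1\}^n \times \{0,1\}^n$. In a one-out-of-two protocol, on each execution exactly one player must output the full $n$-bit string $\SplitId_n(x,y)$. Condition on the event that it is, say, Bob who outputs (the other case is symmetric, and one of the two must carry at least half the probability mass of correct runs). Bob's output is a function of $(t_\pi, r, r_\bob, y)$; fixing the public randomness and Bob's private randomness, his output depends only on $y$ and the transcript $t_\pi$, which takes at most $2^C$ values where $C = \CC(\Pi)$. But conditioned on $y$, the string $\SplitId_n(x,y)$ ranges over $2^{\lceil n/2 \rceil}$ equiprobable values as $x$ varies over the even-indexed bits (the odd-indexed bits are pinned by $y$), and the transcript must distinguish essentially all of them for Bob to be correct with probability $> \epsilon$ away from $\tfrac12$ — more precisely, for a fixed $y$ and fixed randomness, the number of distinct correct answers Bob can produce is at most $2^C$, so the conditional error is at least $1 - 2^{C - \lceil n/2\rceil}$. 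Averaging over $y$ and randomness and using $\epsilon < \tfrac12$ forces $C \geq \lceil n/2 \rceil + \log(1 - \epsilon) - O(1) \in \Omega(n)$.

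The only real obstacle is making the lower-bound counting argument clean in the presence of the $\top$ symbol and the "which player speaks" freedom. I would handle this by first invoking (an analogue of) \cref{prop:oot-one-speaker} so that on every run exactly one player outputs a value in $\cZ$ and the other outputs $\top$ — at the cost of one extra bit, absorbed into $\Omega(n)$ — and then partitioning the probability space according to which player is the speaker. By an averaging argument, with probability at least $1/2$ over $\mu$ the speaker is a fixed player, WLOG Bob (the argument for Alice is symmetric, swapping the roles of even and odd indices, which still leaves $\lceil n/2\rceil$ or $\lfloor n/2\rfloor$ free bits — either way $\Omega(n)$). Then the fixed-$y$, fixed-randomness counting bound above applies on this sub-event: since Bob is correct on it with probability significantly more than $\tfrac12$ (after accounting for the factor-$1/2$ split and the error budget), and correctness there requires producing one of $2^{\Theta(n)}$ distinct $n$-bit strings, we get $2^C \geq 2^{\Theta(n)} \cdot \text{const}$, i.e. $C \in \Omega(n)$. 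I would structure the write-up to emphasise that this is the same template as the previous separation theorems, and — as the paper does elsewhere — omit the routine verification of the generic inequalities, giving full detail only for the two $\SplitId_n$ bounds.
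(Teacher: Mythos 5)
Your generic inequalities and the split upper bound match the paper's approach. The reinterpretation of $\top$ as $*^k$ for $R^\spt \leq R^\oot$ is exactly right, and the minority-holder-sends scheme with a one-bit control flag driven by a deterministic rule based on Alice's local view is the same idea the paper uses (the paper keys the flag on whether Alice's star count exceeds $\floor{k/2}$ rather than on parity, but the role of the convention is identical: to make the protocol unambiguous even on invalid or erroneous splits). The observation that $D^\spt(\SplitId_n) = 0$ is also the paper's.

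Where you genuinely diverge is the lower bound $R^\oot_\epsilon(\SplitId_n) \in \Omega(n)$. The paper first restricts to the sub-distribution where only the even bits of $x$ and the odd bits of $y$ are free (so $\SplitId_n$ becomes a bijection on a $2^{\lceil n/2\rceil}\times 2^{\lfloor n/2\rfloor}$ matrix), then argues via the combinatorial structure of one-out-of-two leaf rectangles: after applying \cref{prop:oot-one-speaker}, each rectangle is striped, so any rectangle with both dimensions $\geq 2$ errs on at least half its cells, and the thin rectangles can cover at most $2^{C+n/2}$ cells. You instead work over the full uniform distribution and count transcripts directly: fixing $y$ and all randomness, Bob's output is a function of $t_\pi$ and so takes at most $2^C$ values, while the correct answer ranges over $2^{\lceil n/2\rceil}$ values. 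Both are valid $\Omega(n)$ arguments, and yours is arguably more elementary (no rectangle structure needed), but as written your conditioning step is slightly off: knowing $\Pr[\text{error}] \leq \epsilon$ and $\Pr[\text{Bob speaks}] \geq 1/2$ does not give you that the \emph{conditional} error given Bob speaks is below $1/2$. The clean fix is to avoid conditioning altogether and instead union-bound the two speaker cases: for fixed randomness, $\Pr_{x,y}[\text{Bob speaks and is correct}] \leq 2^{C-\lceil n/2\rceil}$ and symmetrically $\Pr_{x,y}[\text{Alice speaks and is correct}] \leq 2^{C-\lfloor n/2\rfloor}$, so the total success probability is at most $2^{C-\lfloor n/2\rfloor+1}$, which must exceed $1-\epsilon > 1/2$, giving $C \geq \lfloor n/2\rfloor - 2$. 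With that adjustment your argument is complete and is a legitimate alternative to the paper's rectangle-based proof.
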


\begin{proof}[Proof of \cref{thm:splitid-oot-spt-sep}]
  There is a small subtlety here, that the players may make the error
  of having too many or too few $*$ symbols at the end of the split
  protocol. Our proof that $R_\epsilon^\oot(f) \leq R_\epsilon^\spt(f)
  + \floor*{ {k} / {2} } +1$ must not rely on this
  assumption: we can not, for instance, say ``the player with fewer $*$
  symbols speaks first'', as this could result in an ambiguous
  protocol. 
  
  \begin{description}
    \item[Proof of $R_\epsilon^\oot(f) \leq R_\epsilon^\spt(f) + \floor*{ {k} / {2} } +1$:] Let $\Pi$ be an optimal
      split protocol. At the end of $\Pi$, Alice counts how many $*$
      symbols she would output in the split protocol. She sends a $1$
      bit if that number is greater than $\floor*{ {k} / {2}
      }$, $0$ otherwise. If she sent a $0$, she then sends
      $\floor*{ {k} / {2} }$ bits, the first of which are,
      in order, the non-$*$ symbols she would have output, in order,
      in the split protocol. If she sent a $1$, it is Bob that sends
      the first $\floor*{ {k} / {2} }$ non-$*$ bits
      that he would have output in the split protocol. In both cases,
      if there are not enough bits to send, the players append $0$'s as needed to reach $\floor*{ {k} / {2} }$ bits.

      If it is Alice that is sending the non-$*$ symbols of her split
      output, then Bob will replace the $*$ symbols in his split
      output by the bits sent by Alice before outputting it as final
      step of the one-out-of-two protocol. The situation is symmetric
      if Bob is sending his non-$*$ bits. If there are too many or not
      enough bits to replace the $*$ symbols, the bits are discarded or we just
      put $0$.

      This protocol is unambiguous (it does not rely on Alice and Bob
      not having exactly $k$ stars together) and is correct in the
      one-out-of-two model whenever the original protocol was correct
      in the split model.
    \end{description}

  The separation result again bounds the size of rectangles that do
  not make too many errors.

  \begin{description}
    \item[Split model upper bound:] Alice replaces odd positions in
      $x$ by $*$, Bob replaces even positions of $y$ by $*$. They then
      each output their resulting string, which computes
      $\SplitId_n(x,y)$ in the split model. This requires no
      communication.
    \item[One-out-of-two model lower bound:]  Consider
      $D^{\oot}_{1/4}(\SplitId_n,\mu)$, where~$\mu$ is the uniform distribution
      over $(x,y)$ such that $x_i = 0$ for odd $i$ and $y_i = 0$ for
      even $i$, and consider the communication matrix
      $\widetilde{M}_{\SplitId_n}$ of this reduced (but still total)
      problem. This reduces the number of inputs to $2^n$. Let $\Pi$
      be an optimal deterministic one-out-of-two protocol of
      communication $C=D^{\oot}_{1/4}(\SplitId_n,\mu)$.

      $\Pi$ partitions the communication matrix
      $\widetilde{M}_{\SplitId_n}$ with striped rectangles: in any
      given rectangle, the output of the one-out-of-two protocol can
      depend on either the row or on the column, but not both. But for
      our problem, every cell of the communication matrix has a
      different output, so any rectangle of width and height both at
      least $2$ makes an error in at least half its cells.

      A rectangle of width or height at most $1$ contains at most
      $2^{n/2}$ elements, therefore at most $2^{C+n/2}$ elements are
      covered by a rectangle that makes less than half error on its
      elements. Therefore at least $2^n - 2^{C+n/2}$ inputs are
      covered by rectangles with at least $1/2$ error, so $\Pi$ makes
      error at least $2^{-n}\cdot \frac 1 2 \parens*{ 2^n - 2^{C+n/2}
      }$. This error has to be less than $\frac 1 4$, so:

      \[ \frac 1 4 \geq 2^{-n}\cdot \frac 1 2 \parens*{ 2^n - 2^{C+n/2}
      } \Rightarrow C \geq n/2 -1 \]

      Which completes our proof that $R^{\oot}_{1/4}(\SplitId_n) \geq D^{\oot}_{1/4}(\SplitId_n,\mu)
      \in \Omega(n)$.
  \end{description}
\end{proof}

\subsubsection{The XOR model}

In our final model, the players both output a $k$ bit string at the
end of the protocol. A computation correctly computes the value of $f(x,y)$
 when the bit-wise XOR of the two strings is equal to $f(x,y)$.


\xormodel*

The XOR model is separated from the one-out-of-two model by the
following function:

\begin{definition}
\label{def:xor}
$\XOR_n : \{0,1\}^n \times \{0,1\}^n  \rightarrow  \{0,1\}^n $ is defined by 
 $\XOR_n(x,y) =  (x_i \oplus y_i)_{i \in [n]}$.
\end{definition}

\begin{figure}[ht]
  \center
  \includegraphics[page=11]{comm_matrices}
  \label{fig:xor}
  \caption{The communication matrix of $\XOR_3$}
\end{figure}

\begin{theorem}
\label{sep:oot-xor}
  $\forall f : \cX \times \cY \rightarrow \cZ$ with $k=  \ceil*{ \log \card{\cZ} } $  and $\epsilon > 0$ ,
\begin{gather*}
  R_\epsilon^\xor(f) \leq R_\epsilon^\spt(f) \leq  R_\epsilon^\oot(f) \leq R_\epsilon^\uni(f) \leq  R_\epsilon^\xor(f) + k, \quad \text{and} \\
    D^\xor(\XOR_n) =0, \quad \quad
    R^\spt_\epsilon(\XOR_n)  \in \Omega(n).
    \end{gather*}
\end{theorem}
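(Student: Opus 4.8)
The chain $R_\epsilon^\xor(f)\le R_\epsilon^\spt(f)\le R_\epsilon^\oot(f)\le R_\epsilon^\uni(f)\le R_\epsilon^\xor(f)+k$ is routine, and I would prove it exactly as the analogous chains in the earlier theorems, running the same communication protocol with different output mechanisms: replacing every $*$ in a split output by $0$ turns a split protocol into an XOR protocol (on each bit of a correctly computed cell exactly one player outputs a non-$*$ value, so the bitwise XOR recovers the weave); a one-out-of-two protocol is the special case of a split protocol in which the silent player outputs $*^k$; a unilateral protocol is a one-out-of-two protocol; and an XOR protocol becomes a unilateral (Bob-)protocol by having Alice transmit her $k$-bit output at the end. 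The equality $D^\xor(\XOR_n)=0$ is immediate: with no communication Alice outputs $x$ and Bob outputs $y$, and $x\oplus y=\XOR_n(x,y)$. So the real content of the theorem is the lower bound $R_\epsilon^\spt(\XOR_n)\in\Omega(n)$, which I would establish as follows.

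First I would pass to the distributional setting under the uniform distribution $\mu$ on $\zo^n\times\zo^n$: a randomized split protocol with worst-case error $\epsilon$ has average error at most $\epsilon$ under $\mu$, so fixing its randomness yields a deterministic split protocol of the same cost $C$ whose output agrees with $\XOR_n$ on at least a $(1-\epsilon)$-fraction of the $2^{2n}$ inputs. A deterministic protocol of cost $C$ partitions $\zo^n\times\zo^n$ into at most $2^C$ leaf rectangles, and within a leaf $A\times B$ Alice's output is a function $a\colon A\to\{0,1,*\}^n$ of her input alone and Bob's is a function $b\colon B\to\{0,1,*\}^n$ of his. The heart of the argument, and the step I expect to be the main obstacle, is a bound on how many cells of a single leaf rectangle a split protocol can get right: split leaf rectangles are structurally more complicated than the monochromatic rectangles of the open model or the striped ones of the one-out-of-two model, because the positions at which a player outputs $*$ can depend on the input, so one must first observe that on correct cells the two $*$-patterns are forced to be complementary.

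\textbf{Key lemma.} In any leaf rectangle $A\times B$ of a split protocol, at most $2^{n/2}\sqrt{|A|\,|B|}$ cells $(x,y)$ satisfy $a(x)\weave b(y)=x\oplus y$. To prove it, set $U_x=\{i:a(x)_i\neq *\}$ and $V_y=\{i:b(y)_i\neq *\}$. On a correct cell, on every bit exactly one player is non-$*$, so $V_y=[n]\setminus U_x$; hence every correct cell lies in some $A_S\times B_{[n]\setminus S}$, where $A_S=\{x:U_x=S\}$ and $B_T=\{y:V_y=T\}$ partition $A$ and $B$. For a fixed $x\in A_S$, correctness forces $y_i=x_i\oplus a(x)_i$ for each $i\in S$, leaving at most $2^{n-|S|}$ correct $y$; symmetrically a fixed $y\in B_{[n]\setminus S}$ admits at most $2^{|S|}$ correct $x$. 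Thus the number of correct cells in $A_S\times B_{[n]\setminus S}$ is at most $\min(|A_S|2^{n-|S|},|B_{[n]\setminus S}|2^{|S|})\le 2^{n/2}\sqrt{|A_S|\,|B_{[n]\setminus S}|}$, and summing over $S$ and applying Cauchy--Schwarz (using that $S\mapsto[n]\setminus S$ is a bijection) gives $2^{n/2}\sqrt{|A|\,|B|}$.

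Finally, summing the lemma's bound over the at most $2^C$ leaf rectangles $A_j\times B_j$ and applying Cauchy--Schwarz once more, with $\sum_j|A_j||B_j|=2^{2n}$, the total number of correct cells is at most $2^{n/2}\sum_j\sqrt{|A_j||B_j|}\le 2^{n/2}\sqrt{2^C\cdot 2^{2n}}=2^{3n/2+C/2}$. Since this must be at least $(1-\epsilon)2^{2n}$, we get $C\ge n+2\log(1-\epsilon)$, which is $n-O(1)$, hence $\Omega(n)$, whenever $\epsilon$ is bounded away from $\tfrac12$ (compare the rectangle-counting argument in the proof of \cref{thm:splitid-oot-spt-sep}).
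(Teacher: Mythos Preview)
Your proof is correct, and in fact it is cleaner and quantitatively stronger than the paper's argument. The paper also passes to the uniform distribution and a deterministic protocol, but then argues that any leaf rectangle containing at least $2^{3n/2+1}$ cells must err on at least half of them (handling the possibility that the split pattern varies across rows and columns by a somewhat ad hoc case analysis on the largest consistent sub-rectangle). This yields only $C \ge n/2 - 2$. Your key lemma is more refined: by partitioning each leaf rectangle according to the split pattern $S$ and using the two-sided counting bound $\min(|A_S|2^{n-|S|},|B_{[n]\setminus S}|2^{|S|})\le 2^{n/2}\sqrt{|A_S||B_{[n]\setminus S}|}$ followed by Cauchy--Schwarz, you bound the number of correct cells in every leaf rectangle (not just large ones) and arrive at $C \ge n + 2\log(1-\epsilon)$, essentially twice the paper's bound. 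The Cauchy--Schwarz step also absorbs the varying-split issue automatically, whereas the paper treats it as a separate case.
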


\begin{proof}[Proof of \cref{sep:oot-xor}]
  \begin{description}
    \item[XOR model upper bound:] Alice and Bob can just each output
      their input, which requires no communication.
    \item[Split model lower bound:] Let us consider
      $D^{\spt}_{1/4}(\XOR_n,\mu)$ where $\mu$ is the uniform
      distribution. Let $\Pi$ be an optimal deterministic
      one-out-of-two protocol of communication
      $C=D^{\spt}_{1/4}(\XOR_n,\mu)$.

      $\Pi$ partitions the communication matrix $M_{\XOR_n}$ into
      $2^C$ rectangles. Let us first assume that in each rectangle,
      each bit of the output is output by a fixed player. We will see
      later that our argument still holds without this assumption.

      In each of the $2^C$ rectangles, one of the players has to
      output less than $n/2$ bits of the output. Let us consider a
      rectangle where Bob outputs at most half the bits of the
      output. Then, on a given row of this rectangle, there can be at
      most $2^{n/2}$ different outputs. But the $\XOR_n$ problem is
      such that on a given row, all cells have a different output. We
      will argue that this bounds the size of the rectangles that do
      not make a lot of error.

      Let a rectangle contain at least $2^{3n/2+1}$ elements. Since a row or column contains at most $2^n$ elements, such a rectangle contains at least $2^{n/2+1}$ rows and columns. Therefore, the player
      that outputs at most half the bits of the output in the split
      model will output at most $2^{n/2}$ different strings on a given
      row or column that contains more than $2^{n/2+1}$ different
      values, so the rectangle has error on at least half of its
      elements.

      If the players do not always split the outputs bits in the same
      way, consider the largest set of rows such that Alice outputs a
      given subset of the output bits, and the largest set of columns
      such that Bob outputs a given subset of the output bits. If the
      sets of output bits that Alice and Bob output on those rows and
      columns are not the complement of each other, the rectangle is
      in error on at least half of its elements. If the sets correctly
      partition the output bits, we do the same argument as before:
      let us assume that Bob outputs at most half the bits in the
      subrectangle we defined. Then no more than $2^n$ cells can be
      correct in any row of this subrectangle, and rows outside of the
      subrectangle are also mostly error, therefore the rectangle has
      error on at least half of its elements.

      At most $2^{C+3n/2+1}$ elements are in rectangles with error
      strictly less than half, so the error made by the protocol is at
      least $\frac 1 2 \cdot 2^{-2n} \parens*{ 2^{2n} - 2^{C+3n/2+1} }$. The error has to be less than $\frac 1 4$, so:
      \[C \geq n/2 -2\]
      Which completes our proof that $R^{\spt}_{1/4}(\XOR_n) \geq
      D^{\spt}_{1/4}(\XOR_n,\mu) \in \Omega(n)$.
  \end{description}
  
\end{proof}

\subsection{Relations between models}

The next proposition summarizes the relations between models seen in
\cref{thm:eqout-open-loc-sep,thm:id-loc-uni-sep,thm:condid-uni-oot-sep,thm:splitid-oot-spt-sep,sep:oot-xor}.

\begin{proposition}
  \label{prop:hierarchy}
  $\forall f : \cX \times \cY \rightarrow \cZ$ with $k=  \ceil*{ \log \card{\cZ} } $  and $\epsilon > 0$ we have:
  \begin{align}
    \label{subprop:hierarchy}
    R_\epsilon^\op(f) \geq R_\epsilon^\loc(f) 
            & \geq \max\parens*{R_\epsilon^\ali(f), R_\epsilon^\bob(f) } \\
            & \geq \min\parens*{R_\epsilon^\ali(f), R_\epsilon^\bob(f) }
              = R_\epsilon^\uni(f) \notag \\
            & \geq R_\epsilon^\oot(f) \geq R_\epsilon^\spt(f) \geq R_\epsilon^\xor(f) \notag \\
    \label[subproposition]{subprop:loc-leq-ali-bob}
    R_{2\epsilon}^\loc(f) & \leq R_\epsilon^\ali(f) + R_\epsilon^\bob(f) \\
    \label{subprop:op-leq-uni-log-Z}
    R_\epsilon^\op(f) & \leq R_\epsilon^\uni(f) + k\\
    \label{subprop:op-leq-oot-log-Z}
    R_\epsilon^\op(f) & \leq R_\epsilon^\oot(f) + k +1\\
    \label{subprop:oot-leq-spt-log-Z}
    R_\epsilon^\oot(f) & \leq R_\epsilon^\spt(f) + \lceil k / 2 \rceil + 1.\\
    \label{subprop:uni-leq-xor-log-Z}
    R_\epsilon^\uni(f) & \leq R_\epsilon^\xor(f) + k.
  \end{align}
  The same statements hold for deterministic communication and
  communication with private randomness only. All statements except
  \cref{subprop:loc-leq-ali-bob} also hold for
  relations and nondeterministic communication.
\end{proposition}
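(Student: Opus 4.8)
The plan is to verify each numbered inequality essentially independently, since they are all of one of two shapes: either ``a protocol for model $\cM_1$ is also a protocol for model $\cM_2$'' (giving a $\leq$ in one direction, hence a $\geq$ when read as in \cref{subprop:hierarchy}), or ``a protocol for the weaker model can be upgraded to a protocol for the stronger model at an additive cost depending on $k$.'' The first type is proved by exhibiting the output mappings of the new model directly from the old ones; the second type is proved by appending a short message to the transcript. All of these arguments have in fact already appeared in the separation theorems \cref{thm:eqout-open-loc-sep,thm:id-loc-uni-sep,thm:condid-uni-oot-sep,thm:splitid-oot-spt-sep,sep:oot-xor}; the proposition merely collects them, so the proof should be a short recapitulation with pointers back to those theorems.

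Concretely, for \cref{subprop:hierarchy} I would argue the chain of containments one link at a time. For $R_\epsilon^\op \geq R_\epsilon^\loc$: an open protocol has a mapping $\cO : \cT_\pi \times \cR^\pub \to \cZ$, and setting $\cO_\ali = \cO_\bob = \cO$ (ignoring the extra arguments) gives a local protocol of the same cost. For $R_\epsilon^\loc \geq \max(R_\epsilon^\ali, R_\epsilon^\bob)$: a local protocol already provides both $\cO_\ali$ and $\cO_\bob$ computing $f$, so it is in particular an Alice-protocol and a Bob-protocol. The equality $\min(R_\epsilon^\ali, R_\epsilon^\bob) = R_\epsilon^\uni$ is just the definition of the unilateral model as the minimum of the two. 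For $R_\epsilon^\uni \geq R_\epsilon^\oot$: a unilateral (say Alice) protocol is turned into a one-out-of-two protocol by having Alice output $\cO_\ali(\dots)$ and Bob always output $\top$. For $R_\epsilon^\oot \geq R_\epsilon^\spt$: given a one-out-of-two protocol, whichever player holds the answer outputs all $k$ bits and the other outputs $*^k$; the weave of these equals $f(x,y)$. For $R_\epsilon^\spt \geq R_\epsilon^\xor$: a split output $a \in \{0,1,*\}^k$ on Alice's side and $b$ on Bob's side is converted to XOR outputs by replacing each $*$ by $0$ — if the weave promise held then exactly one of $a_i, b_i$ is a bit and the other becomes $0$, so the XOR of the resulting strings equals $a \weave b = f(x,y)$. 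Each of these preserves the success probability pointwise, hence the error bound.

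For \cref{subprop:loc-leq-ali-bob}, I would run an Alice-protocol with error $\epsilon$ and then a Bob-protocol with error $\epsilon$ sequentially; Alice learns $f(x,y)$ from the first (with probability $\geq 1-\epsilon$) and Bob from the second, so by a union bound both learn the correct value with probability $\geq 1 - 2\epsilon$, giving a local protocol of cost $R_\epsilon^\ali(f) + R_\epsilon^\bob(f)$. For \cref{subprop:op-leq-uni-log-Z,subprop:op-leq-oot-log-Z,subprop:oot-leq-spt-log-Z,subprop:uni-leq-xor-log-Z}, the pattern is: run the cheaper protocol, then have the appropriate player(s) transmit what they would have output so that the stronger output condition is met. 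For \cref{subprop:op-leq-uni-log-Z}, Alice (or Bob) sends her $k$-bit output and it becomes part of the transcript, so an external observer can read off $f(x,y)$. For \cref{subprop:op-leq-oot-log-Z}, one extra bit identifies which player holds the answer (as in \cref{prop:oot-one-speaker}) and then $k$ further bits carry it. For \cref{subprop:oot-leq-spt-log-Z}, one bit announces whether Alice or Bob carries at most $\lfloor k/2 \rfloor$ of the non-$*$ bits — but this must be done carefully since the players might not split the output into exactly $k$ bits; I would follow exactly the unambiguous scheme in the proof of \cref{thm:splitid-oot-spt-sep}, where Alice declares (one bit) whether she would output more than $\lfloor k/2 \rfloor$ non-$*$ symbols and the designated player then sends a fixed-length block of $\lfloor k/2 \rfloor$ bits (padding with $0$s), after which the receiving player fills in its $*$ positions. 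For \cref{subprop:uni-leq-xor-log-Z}, running an XOR protocol and then having Bob send his $k$-bit output $b$ lets Alice compute $a \oplus b = f(x,y)$, yielding a unilateral (Alice) protocol.

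Finally I would remark that none of the above used public randomness in an essential way beyond what the underlying protocols already use, so every statement holds verbatim for deterministic protocols and for private-coin protocols; and that every argument except \cref{subprop:loc-leq-ali-bob} only relied on the existence of a correct output, not on its uniqueness, so those carry over to relations and to the nondeterministic setting — whereas \cref{subprop:loc-leq-ali-bob} fails for relations, as witnessed by $f(x,y) = \{x,y\}$ discussed after \cref{thm:id-loc-uni-sep}. The main obstacle, such as it is, is purely bookkeeping: getting the additive constants ($k$ versus $k+1$ versus $\lceil k/2\rceil + 1$) right and handling the split model's subtlety that the number of $*$ symbols output is not guaranteed, so the ``who speaks'' decision must be made unambiguously without reference to that count.
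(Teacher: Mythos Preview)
Your proposal is correct and matches the paper's approach: the paper does not give a standalone proof of this proposition but simply states that it summarizes the relations already established in \cref{thm:eqout-open-loc-sep,thm:id-loc-uni-sep,thm:condid-uni-oot-sep,thm:splitid-oot-spt-sep,sep:oot-xor}, and your write-up is precisely a recapitulation of those arguments with the correct pointers back. The only places you add detail beyond what the paper spells out are the $\oot \to \spt$ and $\spt \to \xor$ conversions (replacing $\top$ by $*^k$ and $*$ by $0$ respectively), and these are exactly the natural translations the paper leaves implicit.
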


\Cref{prop:hierarchy} shows that the models form a natural
hierarchy and can be ordered from most to least communication
intensive.
We also summarize this hierarchy in \cref{fig:hierarchy}, in the main text.
This figure also displays separating problems other than those in this section, in Appendix.

\section{Summary of our results}

In this section, we summarize the results in this paper. 
\Cref{tab:all-problems} summarizes the problems we have studied which show gaps between the different 
output models.
\Cref{tab:cc-hxor} summarizes the bounds on $\GapMajX$ in various models.
\Cref{tab:all-schemes} summarizes error reduction bounds and derandomization.

\begin{table}[htbp]
\centering
\newcommand{\vpad}{\rule[-0.4em]{0pt}{1.4em}}
\begin{tabularx}{\linewidth}{|c|Z{1.425}|Z{0.65}|Z{0.75}|Z{1.05}|Z{1.125}|}
\hline
& open & local & unilateral & 1-out-of-2 & XOR \\ \hline
$\EQout_n$ & $R^\cM_{\scriptscriptstyle 1/3} \in \Theta(n)$ &\multicolumn{4}{c|}{$R^\cM_{\scriptscriptstyle 1/3} \in \Theta(1)$} \\ \hline
\vpad
$t-\INT_n$ & $R^\cM_{\scriptscriptstyle 1/3} \in \Theta(t\cdot \log (n))$ &\multicolumn{4}{c|}{$R^\cM_{\scriptscriptstyle 1/3} \in \Theta(t)$} \\ 
\hline
\vpad
$\id^\ali_n$ & \multicolumn{2}{c|}{$R^\cM_{\scriptscriptstyle 1/3} \in \Theta(n)$} &\multicolumn{3}{c|}{$D^\cM = 0$} \\ \hline
\vpad
$\CondId_n$ & \multicolumn{3}{c|}{$R^\cM_{\scriptscriptstyle 1/3} \in \Theta(n)$} &\multicolumn{2}{c|}{$D^\cM = 2$} \\ \hline
\vpad
$\MAX_n$ & \multicolumn{3}{c|}{$R^\cM_{\scriptscriptstyle 1/3} \in \Theta(n)$} &\multicolumn{2}{c|}{$R^\cM_{\scriptscriptstyle 1/3} \in \Theta(\log(n)$} \\ \hline
\vpad
$t-\FtFD_n$ & \multicolumn{3}{c|}{$R^\cM_{\scriptscriptstyle 1/3} \in \Theta(\log(n))$} &\multicolumn{2}{c|}{$R^\cM_{\scriptscriptstyle 1/3} \in \Theta(\log(t)+\log\log(n))$} \\ \hline
\vpad
$\XOR_n$ & \multicolumn{4}{c|}{$R^\cM_{\scriptscriptstyle 1/3} \in \Theta(n)$} & $D^\cM = 0$\\ \hline
\vpad
$\GapMajX[N,k,{\scriptscriptstyle 1/3}]$ & \multicolumn{4}{c|}{$R^\cM_{\scriptscriptstyle 1/3} \in \Theta(k)$} & $R^\cM_{\scriptscriptstyle 1/3} = 0$ \\ \hline
\vpad
$\GapMajX[N,k,{\scriptscriptstyle 2/5}]$ & \multicolumn{4}{c|}{$R^\cM_{\scriptscriptstyle 1/3} \in \Theta(k)$} & $R^\cM_{\scriptscriptstyle 1/3} \in O(1)$ \\ \hline
\end{tabularx}
\caption{Summary of the communication complexities of our separating
  problems in all models. The definitions of the problems and the proofs are in \cref{app:outputs-models,app:separating-problems}. 
In this table, $n$ is the input length, $k$ is the output length, $ \mdl$ is an output model, 
 $\mdl \in \set{\op, \loc, \ali, \bob, \uni, \oot, \xor }$, and~$t$ is the Hamming weight of an instance.
}
\label{tab:all-problems}
\end{table}

\begin{table}[tbp]
\centering
\newcommand{\vpad}{\rule[-0.4em]{0pt}{1.4em}}
\newcommand{\vvpad}{\rule[-0.75em]{0pt}{2.0em}}

\begin{tabularx}{0.8\linewidth}{|c|c|Z{1.0}|}
  \hline
  \multicolumn{2}{|c}{} & \multicolumn{1}{c|}{ \textbf{Upper bounds} } \\
  \hline
  \multirow{4}{*}{$\epsilon' \geq \epsilon$} 
        & \vpad $R^\xor_{\epsilon'}$ & $0$ \\ \cline{2-3}
        & \vpad $R^{\xor,\priv}_{\epsilon'}$ & $ \log(N)$ \\ \cline{2-3}
        & \vpad $R^{\op}_{\epsilon'}$ &     $ 2k$ \\ \cline{2-3}
        & \vpad $R^{\op,\priv}_{\epsilon'}$ & $ 2k+\log(N)$ \\ \hline
  \multirow{1}{*}{$0 < \epsilon' < \epsilon$}
        & \vvpad $R^\xor_{\epsilon'}$ 
            & $ O\parens*{
            \min\parens*{
            C_{\epsilon,\epsilon'}, N + \log\parens*{ \frac 1 {\epsilon'} }}
            }$\\ \hline 
  \multirow{1}{*}{$ \epsilon'=0$}            
        & \vpad $D^\uni$ & $ (2\epsilon N+1)k$\\ \hline
  
\end{tabularx}
\caption{Upper bounds on $\GapMajX$, proofs in \cref{app:cc-hxor}. In this table, $N,k,\epsilon$ are the parameters
of the Gap Majority problem,  and
$\epsilon'$ is the error parameter.
}
\label{tab:cc-hxor}
\end{table}

\begin{table}[htbp]
\centering
\newcommand{\vpad}{\rule[-0.45em]{0pt}{1.5em}}
\newcommand{\vvpad}{\rule[-0.85em]{0pt}{2.15em}}
\begin{tabularx}{\linewidth}{|c|Yr|}
\hline
\multicolumn{3}{|c|}{\textbf{Error reduction}} \\ \hline
model & Upper bounds &  
\hspace{-3cm}(condition)
\\ \hline
open &\multirow{3}{*}{$R_{\epsilon'}(f) \leq C_{\epsilon,\epsilon'} \cdot R_\epsilon(f)$} &\\
local & & \\ 
unilateral & & \\ \hline
1-out-of-2 
& \vpad {$R_{\epsilon'}(f) \leq C_{\epsilon,\epsilon'} (R_\epsilon(f) + 1) + C'_{\epsilon,\epsilon'}$}& \\ \hline
split
& \vpad {$ R_{\epsilon'}(f) \leq C_{\epsilon, \epsilon'} 
R_\epsilon(f) 
   + O \parens*{ C_{\epsilon, \epsilon'} } $}& \\ \hline
\multirow{2}{*}{XOR}
& \vpad {$ R_{\epsilon'}(f) \leq C_{\epsilon, \epsilon'} 
R_\epsilon(f) 
   + O \parens*{ C_{\epsilon, \epsilon'} } $}& \\ \cline{2-3}
& \vpad {$R_{\epsilon'}(f) \leq 50 \ln \parens*{\frac {12} {\epsilon'} }
R_\epsilon(f) +C_{\epsilon, \epsilon'} 
R_\epsilon(g) + O \parens*{ C_{\epsilon, \epsilon'}+ \log (k) }$}
 & {$(f=g^{\otimes k})$}\\ \hline
\multicolumn{3}{|c|}{\textbf{Derandomization}} \\ \hline
model & Upper bounds & 
\hspace{-3cm}(condition)
\\ \hline
open &
\multirow{2}{*}{
{$D(f) \in O\parens*{ 2^R \parens*{R + \log \parens[\big]{ \frac 1 {\frac 1 2 - \epsilon}}}}$}} & \\
local & &\\ \hline
unilateral 
& \vvpad {$D(f) \in O\parens*{ 2^R \parens*{R + \log \parens[\big]{ \frac 1 {\frac 1 2 - \epsilon}}}}$} &\\ \hline
\multirow{2}{*}{1-out-of-2} 
& \vvpad {$D(f) \in O\parens*{ 2^R \parens*{R + \log \parens[\big]{ \frac 1 {\frac 1 2 - \epsilon}}}}$} &\\ \cline{2-3}
& \vvpad {$D(f) \in O\parens*{ 2^R \parens*{R + \log \parens[\big]{ \frac 1 {\frac 1 2 - \epsilon}}} + \log(k)}$} &\\ \hline
\multirow{2}{*}{split} 
& \vvpad {$D(f) \in O\parens*{ 2^R \parens*{R + \log \parens[\big]{ \frac 1 {\frac 1 2 - \epsilon}}} + k}$} &\\ \cline{2-3}
& \vvpad {$D(f) \in O\parens*{ 2^R \parens*{R + \log \parens[\big]{ \frac 1 {\frac 1 2 - \epsilon}}} + 2^R \parens*{\frac 1 2 - \epsilon }^{-2} k}$} &\\ \hline
XOR &
\vvpad {$D(f) \in O\parens*{ 2^R \parens*{R + \log \parens[\big]{ \frac 1 {\frac 1 2 - \epsilon}}} + 2^R \parens*{\frac 1 2 - \epsilon }^{-2} k}$} &\\ \hline
\end{tabularx}
\caption{Summary of our error reduction and derandomization schemes. 
In all statements above, $f$~is a function  whose output length is $k$,
$\epsilon$ is the starting error parameter, $\epsilon'$ is the target error parameter,
$R=R_\epsilon^\mdl(f)$,
 $C_{\epsilon,\epsilon'} \in O\parens*{
\epsilon {
\parens*{ \frac 1 2 - \epsilon } ^{-2}
}{\log \parens*{\frac 1 {\epsilon'}}}
}$ and
$C'_{\epsilon,\epsilon'} \in O\parens*{\log\parens*{ \frac 1
{\epsilon'} } + \log \parens*{ \frac 1 { \frac 1 2 - \epsilon } }}$.
}
\label{tab:all-schemes}
\end{table}

The upper bounds on the \emph{Gap Majority} problem,
are summarized in \cref{tab:cc-hxor}.
 We conjecture a matching lower bound to our
stated deterministic $O(\epsilon N k)$ upper bound. Studying the
communication complexity of this problem is of theoretical interest,
as we have seen in this paper that fundamental results in
communication complexity, namely error reduction and derandomization,
are related to the $\GapMajX$ problem in the XOR model. Improving
the deterministic upper bound on $\GapMajX$ would yield a better
derandomization result through \cref{thm:derand-xor}. 
Similarly, improving the randomized upper bounds could improve
error reduction through
\cref{lem:amp-xor-hxor}. Conversely, considering that we have an
upper bound of $\log(N)$ on the private coin XOR communication
complexity of $\GapMajX$, proving a $\Omega(Nk)$ lower bound on its
deterministic communication complexity would indicate that our
derandomization theorem in the XOR model
(\cref{thm:derand-xor}) is close to tight.



\section{The weak partition bound}
\label{app:wprt}

The weak partition bound can be used to obtain lower bounds on the open model. We use it to show that  this model is very sensitive to the number of ``non-trivial'' or ``typical'' outputs,  
 those that occur frequently enough, in a sense that is made precise in \cref{def:xi}.

\begin{definition}[Weak partition bound~\cite{FontesJKLLR16}]
\label{rect-def:wprt-nonbool-total}
We define (using the notation $\beta = \sum_{x,y} \beta_{x,y}$)
\begin{align}
\wprt^\mu_\epsilon(f) = &\max_{\alpha \geq 0, \, \beta_{xy} \geq 0}  && (1-\epsilon)\alpha - \beta 
	\notag
	\\
         &\text{subject to :} 
        &&  \alpha\mu(R\cap f^{-1}(z)) - \beta(R) \leq 1 &\quad \forall R,z,
        \label{rect-wprt-constr-1}
        \\
&&&  \alpha\mu_{xy}-\beta_{xy} \geq 0 & \forall (x,y).
	\label{rect-wprt-constr-pos}
\end{align}
The non-distributional weak partition bound of $f$ is
$\wprt_\epsilon(f) = \max_\mu \wprt^\mu_\epsilon(f)$.
\end{definition}

Note that the definition we have here is slightly
different from the one given by Fontes et al~\cite{FontesJKLLR16}. The
two formulations are equivalent for Boolean functions, which was the
setting considered in that paper.

\begin{proposition}[\cite{JainK10,FontesJKLLR16}]
Let $0 < \epsilon < 1/2$ and let $f: \cX \times \cY \rightarrow \cZ$ be a function. Then,
\[ \log\parens*{\wprt_\epsilon(f)}
\leq \log\parens*{\prt_\epsilon(f)}
\leq R_\epsilon^\op(f).\]

The right-hand side is from~\cite{JainK10} and the
left-hand side from~\cite{FontesJKLLR16}.
\end{proposition}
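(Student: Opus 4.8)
The plan is to establish the two inequalities independently, each by adapting the corresponding Boolean argument of~\cite{JainK10} (resp.~\cite{FontesJKLLR16}) to a non-Boolean codomain $\cZ$ and to the variant of $\wprt$ fixed in \cref{rect-def:wprt-nonbool-total}. It is convenient to recall the (non-Boolean) partition bound of~\cite{JainK10}: $\prt_\epsilon(f)$ is the value of the linear program $\min \sum_{R,z} w_{R,z}$ over real variables $w_{R,z} \ge 0$ indexed by rectangles $R \subseteq \cX \times \cY$ and outputs $z \in \cZ$, subject to $\sum_{R \ni (x,y)} w_{R,f(x,y)} \ge 1-\epsilon$ and $\sum_{z}\sum_{R \ni (x,y)} w_{R,z} = 1$ for every $(x,y)$. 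Nothing in this program uses $\card{\cZ}=2$, so it is well defined as stated.

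For the right-hand inequality, I would turn an optimal open protocol $\Pi$ for $f$, of cost $C = R_\epsilon^\op(f)$ with output mapping $\cO : \cT_\pi \times \cR^\pub \to \cZ$ (\cref{model:open}), into a feasible point of this LP. For each fixing of the randomness $(r,r_\ali,r_\bob)$ the protocol is deterministic and its leaves induce a partition of $\cX \times \cY$ into at most $2^C$ rectangles, each carrying an output $\cO(t_\pi,r)$; set $w_{R,z}$ to be the probability, over $(r,r_\ali,r_\bob)$, that $R$ occurs as a leaf rectangle with associated output $z$. Then $\sum_{R,z} w_{R,z} \le 2^C$ since there are at most $2^C$ leaves per fixing; $\sum_{z}\sum_{R \ni (x,y)} w_{R,z} = 1$ since $(x,y)$ lies in exactly one leaf rectangle per fixing; and $\sum_{R \ni (x,y)} w_{R,f(x,y)} = \Pr_{r,r_\ali,r_\bob}[\cO(t_\pi,r) = f(x,y)] \ge 1-\epsilon$ by correctness in the open model. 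Hence $\prt_\epsilon(f) \le 2^{R_\epsilon^\op(f)}$, which is the claimed bound after taking logarithms; the left-hand inequality will then immediately sharpen it.

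For the left-hand inequality, since $\wprt_\epsilon(f) = \max_\mu \wprt_\epsilon^\mu(f)$ it suffices to prove $\wprt_\epsilon^\mu(f) \le \prt_\epsilon(f)$ for each $\mu$, which I would do by weak LP duality. Fix a feasible $(\alpha,(\beta_{xy}))$ for the $\wprt^\mu$ program of \cref{rect-def:wprt-nonbool-total} and an optimal $(w_{R,z})$ for the $\prt$ program above. Multiplying constraint~\eqref{rect-wprt-constr-1} by $w_{R,z} \ge 0$ and summing over $(R,z)$ gives $\sum_{R,z} w_{R,z} \ge \alpha \sum_{R,z} w_{R,z}\,\mu(R \cap f^{-1}(z)) - \sum_{R,z} w_{R,z}\,\beta(R)$. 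Re-indexing each sum by cells $(x,y)$: one has $\sum_{R,z} w_{R,z}\,\mu(R \cap f^{-1}(z)) = \sum_{(x,y)} \mu_{xy} \sum_{R \ni (x,y)} w_{R,f(x,y)} \ge (1-\epsilon)\sum_{(x,y)} \mu_{xy} = 1-\epsilon$ by the correctness constraint of $\prt$, and $\sum_{R,z} w_{R,z}\,\beta(R) = \sum_{(x,y)} \beta_{xy} \sum_{z}\sum_{R \ni (x,y)} w_{R,z} = \sum_{(x,y)} \beta_{xy} = \beta$ by the partition constraint of $\prt$. Combining, $\prt_\epsilon(f) = \sum_{R,z} w_{R,z} \ge (1-\epsilon)\alpha - \beta$; maximizing over feasible $(\alpha,\beta)$ and then over $\mu$ yields $\prt_\epsilon(f) \ge \wprt_\epsilon(f)$, hence the claim after taking logarithms. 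Note that constraint~\eqref{rect-wprt-constr-pos} plays no role in this direction; it only matters for relating $\wprt$ to finer quantities.

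The only real subtlety — the ``obstacle'', such as it is — is bookkeeping the non-Boolean codomain consistently across the two definitions: one must keep the partition constraint of $\prt$ as $\sum_z \sum_{R \ni (x,y)} w_{R,z} = 1$ with the inner sum ranging over \emph{all} of $\cZ$, and read $\mu(R \cap f^{-1}(z))$ in \cref{rect-def:wprt-nonbool-total} as $\sum_{(x,y) \in R,\, f(x,y)=z} \mu_{xy}$, so that summing it against $w_{R,z}$ telescopes cleanly to $\sum_{(x,y)} \mu_{xy} \sum_{R \ni (x,y)} w_{R,f(x,y)}$. Once these conventions are pinned down, both steps follow the Boolean proofs of~\cite{JainK10,FontesJKLLR16} essentially line for line, which is why the statement is attributed to those works.
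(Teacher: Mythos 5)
Your proof is correct, and it reconstructs essentially the standard arguments from the cited works (which the paper invokes without re-proving). The right-hand inequality is proved by the usual translation of an open protocol into a feasible point of the partition-bound LP (expected number of leaves bounds the objective, per-fixing partitions give the equality constraint, and the open-model correctness gives the covering constraint), and the left-hand inequality follows by the weak-duality pairing of the $\wprt^\mu$ maximization program against a $\prt$-feasible $(w_{R,z})$, with the bookkeeping of $\mu(R\cap f^{-1}(z))$ against the non-Boolean codomain handled exactly as you describe; your observation that constraint~\eqref{rect-wprt-constr-pos} is not needed for this direction is also accurate.
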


We then introduce the notion of $\epsilon$-Minimum set of outputs with respect to a distribution $\mu$. 
Let us abuse notation and write $\mu(z)$ for $\mu(f^{-1}(z))$ when  there is no need to specify which $f$ we are implicitly referring to.
  
\begin{definition}
\label{def:xi}
Let $\cZ$ be the set of outputs of a function $f: \cX \times \cY \rightarrow \cZ$.

  Let us further consider that $\cZ = \set{z_1, z_2, \ldots, z_n}$ is
  sorted with respect to $\mu$, that is :
  \[i \leq j \Rightarrow \mu(z_i) \geq \mu(z_j).\]

  Then $\xi^\mu_\epsilon(f)$ is defined as:
  \[ \xi^\mu_\epsilon(f) = \min \set*{ k \;\Big|\, \sum_{i=1}^{k}\mu(z_i) \geq 1-\epsilon  }.\]
  
\end{definition}

\begin{theorem}
\label{thm:wprt-xi}
Let $0 < \epsilon < 1/2$, let $f: \cX \times \cY \rightarrow \cZ$ be a function and  let $\mu$ be a distribution over $\cX \times \cY$. Then,
\[ \xi^\mu_\epsilon(f) - 1 \leq \wprt^\mu_\epsilon(f). \]
\end{theorem}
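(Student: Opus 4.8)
The plan is to exhibit an explicit feasible solution of the linear program defining $\wprt^\mu_\epsilon(f)$ whose objective value is exactly $\xi^\mu_\epsilon(f) - 1$. Write $k = \xi^\mu_\epsilon(f)$, and let $z_1, \ldots, z_n$ be the outputs sorted so that $\mu(z_1) \geq \mu(z_2) \geq \cdots$. By minimality of $k$, we have $\sum_{i=1}^{k-1} \mu(z_i) < 1 - \epsilon$, which is the key inequality driving the bound. First I would set $\alpha = \frac{k-1}{1-\epsilon}$ (or a value of that flavor) and choose the $\beta_{xy}$ so that $\beta(R)$ simply "pays off" whatever mass $\alpha \mu(R \cap f^{-1}(z))$ can accumulate beyond $1$; the natural guess is to let $\beta$ be supported on the inputs mapping to the $k-1$ heaviest outputs, scaled so that constraint \eqref{rect-wprt-constr-1} is tight for rectangles contained in a single $f^{-1}(z_i)$ with $i < k$.

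More concretely, I would try $\beta_{xy} = \alpha \mu_{xy} \cdot \mathbf{1}[f(x,y) \in \{z_1,\ldots,z_{k-1}\}]$, so that constraint \eqref{rect-wprt-constr-pos} holds with equality on those inputs and as $\alpha\mu_{xy} \geq 0$ elsewhere. Then for any rectangle $R$ and output $z$: if $z \in \{z_1,\ldots,z_{k-1}\}$ the two terms $\alpha\mu(R\cap f^{-1}(z))$ and $\beta(R)$ partially cancel, and one checks $\alpha\mu(R\cap f^{-1}(z)) - \beta(R) \le \alpha\mu(R\cap f^{-1}(z)) - \alpha\mu(R \cap f^{-1}(z)) = 0 \le 1$; if instead $z \notin \{z_1,\ldots,z_{k-1}\}$, then $\mu(z) \le \mu(z_k) \le \mu(z_{k-1})$, and since $\beta(R) \ge 0$ we need $\alpha \mu(R \cap f^{-1}(z)) \le 1$, i.e. $\alpha \mu(z) \le 1$. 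This last point is where the choice of $\alpha$ matters: we need $\alpha \le 1/\mu(z_k)$. Combining with wanting $\alpha$ large to make the objective $(1-\epsilon)\alpha - \beta$ big, and noting $\beta = \alpha \sum_{i=1}^{k-1}\mu(z_i)$, the objective equals $\alpha\left(1 - \epsilon - \sum_{i=1}^{k-1}\mu(z_i)\right)$.

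The tension is then visible: the bracket $1 - \epsilon - \sum_{i=1}^{k-1}\mu(z_i)$ is positive (by minimality of $k$) but could be tiny, so a constant $\alpha$ would give a weak bound; I would instead pick $\alpha$ as large as the constraint $\alpha\mu(z_k) \le 1$ permits, but that alone may not suffice either. The cleaner route is probably to choose $\alpha$ so that the bracket times $\alpha$ equals $k-1$ — this requires $\alpha = (k-1)/(1-\epsilon-\sum_{i<k}\mu(z_i))$ — and then separately verify $\alpha\mu(z) \le 1$ for all $z\notin\{z_1,\dots,z_{k-1}\}$; here one uses that each of $z_1,\dots,z_{k-1}$ has mass $\ge \mu(z_k) \ge \mu(z)$, so $\sum_{i<k}\mu(z_i) \ge (k-1)\mu(z_k) \ge (k-1)\mu(z)$, hence $1 \ge 1-\epsilon-\sum_{i<k}\mu(z_i) + (k-1)\mu(z) - \epsilon \ge \dots$; I'd grind this inequality carefully but it should come out to exactly $\alpha\mu(z) \le 1$. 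With that, the objective is $(1-\epsilon)\alpha - \beta = \alpha(1-\epsilon-\sum_{i<k}\mu(z_i)) = k-1 = \xi^\mu_\epsilon(f)-1$, giving $\wprt^\mu_\epsilon(f) \ge \xi^\mu_\epsilon(f)-1$.

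The main obstacle I anticipate is getting the arithmetic on $\alpha$ exactly right so that constraint \eqref{rect-wprt-constr-1} holds for the "light" outputs $z$ simultaneously with the objective hitting $k-1$ on the nose; the inequality $\alpha\mu(z)\le 1$ needs the ordering of $\mu(z_i)$ in an essential way, and a sloppy choice of $\alpha$ loses a constant factor. A secondary check is confirming feasibility for rectangles that straddle several of the heavy outputs, but since the bound on $\alpha\mu(R\cap f^{-1}(z))-\beta(R)$ is argued per output $z$, this reduces to the two cases above and should not cause trouble.
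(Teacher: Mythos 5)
Your plan identifies the right strategy (exhibit a feasible dual solution with objective $\geq k-1$), but the specific choice of $\beta$ is too aggressive, and this is what forces the subsequent choices of $\alpha$ into contradiction. Let me be concrete about where it breaks.

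You set $\beta_{xy} = \alpha\,\mu_{xy}\cdot\mathbf{1}[f(x,y)\in\{z_1,\dots,z_{k-1}\}]$, which cancels the \emph{entire} contribution $\alpha\,\mu(R\cap f^{-1}(z_i))$ for heavy outputs, not just the amount by which it exceeds $1$. This makes $\beta$ larger than necessary, so the objective $(1-\epsilon)\alpha - \beta = \alpha\bigl(1-\epsilon-\sum_{i<k}\mu(z_i)\bigr)$ has a very small bracket (by minimality of $k$ it can be an arbitrarily small positive number), and you are forced to inflate $\alpha$ to $(k-1)/(1-\epsilon-\sum_{i<k}\mu(z_i))$ to hit $k-1$. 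But this $\alpha$ genuinely violates the constraint $\alpha\,\mu(z_k)\leq 1$ you flag. A concrete counterexample: $\mu$ uniform over $n$ outputs, $\epsilon = 1/4$. Then $k = \lceil 3n/4\rceil$, the bracket is $\Theta(1/n)$, so $\alpha \approx (k-1)\cdot n$ and $\alpha\,\mu(z_k) \approx k-1 \gg 1$. The constraint fails badly, and this is not a lost constant factor but a lost factor of $\Theta(k)$. On the other hand your earlier instinct, taking $\alpha = 1/\mu(z_k)$ (the largest value permitted by the light-output constraint), gives objective $\leq 1$ with your $\beta$ — so you were right that with this $\beta$ neither choice of $\alpha$ works, but the conclusion should have been that $\beta$ is the thing to fix.

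The repair is to make $\beta$ cancel only the overshoot: set $\beta_{xy} = \max\bigl(0,\; \mu_{xy}\cdot(\alpha - 1/\mu(f(x,y)))\bigr)$ and $\alpha = 1/\mu(z_k)$. Then for heavy $z$, $\alpha\,\mu(R\cap f^{-1}(z)) - \beta(R\cap f^{-1}(z)) = \mu(R\cap f^{-1}(z))/\mu(z) \leq 1$ — constraint \eqref{rect-wprt-constr-1} is tight rather than slack — and for light $z$ (those with $\mu(z)\leq\mu(z_k)$) $\alpha\,\mu(z) \leq 1$ holds by construction. Constraint \eqref{rect-wprt-constr-pos} also holds since $\alpha\,\mu_{xy} - \beta_{xy}$ is either $\alpha\,\mu_{xy}$ or $\mu_{xy}/\mu(f(x,y))$, both nonnegative. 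Now $\beta = \sum_{i<k}\max(0,\alpha\,\mu(z_i)-1)$, and the objective becomes $(1-\epsilon)\alpha - \sum_{i<k}(\alpha\,\mu(z_i)-1) = \alpha\bigl(1-\epsilon-\sum_{i<k}\mu(z_i)\bigr) + (k-1) \geq k-1$, since the first term is nonnegative by minimality of $k$. Notice that the objective is now $k-1$ \emph{plus} a nonnegative remainder, rather than needing to be forced to exactly $k-1$ through an illegal $\alpha$. The moral: the dual variables $\beta_{xy}$ should pay only for what the constraint cannot absorb, and then $\alpha = 1/\mu(z_k)$ is both feasible and sufficient.
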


\begin{proof}[Proof of \cref{thm:wprt-xi}]
  Sort the set of outputs with respect to $\mu$ (i.e., $z_1 \leq z_2 \leq
  \ldots \leq z_n$) and set $z_{\min} =
  z_{\xi^\mu_\epsilon(f)}$. Consider the following assignment of
  variables :
\[
    \alpha = \frac 1 {\mu(z_{\min})}, \qquad \qquad
        \beta_{xy} = \max \parens*{ 0 , \mu_{xy} \cdot \parens*{ \alpha - \frac 1 {\mu( f(x,y))} } }.
  \]

  Then the first constraint of $\wprt^\mu_\epsilon$ is satisfied. Indeed, let  $z$ be s.t.\ $\mu(z) \leq \mu(z_{\min})$ (and so $\beta_{xy}=0$ for all $(x,y) \in f^{-1}(z)$). Then for all for all $R$:
  \begin{align*}
    & \alpha \cdot \mu(R\cap f^{-1}(z))-\beta(R) & &\\
     \leq\  &  \alpha \cdot \mu(R\cap f^{-1}(z))-\beta(R \cap f^{-1}(z)) & &\\
     =\ & \alpha \cdot \mu(R\cap f^{-1}(z)) \leq \alpha \mu(z) =  \frac {\mu(z)} {\mu(z_{\min})} \leq 1.
    \intertext{When $z$ is s.t.\ $\mu(z) > \mu(z_{\min})$, for all $R$:}
        & \alpha \cdot \mu(R\cap f^{-1}(z))-\beta(R) & &\\
     \leq\  &  \alpha \cdot \mu(R\cap f^{-1}(z))-\beta(R \cap f^{-1}(z)) & &\\
     =\ & \alpha \cdot \mu(R \cap f^{-1}(z)) - 
     \parens*{\alpha - \frac{1}{\mu(z)}} \mu(R \cap f^{-1}(z))
     \\ =\ & \frac{\mu(R \cap f^{-1}(z))}{\mu(z)} \leq 1.
  \end{align*}

  The second constraint is satisfied as well:
%
\[    \forall x,y : \quad \alpha \mu_{xy} - \beta_{xy} 
      = \alpha \mu_{xy} - \max \parens*{ 0 , \mu_{xy} \parens*{ \alpha - \frac 1 {\mu( z^{xy})} }} \geq 0.
      \]

  And the value of this feasible solution is:
  \begin{flalign*}
    (1-\epsilon) \alpha - \beta & = (1-\epsilon) \frac {1}{\mu(z_{\min})} - \sum_{z:z=z_i,\,i < \xi^\mu_\epsilon(f)} \beta(z_i)\\
    & = \parens*{1-\epsilon - \sum_{z:z=z_i,\,i < \xi^\mu_\epsilon} \mu(z_i) } \frac {1}{\mu(z_{\min})} + \xi^\mu_\epsilon(f) - 1 \\
    & \geq  \xi^\mu_\epsilon(f) - 1.
  \end{flalign*}
\end{proof}


\section{Error reduction}
\label{app:proofs-error-reduction}

\subsection{Proof of the random graph lemma}

The proof of the random graph lemma stated in
\cref{sec:hxor-amp} and used to solve $\GapMajX$ is a simple variation of a
result of Erd\H{o}s and Rényi~\cite{ErdosR60}. The result they proved
is in a model of random graphs where a fixed number of edges are
picked randomly from the set of all possible edges, while we are
interested in a model of random graphs where each edge is picked with
a fixed probability $p$ independently of other edges. The two models
are known to have essentially similar asymptotic behaviours. Readers
interested in the theory of random graphs might refer 
to~\cite{Bollobas01}.

\begin{proof}[Proof of \cref{lem:connected-component}]
We observe as in~\cite{ErdosR60} that if no connected component of
more than $(1-\alpha)n$ vertices exists, then we can partition the
vertices into two disconnected sets of size $n_0$ and $n_1$ such that
$\frac \alpha 2 n \leq n_0 \leq n_1 \leq \parens*{1- \frac \alpha
2} n$.

Given a partition of the vertices into sets of size $n_0$ and $n_1$,
the probability that those two sets are disconnected is
$(1-p(n))^{n_0n_1}$. With $p(n) = \frac c n$, and since
there are less than $2^n$ possible partitions, the probability that
there is no connected component of more than $(1-\alpha)n$
vertices is bounded by:
\begin{align*}
  2^n\parens*{1-\frac c n}^{n_0n_1} &\leq 2^n e^{-c\frac{n_0 n_1} n} \leq 2^n e^{-c\frac \alpha 2 \parens*{1-\frac \alpha 2 }n} = e^{\parens*{\ln(2) - \frac \alpha 2 \parens*{1-\frac \alpha 2} c}n}
\end{align*}
\end{proof}

\subsection{Error reduction up to the XOR model}
\label{app:error-reduction-up-to-XOR}

\subsubsection{Error reduction in the one-out-of-two model}
The one-out-of-two model is already non-trivial. 
If we repeat the protocol, in some runs Alice will output, and in others, Bob will output,
and it is possible that on both sides, the majority output is incorrect. 
A trivial way to reduce error in this model would be to convert
the one-out-of-two protocol to the unilateral model 
(\cref{prop:hierarchy}) and
apply \cref{thm:amp-kn97}, to
obtain 
$R_{\epsilon'}^\oot(f) \leq C_{\epsilon,\epsilon'} \cdot ( R^\oot_\epsilon(f)
           + k 
).
$

We prove that the additional dependency on the output length $k$ can be 
 removed.
 We show that the players
can narrow down the number of candidates for the majority outcome to at most four. 
Hashing is used to single out the winning outcome with high probability.
\begin{theorem}
  \label{thm:amp-oot}
  Let $0<\epsilon'<\epsilon<\frac 1 2$, $C_{\epsilon,\epsilon'} =  \frac {2 \epsilon (1 - \epsilon)}{\parens*{ \frac 1 2 - \epsilon}^2} \ln \parens*{\frac 4 {\epsilon'} }$
  and
  $C'_{\epsilon,\epsilon'} \leq 18 + 4 \log \parens*{ \frac 1
  {\epsilon'} } + 4 \log \parens*{ C_{\epsilon,\epsilon'} }$. 
 For all functions $f :
  \cX \times \cY \rightarrow \cZ$, 
  $$R_{\epsilon'}^\oot(f) \leq C_{\epsilon,\epsilon'} (R^\oot_\epsilon(f) + 1)
  + C'_{\epsilon,\epsilon'}.$$
\end{theorem}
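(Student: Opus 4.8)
The plan is to run the given one-out-of-two protocol a fixed number of times and then recombine the outcomes with only $O(\log(1/\epsilon')+\log C_{\epsilon,\epsilon'})$ additional bits, exploiting a structural fact about which values can carry the majority. First I would invoke \cref{prop:oot-one-speaker} to assume, at a cost of one extra bit per run, that in every execution exactly one player speaks (outputs a value in $\cZ$) while the other outputs $\top$. Running this normalized protocol $C_{\epsilon,\epsilon'}$ times on fresh randomness costs $C_{\epsilon,\epsilon'}\bigl(R^\oot_\epsilon(f)+1\bigr)$ bits and produces a multiset $A$ of the values Alice output (over the runs in which she spoke) and a multiset $B$ of Bob's, with $|A|+|B|=C_{\epsilon,\epsilon'}$. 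Write $z^\star=f(x,y)$ and let $\operatorname{mult}_A,\operatorname{mult}_B$ denote multiplicities in $A,B$. A run is correct iff its speaker output $z^\star$, so by \cref{lem:hoeffding} applied to the indicators of incorrect runs (expected value $p\le\epsilon$, deviation $\delta=\frac12-\epsilon$), the choice $C_{\epsilon,\epsilon'}=\frac{2\epsilon(1-\epsilon)}{(\frac12-\epsilon)^2}\ln\frac4{\epsilon'}$ ensures that, except with probability at most $\epsilon'/2$, strictly more than half the $C_{\epsilon,\epsilon'}$ speaker-outputs equal $z^\star$, i.e.\ $\operatorname{mult}_A(z^\star)+\operatorname{mult}_B(z^\star)>C_{\epsilon,\epsilon'}/2$; we condition on this event.

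The crux is the following structural claim: $z^\star$ lies among the (at most) two most frequent values of $A$ or among the two most frequent values of $B$, so there are at most four candidates. If $z^\star$ is not among the two most frequent values of $A$, then at least two distinct values of $A$ occur at least $\operatorname{mult}_A(z^\star)$ times, whence $|A|\ge 2\operatorname{mult}_A(z^\star)$; if moreover $z^\star$ were not among the two most frequent values of $B$ we would likewise get $|B|\ge 2\operatorname{mult}_B(z^\star)$, and summing contradicts $\operatorname{mult}_A(z^\star)+\operatorname{mult}_B(z^\star)>(|A|+|B|)/2$. The same bound $|A|\ge 2\operatorname{mult}_A(z^\star)$ combined with the conditioned event further gives $\operatorname{mult}_B(z^\star)>|B|/2$: whenever $z^\star$ is not in Alice's top two it is in fact the \emph{strict} majority element of $B$, which Bob can identify locally. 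This last point is what guarantees that the player who ends up producing the output actually holds $z^\star$ as a string rather than merely a hash of it.

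For the reconciliation, the players draw from public randomness a hash $h:\cZ\to[M]$ with $M=\Theta(C_{\epsilon,\epsilon'}/\epsilon')$, large enough that with probability at least $1-\epsilon'/2$ neither hash of Alice's two most frequent values $z^1_A,z^2_A$ collides with the hash of any other value occurring in $B$. Alice sends $\bigl(h(z^1_A),\operatorname{mult}_A(z^1_A)\bigr)$ and $\bigl(h(z^2_A),\operatorname{mult}_A(z^2_A)\bigr)$, which is $O(\log M+\log C_{\epsilon,\epsilon'})=O(\log(1/\epsilon')+\log C_{\epsilon,\epsilon'})$ bits. Bob counts how many entries of $B$ hash to $h(z^1_A)$ and to $h(z^2_A)$; absent collisions these are $\operatorname{mult}_B(z^1_A),\operatorname{mult}_B(z^2_A)$, so Bob learns the exact total count of each of Alice's two candidates. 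If one of them exceeds $C_{\epsilon,\epsilon'}/2$ it must be $z^\star$ (and is unique, since the total mass is $C_{\epsilon,\epsilon'}$); Bob sends its index, Alice outputs that value and Bob outputs $\top$. Otherwise $z^\star\notin\{z^1_A,z^2_A\}$, so by the claim $z^\star$ is the strict majority element of $B$; Bob outputs it and Alice outputs $\top$. The two bad events have probability at most $\epsilon'/2$ each, so the error is at most $\epsilon'$, and the total cost is $C_{\epsilon,\epsilon'}\bigl(R^\oot_\epsilon(f)+1\bigr)+C'_{\epsilon,\epsilon'}$ with $C'_{\epsilon,\epsilon'}\le 18+4\log(1/\epsilon')+4\log C_{\epsilon,\epsilon'}$ once the hash lengths and counts are tallied.

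The main obstacle is establishing, and correctly using, the structural claim. Narrowing the majority to a constant number of candidate strings is only half the battle; the subtle point is that in the branch where Bob produces the answer he must possess $z^\star$ itself, not a hash of it, and this is precisely what the ``strict majority of $B$'' refinement supplies. This is also the reason the reconciliation overhead is independent of the output length $k$: Alice never transmits a full string, only two short hashes together with two counts bounded by $C_{\epsilon,\epsilon'}$.
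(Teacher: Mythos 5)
Your proof is correct and uses the same high-level template as the paper (normalize via \cref{prop:oot-one-speaker}, repeat, narrow the majority to a constant number of candidates, hash to reconcile), but the middle of the argument is genuinely different. The paper defines candidates by a symmetric threshold — values that a player output strictly more than a quarter of the time — which immediately caps the total number of candidates at a small constant and lets it use a tiny hash range $m=\Theta(1/\epsilon')$, since only collisions among the $\leq 4$ candidates themselves matter (any extraneous outputs hashing to the same bucket can only help $f(x,y)$, which already dominates with $>T/2$ occurrences). You instead work with Alice's top-two values and prove the sharper dichotomy that $z^\star$ is either among Alice's top two or is the strict majority of $B$, with the $\operatorname{mult}_B(z^\star)>|B|/2$ refinement ensuring Bob actually holds the string he may have to output. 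This pays for itself in a cleaner reconciliation — Alice sends two hashes plus her two multiplicities and Bob reconstructs exact totals — but it forces you to rule out collisions between Alice's two hashes and every value occurring in $B$, hence the larger range $M=\Theta(C_{\epsilon,\epsilon'}/\epsilon')$ and an extra additive $O(\log C_{\epsilon,\epsilon'})$ in the hash cost; that still fits the stated budget $C'_{\epsilon,\epsilon'}\le 18+4\log(1/\epsilon')+4\log C_{\epsilon,\epsilon'}$. Two minor points worth tightening in a final write-up: you should take $\lceil C_{\epsilon,\epsilon'}\rceil$ repetitions (and account for the resulting $+O(R)$ slack, as the paper implicitly does), and in the branch where Bob detects that one of Alice's candidates has total count $>C_{\epsilon,\epsilon'}/2$ it is worth noting explicitly that this threshold certifies uniqueness, so no tie-breaking is needed.
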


\begin{proof}[Proof of \cref{thm:amp-oot}]
Fix a one-out-of-two protocol for $f$ with error at most $\epsilon$
and apply \cref{prop:oot-one-speaker} so that we now
have a one-out-of-two communication protocol and mappings such that
exactly one player speaks at the end in any execution.
Using Hoeffding's inequality (\cref{lem:hoeffding}), if the
players make $T=\ceil{ 8 \epsilon (1 - \epsilon)\parens*{ \frac 1 2 -
  \epsilon}^{-2} \ln \parens*{ \frac 4 {\epsilon'} } }$ executions, then with
probability at least $1-\frac {\epsilon'} 2$, in at least $\frac 1 2 + \frac 1 2 \parens*{\frac 1 2 - \epsilon}>\frac 1 2$ of the
player's executions, one of them outputs $f(x,y)$ (and the other
remains silent).

The players want to identify the correct output. We argue that they can do it with very little extra communication and error. 
Observe that if a value is output in strictly more than half of the above executions, it must have been output strictly more than $T/4$ 
times by one of the two players.
Thus each player only needs to consider outputs that appear
stricly more than $T/4$ times on its side.
Let us call $(z^\ali_i)_{i
  \in [n_\ali]}$ and $(z^\bob_j)_{j \in [n_\bob]}$ the outputs identified as
candidates for $f(x,y)$ respectively on Alice's and Bob's side, $n_\ali$ and $n_\bob$ being
the number of candidates on each side.  
Since there are $T$ executions, each candidate is output strictly more than $T/4$ times and one value is output stricly more than $T/2$ times, there are at most 3 candidates and 
$n_\ali \leq 2$ and $n_\bob \leq 2$. 

The players use their public randomness to pick a random hash function $h : \cZ \rightarrow [m]$ where $m$ is to be chosen so that, with high probability, there are
no collisions among the candidates $(z^\ali_i)_{i
  \in [n_\ali]}$ and $(z^\bob_j)_{j \in [n_\bob]}$ selected by the players. Since the probability of a given
collision is $\frac 1 m$ and there are  $n_\ali+n_\bob \leq \binom{4}{2}$ pairs of
candidates, taking $m =  \ceil*{ \frac {12} {\epsilon'} } \geq \frac {12} {\epsilon'}$ guarantees that
such a collision only occurs with probability~$\leq \frac {\epsilon'}
2$.%
The players then exchange the hashes $h_1, \ldots, h_{n_\ali+n_\bob}$ of
their candidates (corresponding to $h(z^\ali_i)$ and $h(z^\bob_j)$ for $i
\in [n_\ali]$ and $j \in [n_\bob]$) with $4 \ceil{ \log(m) }$ bits of
communication. 
For each $k$, Alice computes $\alpha_k = \card{\set{i:h(z^\ali_i) = h_k}}$ and Bob computes $\card{\set{j:h(z^\bob_j) = h_k}}$.
 Alice sends her
counts $(\alpha_k)_{k=1,\dots,n_\ali+n_\bob}$ to Bob (with communication $\leq 4\ceil*{ \log(T)}$). Bob
replies with $k \in [n_\ali+n_\bob]$ such that $h_k$ is the hash that most
outputs hash to through $h$. If that hash is the hash of a candidate
$z^\ali_i$ of Alice, she outputs this candidate, otherwise it is Bob who
outputs his corresponding candidate. Adding the errors due to 
deviation (Hoeffding) and to collisions, this protocol makes at most $\epsilon'$ error.
\end{proof}

\subsubsection{Error reduction in the split model}

 Remarkably, error reduction in the split model can be achieved
   very similarly to the scheme for the XOR model. 
   Notice that it is not sufficient to apply the XOR scheme by replacing stars with zeros, since 
   the {\em output} should be split as well (whereas the output in the XOR scheme is not necessarily of this form). More precisely, applying \cref{thm:error-reduction-no-k} would show $R^{\xor}_{\epsilon'}(f) \leq 
   C_{\epsilon, \epsilon'} \cdot R^\spt_\epsilon(f) 
   + O \parens*{ C_{\epsilon, \epsilon'} }$.
   
   The key observation we used to
  reduce error in the XOR model was that when two rows~$i$ and~$j$
  of the $\GapMajX$ matrix XORed to the same string, i.e., $X_i \oplus Y_i
  = X_j \oplus Y_j$, we observed that $X_i \oplus X_j = Y_i \oplus
  Y_j$. This allowed us to test whether two rows XORed to the same
  string by making one equality test on two locally-computable strings.
  We call this local operation a ``compatibility gadget'',
  that is,  a function $g$ that the players apply
  locally to pairs of rows, such that the problem of testing $X_i \oplus Y_i = X_j \oplus Y_j$
   reduces to testing equality between $g(X_i,X_j)$ and
  $g(Y_i,Y_j)$. In the XOR model, the compatibility gaget  $g$ was  just a bit-wise XOR. The bitwise compatibiklity gadget is illistrated in 
  \cref{fig:compatibility-xor}.

\begin{figure}[ht]
  \center
  \begin{subfigure}[t]{0.2\textwidth}
   \center
   \includegraphics[page=1,width=\textwidth]{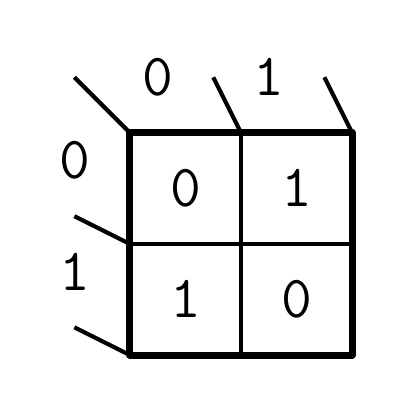}
   \caption{XOR  gadget}
   \label{fig:compatibility-xor}
  \end{subfigure}
  ~
  \begin{subfigure}[t]{0.3\textwidth}
   \center
   \includegraphics[page=2,width=0.8\textwidth]{compatibility}
   \caption{Alice's split  gadget $g_\ali$}
   \label{fig:compatibility-split-A}
  \end{subfigure}
  ~
  \begin{subfigure}[t]{0.3\textwidth}
   \center
   \includegraphics[page=3,width=0.8\textwidth]{compatibility}
   \caption{Bob's split  gadget $g_\bob$}
   \label{fig:compatibility-split-B}
  \end{subfigure}  
  \caption{The matrices of the compatibility gadgets for the XOR and the split models.}
    \label{fig:compatibility}
\end{figure}

It turns out that we can do the something similar in the split model, with a
slight change. Instead of both players applying the same gadget on pairs of rows
 before testing for equality, they each apply a different
gadget. The functions they apply bit-wise to pairs of rows are the
transformations $g_\ali$ and $g_\bob$ represented in
\cref{fig:compatibility-split-A,fig:compatibility-split-B}. 
The functions are chosen so that the following property holds.

\begin{proposition}
  \label{prop:split-compatibility}
  For all $X_i$, $X_j$, $Y_i$, and $Y_j \in \set{0,1,*}^k$, and $g_\ali, g_\bob$ described in \cref{fig:compatibility-split-A,fig:compatibility-split-B},
  \[X_i \weave Y_i =
  X_j \weave Y_j \Leftrightarrow g_\ali(X_i,X_j) = g_\bob(Y_i,Y_j)\]
\end{proposition}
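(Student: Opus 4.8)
The plan is to reduce the claim to a single coordinate and then dispatch a bounded number of cases. Both $\weave$ and the gadgets $g_\ali,g_\bob$ act coordinatewise, so it suffices to prove, for all single symbols $a,a',b,b'\in\set{0,1,*}$,
\[
 a\weave b = a'\weave b' \iff g_\ali(a,a') = g_\bob(b,b'),
\]
where $g_\ali,g_\bob$ are now read as the single-coordinate tables of \cref{fig:compatibility-split-A,fig:compatibility-split-B}; the full statement then follows by conjoining the coordinatewise equivalences. The first thing I would record is the one-symbol description of $\weave$: $a\weave b$ equals the unique non-$*$ value among $\set{a,b}$ when exactly one of $a,b$ is a bit, and equals $*$ in all other cases (both are bits, or both are $*$). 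Hence $a\weave b = a'\weave b'$ can hold in exactly one of two ways: both sides equal a common bit $v\in\zo$ (case (i)), or both sides equal $*$ (case (ii)).

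The key point --- and what the tables are engineered for --- is that $g_\ali(a,a')$ records which of Alice's two entries carries a bit together with that bit, while $g_\bob(b,b')$ records the same for Bob's two entries, the codomain symbols being chosen so that the two records coincide precisely on the two configurations above. I would organise the verification by the ``who contributes a bit'' pattern at the coordinate. In a coordinate where exactly one of $\set{a,b}$ is a bit and exactly one of $\set{a',b'}$ is a bit, there are four sub-cases $\mathrm{AA},\mathrm{AB},\mathrm{BA},\mathrm{BB}$ (who contributes in row $i$, then in row $j$), and in each the equivalence collapses to comparing two bits that are \emph{locally visible} to one player --- the two $a$-entries in $\mathrm{AA}$, the entries $a$ and $b'$ in $\mathrm{AB}$, and so on --- so one simply reads off from the figures that the relevant table entries are equal exactly when those two bits are. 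The asymmetry $g_\ali\neq g_\bob$ is exactly what makes this work: $g_\ali$ and $g_\bob$ send $(v,*)$ and $(*,v)$ to a common symbol (and $(*,v)$, $(v,*)$ to another), while the ``two-bit'' entries of $g_\ali$ for $v=v'$ agree with $g_\bob(*,*)$, and symmetrically with the roles of Alice and Bob swapped; checking both directions of the equivalence in each sub-case is then routine.

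What remains, and what I expect to be the actual obstacle, are the \emph{degenerate} coordinates, where in one of the two rows both players contribute a bit or neither does. There the local pair $(a,a')$ tells Alice nothing about $(b,b')$ and conversely, so the equivalence can survive only if the tables collapse all ``$*$-producing'' configurations consistently: one must check that every coordinate forcing $a\weave b=*$ is matched by the tables against exactly the other $*$-producing coordinates (case (ii) above), with no spurious agreement or disagreement. This is a finite but somewhat delicate verification, and the only place where the precise entries of \cref{fig:compatibility-split-A,fig:compatibility-split-B} really matter; organising it by first handling the coordinates that occur in a genuine split execution (exactly one contributor per output bit) and then the degenerate ones keeps the bookkeeping under control. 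Once the single-coordinate equivalence is established in all cases, conjoining over the $k$ coordinates finishes the proof.
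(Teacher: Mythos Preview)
Your coordinatewise reduction and your treatment of the non-degenerate sub-cases (exactly one bit and one star on each of the two rows) are exactly right, and this is also the only part the paper addresses: it gives no formal proof of the proposition, only the informal explanation that you essentially reproduce. The genuine gap is precisely where you sensed it, in the ``degenerate'' coordinates. The finite verification you anticipate does \emph{not} go through; in fact no choice of $g_\ali,g_\bob$ can make the biconditional hold for all $a,a',b,b'\in\set{0,1,*}$, so the proposition as literally stated is false regardless of what the tables in the figures contain.

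Here is a concrete obstruction at a single coordinate. Consider the four quadruples $(a,b,a',b')$ given by
\[
(*,*,0,0),\qquad (*,*,1,0),\qquad (*,0,0,*),\qquad (*,0,1,*).
\]
In the first two, $a\weave b=*=a'\weave b'$, which forces $g_\ali(*,0)=g_\bob(*,0)$ and $g_\ali(*,1)=g_\bob(*,0)$, hence $g_\ali(*,0)=g_\ali(*,1)$. In the third, $a\weave b=0=a'\weave b'$, forcing $g_\ali(*,0)=g_\bob(0,*)$. In the fourth, $a\weave b=0\neq 1=a'\weave b'$, forcing $g_\ali(*,1)\neq g_\bob(0,*)$. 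These four constraints are jointly inconsistent.

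What does survive, and what the error-reduction application actually uses, is the biconditional restricted to rows whose weave lies in $\zo^k$, i.e., to coordinates where each of the two rows has exactly one bit and one star. That is precisely your non-degenerate case analysis, and it is complete as you describe it. So the right move is not to push the case check further, but to read the proposition with this implicit hypothesis (or to weaken it to the one-sided implication that is needed downstream) and note that your Step~1 already proves that restricted statement.
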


These functions  capture when
a pair of rows output the same result: if Alice outputs
two stars in some position of $X_i$ and $X_j$, then Bob needs to be
outputting two $0$s or two $1$s in the same position in his strings
($Y_i$ and $Y_j$). Similarly, if at some index Alice outputs a star in
row $X_i$ and a $0$ in row $X_j$, then at this same index, Bob needs to
output a $0$ in $Y_i$ and a star in $Y_j$ so that the two rows
yield the same result.

\Cref{prop:split-compatibility} implies that error-reduction in the split model reduces to solving $\GapMAJ$ combined with the weave gadget ($\weave$), 
in the same way that error reduction in the XOR model reduced to solving $\GapMajX$.
We  obtain the following similar result
\cref{thm:error-reduction-split}.

\begin{theorem}\label{thm:error-reduction-split}
  Let $0 < \epsilon' < \epsilon < \frac 1 2$, $C_{\epsilon,\epsilon'}
  = 8\epsilon {\parens*{ \frac 1 2 - \epsilon } ^{-2}}{\ln \parens*{\frac 4 {\epsilon'}}}$.  For all $f : \cX \times \cY
  \rightarrow \zo^k$,
 \begin{align*}
  R^{\spt}_{\epsilon'}(f) \leq 
   C_{\epsilon, \epsilon'} \cdot R^\spt_\epsilon(f) 
   + O \parens*{ C_{\epsilon, \epsilon'} }. & \label{eq:amp-split-no-k} 
  \end{align*}
\end{theorem}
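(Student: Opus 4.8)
The plan is to mirror the structure used for the XOR model (\cref{thm:error-reduction-no-k}), replacing the XOR gadget by the pair of gadgets $g_\ali,g_\bob$ from \cref{fig:compatibility-split-A,fig:compatibility-split-B}. First I would establish the split analogue of \cref{lem:amp-xor-hxor}: given a split protocol $\pi$ for $f$ with error $\epsilon$, run it $C_{\epsilon,\epsilon'}$ times to obtain rows $X_1,\dots,X_{C_{\epsilon,\epsilon'}} \in \set{0,1,*}^k$ on Alice's side and $Y_1,\dots,Y_{C_{\epsilon,\epsilon'}}$ on Bob's side, where a run is correct iff $X_i \weave Y_i = f(x,y)$. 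By Hoeffding's inequality (\cref{lem:hoeffding}), with the same choice of $C_{\epsilon,\epsilon'}$ as in \cref{lem:amp-xor-hxor}, with probability at least $1-\epsilon'/2$ the resulting instance satisfies the $\GapMajW[C_{\epsilon,\epsilon'},k,\frac14+\frac\epsilon2]$ promise, i.e.\ some string $z$ equals $X_i \weave Y_i$ for more than a $\frac34-\frac\epsilon2$ fraction of indices $i$. Conditioned on the promise holding, a protocol solving $\GapMajW$ in the split model with error $\epsilon'/2$ then outputs $f(x,y)$ with error $\epsilon'/2$, for total error $\epsilon'$. This gives
$R^\spt_{\epsilon'}(f) \leq C_{\epsilon,\epsilon'}\cdot R^\spt_\epsilon(f) + R^\spt_{\epsilon'/2}\!\parens*{\GapMajW[C_{\epsilon,\epsilon'},k,\tfrac14+\tfrac\epsilon2]}$.

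Next I would prove the split analogue of \cref{thm:hxor-no-k}, namely $R^\spt_{\epsilon'}(\GapMajW[N,k,\epsilon]) \in O\parens*{N + \log(1/\epsilon')}$, following the same three-step protocol. The key point is \cref{prop:split-compatibility}: for rows $i,j$ we have $X_i \weave Y_i = X_j \weave Y_j \iff g_\ali(X_i,X_j) = g_\bob(Y_i,Y_j)$, so testing whether two rows weave to the same string again reduces to a single Equality instance on locally computable $k$-bit strings. Therefore the exact same pipeline applies: (1) sample $T_{\epsilon'} = 50\ln(10/\epsilon')$ rows using public randomness so that with probability $\geq 1-\epsilon'/5$ more than $\frac25$ of them weave to the majority string $z$; (2) form an Erd\H{o}s--R\'enyi graph $G(T_{\epsilon'}, c/T_{\epsilon'})$ on these rows, solve $\leq 2cT_{\epsilon'}$ Equality instances on the pairs $(g_\ali(X_i,X_j),g_\bob(Y_i,Y_j))$ for edges $(i,j)$, and use \cref{lem:connected-component} to conclude that one connected monochromatic group covers $\geq \frac{11}{30}$ of the sampled rows, leaving at most two candidates; (3) break the tie by solving $N$ further Equality instances $g_\ali(X_{i_1},X_j) \stackrel?= g_\bob(Y_{i_1},Y_j)$ for $j \in [N]$. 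The communication is $R_{\epsilon'/5}(\EQ_k^{\otimes 2cT_{\epsilon'}}) + R_{\epsilon'/5}(\EQ_k^{\otimes N}) \in O(N + \log(1/\epsilon'))$ by \cref{prop:eq}, and once the candidate row index is fixed the players simply output their split content of that row — which, being genuinely of the split form, correctly splits the result (this is the subtlety noted before the theorem). Composing the two steps, with $N = C_{\epsilon,\epsilon'}$, yields $R^\spt_{\epsilon'}(f) \leq C_{\epsilon,\epsilon'}\cdot R^\spt_\epsilon(f) + O(C_{\epsilon,\epsilon'})$, as claimed.

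The main obstacle is verifying \cref{prop:split-compatibility} and, more importantly, checking that the whole pipeline is genuinely a valid \emph{split} protocol rather than merely an XOR one: the intermediate Equality computations may be done in any model (they are Boolean, hence the split/XOR distinction is irrelevant there), but the \emph{final output} must be a weave of Alice's and Bob's strings. This is automatic here because the candidate row is, by construction, one of the rows $(X_i, Y_i)$ produced by the original split protocol $\pi$, so $X_i \weave Y_i$ is well-defined and equals $f(x,y)$ on the high-probability event; no re-splitting or star-counting gymnastics (of the kind needed in \cref{thm:splitid-oot-spt-sep}) is required. A secondary point to be careful about is that $g_\ali, g_\bob$ must be applied \emph{bitwise}, so that $g_\ali(X_i,X_j) = g_\bob(Y_i,Y_j)$ as full $k$-bit strings is equivalent to agreement in every coordinate, which is exactly what is needed for the reduction to Equality to be faithful; this follows coordinate-by-coordinate from the case analysis sketched after \cref{prop:split-compatibility}.
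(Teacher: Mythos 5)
Your proposal follows essentially the same route as the paper: replace the bitwise XOR gadget with the pair of compatibility gadgets $g_\ali,g_\bob$ (\cref{prop:split-compatibility}), reduce to $\GapMajW$, and run the same Hoeffding/random-graph/Equality pipeline, with the final output being the split content of the selected row. You also correctly identify the subtlety the paper flags just before the theorem statement (why one cannot simply substitute stars by zeros and invoke \cref{thm:error-reduction-no-k}) and why it is automatically resolved here — the paper itself only gestures at the argument, so your reconstruction fills it in correctly.
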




\section{Error reduction for direct sum problems}
\label{app:direct-sum}

This section gives the full proof of \cref{thm:error-reduction-direct-sum} which gives an error reduction scheme for functions of the form $f=g^{\otimes k}$.

\begin{proof}[Proof of \cref{thm:error-reduction-direct-sum}]
Consider an XOR protocol for $f=g^{\otimes k}$ with error at most $\epsilon$, together
with a protocol for $g$ with error at most $\epsilon$.
  The protocol to achieve error $\epsilon'$ proceeds as follows.

  \begin{description}
  \item[Step 1:] [Restrict to at most two candidates.] The players run
    the XOR protocol for $f$ for a total of
    $T_{\epsilon'}=50 \ln \parens*{\frac {12} {\epsilon'} }$
    iterations. Let $(a_i,b_i)$ be what Alice and Bob would have
    output on the $i^{th}$ iteration. As in \hyperlink{step:many-iterations-f}{Step~1} of the proof of
    \cref{thm:hxor-no-k}, with high probability, $\card{\set{i: a_i
    \oplus b_i = f(x,y)}} \geq \frac 2 5 T_{\epsilon'}$.

As in \hyperlink{step:first-equality-batch}{Step~2} of the proof of \cref{thm:hxor-no-k}, the players
then solve random $\EQ$ instances to find large subsets of iterations
with the same computed value. With high probability ($\geq 1 -
\frac{\epsilon'} 6$), they compute at most $O(T_{\epsilon'})$ instances
of $\EQ$, with $\frac {\epsilon'} 6$ error.

With high probability ($\geq 1 - 3 \cdot \frac {\epsilon'} 6$), the
players should have identified either one or two sets of at least
$\frac {11}{30}T_{\epsilon'}$ iterations such that all iterations in a
set computed the same value. If only one such large set was found, the
players output $a_i$ and $b_i$ where $i$ is the index of an arbitrary
iteration in this large set. Otherwise, let $i_1$ and $i_2$ be 
indices, each one representing one of the two large sets.

\item[Step 2:] [Find a critical index $l$.]
The players will either output as in the $i_1^{th}$ or the $i_2^{th}$
iteration. To decide between the two, they find the first difference
between $a_{i_1} \oplus a_{i_2}$ and $b_{i_1} \oplus b_{i_2}$. This
yields an index $l \in [k]$ where the two possible outputs differ. We
call this a critical index.

\item[Step 3.] [Solve GHD on the critical index $l$.]
We XOR-compute the $l^{th}$ bit of $f$ $ C_{\epsilon,\epsilon'} $
times. This gives an instance of  Gap Hamming Distance of size $C_{\epsilon,\epsilon'}$
whose solution determines
 the $l^{th}$ bit of the correct output, with high probability.
The players determine which iteration, $i_1$ or~$i_2$, was correct
on the $l^{th}$ bit, and output according to that iteration.
\end{description}
Altogether, we get the following upper bound on computing $f$ with error $\epsilon'$.
  \label{eqn:amp-xor-version}
  \begin{align*} R^{\xor}_{\epsilon'}(f) 
  & \leq
  \parens*{ 50 \ln \parens*{\frac {12} {\epsilon'} } }\cdot R^\xor_\epsilon(f) 
  + R^\loc_{\epsilon'/6}\parens*{ \EQ_k^{\otimes O(T_{\epsilon'})} }
	+ R^\loc_{\epsilon'/6} \parens*{ \FtFD_k } \\
  & \qquad\quad+ C_{\epsilon, \epsilon'}
  \cdot R^\xor_\epsilon(g) 
   + R^\loc_{\epsilon'/6}\parens*{ \GHD^{C_{\epsilon, \epsilon'}}_{(1/4+\epsilon/2)C_{\epsilon, \epsilon'},(3/4 - \epsilon/2)C_{\epsilon, \epsilon'}} }\ .
  \end{align*}

We conclude by applying known upper bounds for Find the First
Difference~\cite{FeigeRPU1994} (\cref{prop:ftfd}), for
solving many instances of Equality~\cite[Part 6]{FederKNN95}
(\cref{prop:eq}), and Gap Hamming Distance is solved by
exchanging the complete inputs 
which is
essentially optimal~\cite{ChakrabartiR12,Vidick13,Sherstov12}.
\end{proof}



\section{Removing randomness}
\label{app:derandomization}

\subsection{Transcript Distribution Estimation}
\label{app:tde}

In this section we prove \cref{lem:tde-kn97,lem:tde-open}.
\begin{proof}[Proof of \cref{lem:tde-kn97}]
Let $\Pi$ be a communication protocol, and $\gamma = \delta \card{\cT_\pi}^{-1}$.
Given $(x,y)$, the players consider
the protocol tree of $\Pi$ :

\begin{figure}[ht]
  \center
  \includegraphics[page=1]{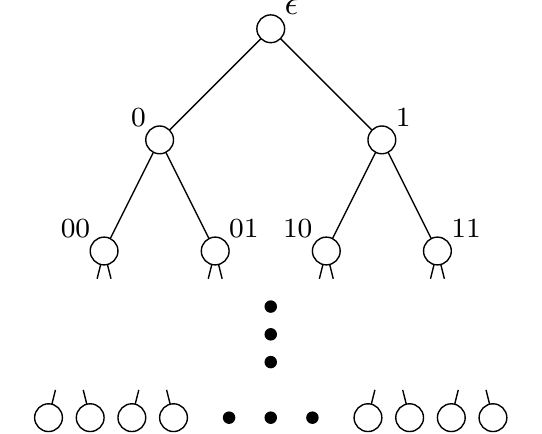}
  \caption{A tree representing the possible executions of the protocol $\Pi$ on a given $(x,y)$.}
  \label{fig:protocol-tree}
\end{figure}

Each node of this tree represents a partial execution of the protocol,
and so we label each node of this tree by the word $w \in \zo^*$ that
is the communication that happened between Alice and Bob to reach this
node. In particular, leaves are labeled by full transcripts, i.e., words
$w \in \cT_\pi$. We will use the notation $w_{<i}$ to refer to the
prefix of $w$ of size $(i-1)$. Each internal node belongs to either Alice or Bob, and that
property determines who must send the next message when at this
specific point of the execution of the protocol. It has $\card{\cT_\pi}$
leaves.

To each internal node $w$, we can assign a probability distribution
$p_w$ that corresponds to which message ($0$ or $1$) is sent
next. This distribution is fully determined by $x$ if the node belongs
to Alice, by $y$ otherwise. Its randomness comes from the private
randomness of the players, and its support is the set of next messages
($0$ or $1$ here). The probability that Alice sends $1$ as her next
message when on node $w$ in the protocol tree is denoted by $p_w(1 \mid x)$.

\begin{figure}[ht]
  \center
  \includegraphics[page=2]{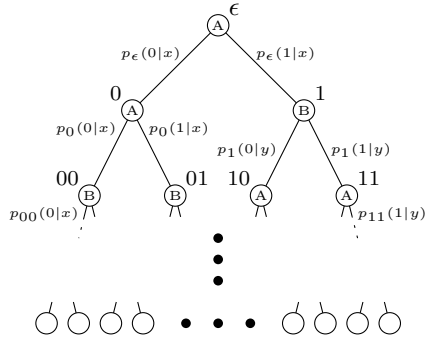}
  \caption{The same tree as in \cref{fig:protocol-tree} with nodes labeled depending on their owners, and the probability distributions. Note that
    $p_w(0 \mid x)+p_w(1 \mid x)=1$.}
\end{figure}

For a leaf of label $w$, on input $(x,y)$, the probability that an execution of the protocol ends up in $w$ is:
\[ p(w \mid x,y) = \underbrace{\parens*{ \prod_{\substack{1 \leq i \leq \card{w}\\w_{< i} \in \textrm{Alice}}} p_{w_{<i}}(w_i \mid x) }}_{\alpha(w \mid x)}
                \quad \times \quad
                   \underbrace{ \parens*{ \prod_{\substack{1 \leq i \leq \card{w}\\w_{< i} \in \textrm{Bob}}} p_{w_{<i}}(w_i \mid y)} }_{\beta(w \mid y)}. \]

For each $w\in \cT_\pi$, Alice has full knowledge of $\alpha(w \mid x)$ and
Bob has full knowledge of $\beta(w \mid y)$. We now describe the actual
protocols in the Bob and in the open model.

\begin{description}
  \item[Step 1.] 
  For each $w\in \cT_\pi$, Alice
sends the smallest non-negative integer $d_w< \ceil*{ \frac 1
\gamma }$ such that:
  \[ \gamma \cdot d_w \leq \alpha(w \mid x) \leq \gamma \cdot (d_w+1).\]

This is done with communication $\card{\cT_\pi} \cdot \ceil*{ \log \frac 1
\gamma }$.

Bob now knows an approximation $\alpha'(w \mid x):=\gamma \cdot d_w$ of
Alice's $\alpha(w \mid x)$ for all $w$ such that:
  \[ \alpha'(w \mid x) \leq \alpha(w \mid x) \leq \alpha'(w \mid x) + \gamma.\]

Since $\beta(w \mid y) \in [0,1]$ for all $w$, $p'(w \mid x,y):=\alpha'(w \mid x)\beta(w \mid y)$
(known to Bob) is such that:
  \[\forall w\in \cT_\pi: p'(w \mid x,y) \leq p(w \mid x,y) \leq p'(w \mid x,y) + \gamma.\]
That is, Bob has an estimation of the true probabilities of $p(. \mid x,y)$
that never overestimates the true value and is pointwise $\gamma$-close to
it.

\item[Step 2.]
Bob cannot simply output
  $p'(. \mid x,y)$ since it might not be a probability
  distribution. However, $p(. \mid x,y)$ is a probability
  distribution, so:
\[1 - \gamma\card{\cT_\pi} \leq \sum_w p'(w \mid x,y) \leq 1.\]
 Let us define $C:=1-\sum_w p'(w \mid x,y)$ and $p''(w \mid x,y) = p'(w \mid x,y) +
 \frac C {\card{\cT_\pi}}$ for all $w$. Since $0\leq C \leq
 \gamma\card{\cT_\pi}$, $p''(. \mid x,y)$ is a distribution which is also a
 point-wise $\gamma$-approximation of $p(. \mid x,y)$. 
 Our choice of $\gamma = \delta \card{\cT_\pi}^{-1}$
 makes $p''(. \mid x,y)$ $\delta$-close to
 $p(. \mid x,y)$ in statistical distance, so Bob can output it. Therefore
  \[ D^\bob(\TDE_{\Pi,\delta}) \leq \card{\cT_\pi} \cdot \ceil*{ \log \frac {\card{\cT_\pi}}{\delta} } \ .\]
 which concludes the proof of \cref{lem:tde-kn97}.
  
\end{description} 

\end{proof}
{
We will show a
  similar statement in the local and open models that we will use in
  proving other derandomization results.

  \begin{lemma}
  \label{lem:tde-open}
  Let $\Pi$ be a private coin communication protocol and $\cT_\pi$ its set of possible transcripts. For any $0<\delta<\frac 1 2$, 
$ D^\loc(\TDE_{\Pi,\delta}) \leq D^\op(\TDE_{\Pi,\delta}) \leq 2 \card{\cT_\pi} \cdot \ceil*{ \log \frac {2 \card{\cT_\pi}}{\delta} }
$.
  \end{lemma}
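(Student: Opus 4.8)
The plan is to re-run the single-sided Kushilevitz--Nisan scheme from the proof of \cref{lem:tde-kn97}, but \emph{symmetrically}. There, only Alice transmits discretizations of her factors, so only Bob ends up knowing the estimate; to get a \emph{local} (indeed \emph{open}) protocol I would have both players transmit discretizations of their own factors, so that the estimate becomes a function of the transcript alone. Recall from that proof that for each transcript $w\in\cT_\pi$ we have $p(w\mid x,y)=\alpha(w\mid x)\,\beta(w\mid y)$, where Alice knows every $\alpha(w\mid x)\in[0,1]$ and Bob knows every $\beta(w\mid y)\in[0,1]$. Set $\gamma=\delta\,\card{\cT_\pi}^{-1}/2$. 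Alice sends, for each $w$, the integer $d^{\ali}_w<\ceil*{1/\gamma}$ with $\gamma d^{\ali}_w\le\alpha(w\mid x)\le\gamma(d^{\ali}_w+1)$, and Bob likewise sends $d^{\bob}_w$ for $\beta(w\mid y)$; this costs $2\,\card{\cT_\pi}\,\ceil*{\log\frac{2\card{\cT_\pi}}{\delta}}$ bits in total. The transcript now determines $\alpha'(w\mid x):=\gamma d^{\ali}_w$ and $\beta'(w\mid y):=\gamma d^{\bob}_w$ for all $w$, with $\alpha'\le\alpha\le\alpha'+\gamma$ and $\beta'\le\beta\le\beta'+\gamma$.

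Next I would estimate $p(\cdot\mid x,y)$ by $p'(w):=\alpha'(w\mid x)\,\beta'(w\mid y)$. Since $\alpha',\beta'\in[0,1]$, one checks $p'(w)\le p(w\mid x,y)\le p'(w)+2\gamma+\gamma^2\le p'(w)+3\gamma$ (using $\gamma<1$). Summing over the $\card{\cT_\pi}$ leaves and using $\sum_w p(w\mid x,y)=1$ gives $\sum_w p'(w)\in[1-3\gamma\card{\cT_\pi},\,1]$, so, exactly as in the proof of \cref{lem:tde-kn97}, putting $C:=1-\sum_w p'(w)$ and $\widetilde T^{x,y}_\pi(w):=p'(w)+C\card{\cT_\pi}^{-1}$ yields a genuine probability distribution that is pointwise $3\gamma$-close to $p(\cdot\mid x,y)$, hence within total variation distance $\Delta(\widetilde T^{x,y}_\pi,T^{x,y}_\pi)\le\tfrac12\cdot 3\gamma\card{\cT_\pi}=\tfrac34\delta\le\delta$. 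Since this $\widetilde T^{x,y}_\pi$ is computed by the same formula from the transcript alone, the protocol \emph{openly} computes $\TDE_{\Pi,\delta}$, which gives the second inequality. The first inequality, $D^\loc(\TDE_{\Pi,\delta})\le D^\op(\TDE_{\Pi,\delta})$, is immediate: any open protocol is in particular a local one, the players taking the open mapping $\cO$ as both $\cO_\ali$ and $\cO_\bob$ (cf.\ \cref{prop:hierarchy}, which also covers relation-like tasks such as $\TDE$).

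The only point requiring care --- and the sole difference from the one-sided argument of \cref{lem:tde-kn97} --- is the error compounding caused by discretizing \emph{both} factors rather than one: the per-leaf error grows from $\gamma$ to $2\gamma+\gamma^2\le 3\gamma$. This is harmless, as it is only a constant-factor loss, absorbed by choosing $\gamma$ a constant factor smaller (here $\delta/(2\card{\cT_\pi})$ in place of $\delta/\card{\cT_\pi}$), at the cost of essentially one extra bit per transmitted value; together with the factor of two for having both players transmit, this is exactly the claimed bound $2\card{\cT_\pi}\ceil*{\log\frac{2\card{\cT_\pi}}{\delta}}$.
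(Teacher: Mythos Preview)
Your proof is correct, and it follows the same overall plan as the paper --- discretize the leaf-factorization $p(w\mid x,y)=\alpha(w\mid x)\,\beta(w\mid y)$ at granularity $\gamma=\delta/(2\card{\cT_\pi})$ and normalize --- but your protocol is organized differently. The paper keeps the two-round, \emph{sequential} structure of \cref{lem:tde-kn97}: Alice first sends her discretized $\alpha$-values, Bob locally forms $p'(w)=\alpha'(w\mid x)\,\beta(w\mid y)$ using his \emph{exact} $\beta$'s, and then sends back a second discretization of $p'$; the final estimate is thus pointwise $2\gamma$-close. You instead run a \emph{symmetric} one-shot protocol in which each player independently discretizes and broadcasts their own factor, and the open output is the normalized product $\alpha'(w)\beta'(w)$; discretizing both factors compounds the per-leaf error to $2\gamma+\gamma^2\le 3\gamma$, which you correctly absorb in the choice of $\gamma$. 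Both achieve exactly the claimed communication $2\card{\cT_\pi}\ceil*{\log\frac{2\card{\cT_\pi}}{\delta}}$; your version is arguably cleaner (manifestly open, no round structure needed), while the paper's version keeps the error constant a bit tighter and reuses Step~1 of \cref{lem:tde-kn97} verbatim.
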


Note that in the local model, we require that both players output the
same approximation of the distribution on the leaves. 
In the original protocol of \cref{lem:tde-kn97}, it was enough for one player to send estimates and
the other to use its exact values, but here, the second player must also send
back the result.
Hence, the protocol for $\TDE$ has slightly higher communication complexity
in the local model than in the unilateral model.

In the local model, after running the protocol for $\TDE$, both
players have the same estimate for the distribution over the leaves. Each player additionally knows her output distribution on each leaf,
making the majority answer clear for both players. The situation is similar
for an external observer in the open model after openly computing
$\TDE$. Therefore,  following, e.g., the proof of~\cite[Lemma 3.8]{KushilevitzN1997}, \cref{lem:tde-open} 
implies 
\cref{thm:derand-local}.
}

\begin{proof}[Proof of \cref{lem:tde-open}]
Let $\gamma = \frac \delta 2 \card{\cT_\pi}^{-1}$. We proceed as in the proof of \cref{lem:tde-kn97} but we replace the second step by the following one.
\begin{description}
	\item[Step 2'.]
Instead of outputting directly after the first step,
  Bob sends back an approximation of $p'(. \mid x,y)$ to Alice. More precisely,
  for all $w$, he sends $d'_w, 0\leq d'_w < \ceil*{ \frac 1
  \gamma }$ such that:
    \[ \gamma \cdot d'_w \leq p'(w \mid x,y) \leq \gamma \cdot (d'_w+1).\]

  This again takes communication $\card{\cT_\pi} \cdot \ceil*{ \log (\frac 1 \gamma) }$. Hence an external observer knows $p''(w \mid x,y) := \gamma \cdot d'_w $
  for all $w$, which satisfies:
  \[\forall w\in \cT_\pi: p''(w \mid x,y) \leq p(w \mid x,y) \leq p''(w \mid x,y) + 2\cdot \gamma.\]

  Let us define $C:=1-\sum_w p''(w \mid x,y)$ and $p'''(w \mid x,y) = p''(w \mid x,y)
  + \frac C {\card{\cT_\pi}}$ for all $w$. This $p'''(. \mid x,y)$ is a
  distribution, and a $2\cdot \gamma$ point-wise approximation of
  $p(. \mid x,y)$, and can be computed by an external observer. 
  By our choice of $\gamma = \frac \delta 2 \card{\cT_\pi}^{-1}$,
  we get the output we want
  and so 
  \[D^\op(\TDE_{\Pi,\delta}) \leq 2 \card{\cT_\pi} \cdot \ceil*{ \log \frac {2 \card{\cT_\pi}}{\delta} },\]
  which concludes the proof of \cref{lem:tde-open}.
 \end{description}
\end{proof}

\subsection{Proof details for randomness removal (\texorpdfstring{\cref{sec:derandomization}}{Section~\ref{sec:derandomization}})}
\label{app:proofs-derandomization}


The following lemma will be useful in proving our results:

\begin{lemma}
\label{lem:leafs-and-outputs}
Let $U$ and $V$ be random variables over their respective domain $\cU$
and $\cV$. For all $u\in \cU$, le us consider $V_{U=u}$ the random
variable $V$ conditioned on the event $[U=u]$. Assume there exists
two constants $\delta_U$ and $\delta_V$ and two random variables $U'$
and $V'$ over the same domains as $U$ and $V$ such that:
\[\Delta(U,U')\leq \delta_U \qquad \forall u\in \cU: d_\infty(V_{U=u},V'_{U'=u}) \leq \delta_V.\]
Then:
\[d_\infty(V,V')\leq \delta_U + \delta_V.\]
\end{lemma}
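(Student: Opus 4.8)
The plan is to prove the conclusion pointwise: fix an arbitrary $v \in \cV$ and show that $\abs{\Pr[V=v] - \Pr[V'=v]} \le \delta_U + \delta_V$, which is exactly the assertion $d_\infty(V,V') \le \delta_U + \delta_V$. This is a standard hybrid (triangle-inequality) argument that separates the error incurred by replacing $U$ with $U'$ from the error incurred by replacing each conditional $V_{U=u}$ with $V'_{U'=u}$.

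First I would expand both marginals by conditioning on the first variable, writing $\Pr[V=v] = \sum_{u \in \cU} \Pr[U=u]\,\Pr[V_{U=u}=v]$ and $\Pr[V'=v] = \sum_{u \in \cU} \Pr[U'=u]\,\Pr[V'_{U'=u}=v]$ (for $u$ with $\Pr[U'=u]=0$ the summand is $0$, and one adopts the harmless convention $\Pr[V'_{U'=u}=v] := \Pr[V_{U=u}=v]$ there, so that the hypothesis $d_\infty(V_{U=u},V'_{U'=u}) \le \delta_V$ is available for every $u$). Inserting the hybrid quantity $\sum_u \Pr[U=u]\,\Pr[V'_{U'=u}=v]$ and applying the triangle inequality bounds $\abs{\Pr[V=v]-\Pr[V'=v]}$ by the sum of
\[
\sum_{u} \Pr[U=u]\,\bigl|\Pr[V_{U=u}=v] - \Pr[V'_{U'=u}=v]\bigr|
\quad\text{and}\quad
\Bigl| \sum_{u} \bigl(\Pr[U=u] - \Pr[U'=u]\bigr)\,\Pr[V'_{U'=u}=v] \Bigr| .
\]
The first sum is at most $\sum_u \Pr[U=u]\cdot \delta_V = \delta_V$, directly from $d_\infty(V_{U=u}, V'_{U'=u}) \le \delta_V$. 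For the second sum, set $a_u = \Pr[U=u]-\Pr[U'=u]$ and $c_u = \Pr[V'_{U'=u}=v] \in [0,1]$; then $\sum_u a_u = 0$ and $\tfrac12 \sum_u \abs{a_u} = \Delta(U,U') \le \delta_U$, so $\sum_{u:a_u>0} a_u = \Delta(U,U')$. Since $0 \le c_u \le 1$, we get $\sum_u a_u c_u \le \sum_{u : a_u > 0} a_u c_u \le \sum_{u : a_u > 0} a_u = \Delta(U,U')$ and, symmetrically, $\sum_u a_u c_u \ge \sum_{u : a_u < 0} a_u \ge -\Delta(U,U')$, so the second sum is at most $\Delta(U,U') \le \delta_U$. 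Adding the two bounds yields $\abs{\Pr[V=v]-\Pr[V'=v]} \le \delta_U + \delta_V$, and since $v$ was arbitrary the lemma follows.

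The argument is essentially bookkeeping, so there is no real obstacle; the one place deserving a moment's care is the bound on the second sum, where one must observe that $\sum_u a_u c_u$ with $c_u \in [0,1]$ and $\sum_u a_u = 0$ is governed by the positive mass of $a$, namely $\Delta(U,U')$, rather than by $\sum_u \abs{a_u} = 2\Delta(U,U')$ — this is what keeps the final bound at $\delta_U+\delta_V$ instead of $2\delta_U+\delta_V$. The only other point to mention is the definition of $V'_{U'=u}$ at values $u$ outside the support of $U'$, which is dealt with by the convention above.
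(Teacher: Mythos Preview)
Your proof is correct and follows essentially the same approach as the paper's: fix $v$, expand both marginals by conditioning on $U$ (resp.\ $U'$), and split the error into a piece bounded by $\delta_V$ and a piece bounded by $\Delta(U,U')\le\delta_U$. The only cosmetic difference is the choice of hybrid---the paper pairs the difference $\Pr[U=u]-\Pr[U'=u]$ with $\Pr[V=v\mid U=u]$ rather than with $\Pr[V'_{U'=u}=v]$---and your treatment of the second sum (via $\sum_u a_u=0$ and $c_u\in[0,1]$) is a slightly more explicit version of the paper's step of restricting to $u$ with $\Pr[U=u]>\Pr[U'=u]$.
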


\begin{proof}[Proof of \cref{lem:leafs-and-outputs}]
Let us show that $\forall v \in \cV$, $\abs*{\Pr[V=v] -
\Pr[V'=v] } \leq \delta_U + \delta_V$. Fix an arbitrary
$v\in \cV$, then the probabilities $\Pr[V=v]$ and $\Pr[V'=v]$ can be
written as:
\begin{itemize}
\item $\Pr[V=v]  = \sum_{u \in \cU} \Pr[U=u]\cdot \Pr[V=v \mid U=u]$,
\item $\Pr[V'=v] = \sum_{u \in \cU} \Pr[U'=u]\cdot \Pr[V'=v \mid U'=u]$.
\end{itemize}

Hence using our two hypotheses above we get:
\begin{eqnarray*}
 \lefteqn{\Pr[V=v] - \Pr[V'=v]} \\
 &= & \sum_{u \in \cU} \parens*{
                \Pr[U=u]\cdot \Pr[V=v \mid U=u] - \Pr[U'=u]\cdot \Pr[V'=v \mid U'=u]
                } \\
 &\leq& \sum_{u \in \cU} \parens*{\parens*{\Pr[U=u] - \Pr[U'=u] }
                                \Pr[V=v \mid U=u] + \delta_V \Pr[U'=u] }\\
 &\leq &\sum_{u \in \cU:\Pr[U=u] > \Pr[U'=u]}
                      \parens*{\Pr[U=u] - \Pr[U'=u] } 
                      + \delta_V \\
 &\leq & \delta_U + \delta_V.            
\end{eqnarray*}

We can prove $\Pr[V=v] - \Pr[V'=v] \geq -(\delta_U + \delta_V)$
following the same proof method, and combining the two we get the
desired result:
\[\forall v\in\cV: \abs*{\Pr[V=v] - \Pr[V'=v]} \leq \delta_U + \delta_V.\]
\end{proof}

\subsection{Derandomization up to the XOR model}

The open and local models are straightforward adaptations of \cref{thm:derand-kn97}.

\begin{theorem}
For any function $f$, error $\epsilon < \frac 1 2$ and model $\mdl \in \set{\op,\loc}$, with $R^\mdl = R^{\mdl,\priv}_\epsilon(f)$:
  \label{thm:derand-local}
\[
D^\mdl(f) \leq 2\cdot 2^{R^\mdl}
	\parens*{ 
	R^\mdl +
	\log \parens*{\tfrac {1}{\frac 1 2 - \epsilon}} + 2 
	}
\]
\end{theorem}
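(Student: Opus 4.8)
\idea The plan is to mirror the classical derandomization of \cref{thm:derand-kn97}, replacing its unilateral Transcript Distribution Estimation step with the open/local version from \cref{lem:tde-open}, and then to observe that once the leaf distribution is (approximately) common knowledge, each party can identify the majority outcome using only its own output distribution. Let $\Pi$ be a private-coin protocol for $f$ in model $\mdl\in\set*{\op,\loc}$ with error $\epsilon$ and communication $R^\mdl=R^{\mdl,\priv}_\epsilon(f)$, so $\card{\cT_\pi}\leq 2^{R^\mdl}$; fix $\delta$ slightly below $\tfrac 1 2-\epsilon$. The derandomized protocol first runs the $D^\mdl(\TDE_{\Pi,\delta})$ protocol of \cref{lem:tde-open}; afterwards an external observer (open model), resp.\ each of the two players (local model), holds the same distribution $p'^\lf(\cdot\mid x,y)$ on $\cT_\pi$ with $\Delta\parens*{p'^\lf(\cdot\mid x,y),p^\lf(\cdot\mid x,y)}\leq\delta$, where $p^\lf(\cdot\mid x,y)$ denotes the true leaf distribution of $\Pi$ on input $(x,y)$. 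By \cref{lem:tde-open} this costs at most $2\card{\cT_\pi}\cdot\ceil*{\log\frac{2\card{\cT_\pi}}{\delta}}\leq 2\cdot 2^{R^\mdl}\parens*{R^\mdl+\log\parens*{\tfrac{1}{\frac 1 2-\epsilon}}+2}$ for an appropriate choice of $\delta$, and the remaining steps below use no communication.

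For the open model, note that with private coins only, the output map of $\Pi$ is a function $\cO:\cT_\pi\to\cZ$ of the transcript alone. The observer forms the pushforward $p'^\out(z\mid x,y):=\sum_{w:\cO(w)=z}p'^\lf(w\mid x,y)$; since a deterministic pushforward cannot increase total variation distance, $\Delta\parens*{p'^\out(\cdot\mid x,y),p^\out(\cdot\mid x,y)}\leq\delta$, and $p^\out(\cdot\mid x,y)$ assigns mass $\geq 1-\epsilon$ to $f(x,y)$ by correctness of $\Pi$. Hence $p'^\out(f(x,y)\mid x,y)\geq 1-\epsilon-\delta>\tfrac 1 2$ while every $z\neq f(x,y)$ has $p'^\out(z\mid x,y)<\tfrac 1 2$, so $f(x,y)$ is the unique maximizer of $p'^\out(\cdot\mid x,y)$ and the observer outputs it.

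For the local model the only change — and the point I would stress — is that a player needs only her own per-leaf output distribution, not her partner's. Writing $o_\ali(z\mid w,x)=\Pr_{r_\ali}[\cO_\ali(w,r_\ali,x)=z]$, Alice outputs $f(x,y)$ on \emph{every} correct run of $\Pi$, so $\sum_w p^\lf(w\mid x,y)\,o_\ali(f(x,y)\mid w,x)\geq 1-\epsilon$. Alice then computes $q'_\ali(z):=\sum_w p'^\lf(w\mid x,y)\,o_\ali(z\mid w,x)$ for every $z\in\cZ$; as $o_\ali(\cdot\mid w,x)$ is a distribution with entries in $[0,1]$ and $p'^\lf$ is within total variation $\delta$ of $p^\lf$, one checks $\abs*{q'_\ali(z)-\sum_w p^\lf(w\mid x,y)\,o_\ali(z\mid w,x)}\leq\delta$ for all $z$, whence $q'_\ali(f(x,y))\geq 1-\epsilon-\delta>\tfrac 1 2$. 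Since $\sum_z q'_\ali(z)=1$, the value $f(x,y)$ is the unique maximizer of $q'_\ali$, and Alice outputs it; Bob does the symmetric computation with $o_\bob$ and likewise outputs $f(x,y)$. Thus both parties output $f(x,y)$, the protocol is deterministic, its communication equals the $\TDE$ cost above, and this yields the claimed bound for both $\mdl\in\set*{\op,\loc}$.

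I expect the only fiddly part to be the constant bookkeeping in the choice of $\delta$: it must be close enough to $\tfrac 1 2-\epsilon$ that $1-\epsilon-\delta>\tfrac 1 2$ while $\ceil*{\log\frac{2\card{\cT_\pi}}{\delta}}$ still fits under $R^\mdl+\log\parens*{\tfrac{1}{\frac 1 2-\epsilon}}+2$, which is routine. Conceptually there is no real obstacle: in contrast to the XOR and split models (\cref{thm:derand-xor}), here a single party's per-leaf output distribution already pins down the majority result, so no Gap Majority instance has to be constructed.
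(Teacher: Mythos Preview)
Your proposal is correct and follows essentially the same approach as the paper: the paper's own argument for \cref{thm:derand-local} is really just the remark that after running the open/local $\TDE$ protocol of \cref{lem:tde-open}, both players (resp.\ the external observer) hold the same $\delta$-approximation of the leaf distribution and each additionally knows her own per-leaf output distribution, so the majority answer is determined; you have simply written this out in full, including the pushforward argument and the verification that $q'_\ali$ has $f(x,y)$ as its unique maximizer. Your identification of the only delicate point---choosing $\delta$ just below $\tfrac 1 2 - \epsilon$ so that both the strict majority and the ceiling bound go through---matches what the paper leaves implicit.
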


\subsubsection{Derandomization in the one-out-of-two model}

Interestingly, in the one-out-of-two model, there is an error threshold for
derandomization at $\epsilon = \frac 1 3$. 
If the error is below this threshold, solving the appropriate instance
of $\TDE$ suffices, after which one of the players knows the majority
outcome.  When the error is close to $1/2$, there can be several
candidates for the majority outcome, which would cost  an additional $O(k)$
to communicate.
We reduce this term to $O(\log(k))$ in this case by using a variant of the
NBA problem. 

\begin{theorem}
  \label{thm:derand-oot}
For any function $f$ and error $\epsilon < \frac 1 2$, with $R = R^{\oot,\priv}_\epsilon(f)$:
\[  D^\oot(f) \leq 
\begin{cases}
        2^{R+1} \parens*{R + \log \parens*{\frac 4 {\frac 1 3 - \epsilon }} + 1},  
		& \text{if }  \epsilon < \frac 1 3, \\
\parens*{  2^{R+1} + 2 }
    \cdot \parens*{R + \log \parens*{\frac 8 {\frac 1 2 - \epsilon } }
    + 1 } + \log(k)    + 4,
		& \text{for any }   \epsilon < \frac 1 2.
\end{cases}
\]

\end{theorem}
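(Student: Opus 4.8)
The plan is to treat the two regimes of $\epsilon$ separately, after a common normalization. Start from an optimal private-coin one-out-of-two protocol $\Pi$ for $f$ of error $\epsilon$ and cost $R$, and apply \cref{prop:oot-one-speaker} to obtain $\Pi'$ of cost $R+1$ (so $|\cT_{\pi'}|\le 2^{R+1}$), the same error, and the property that each leaf $w$ has a designated \emph{speaker}, determined by the last bit of $w$, who outputs a value of $\cZ$ while the other player outputs $\top$. Write $p^\lf(w\mid x,y)=\alpha(w\mid x)\beta(w\mid y)$ and, for $z\in\cZ$, $A(z)=\sum_{w:\,\text{Alice speaks}}p^\lf(w\mid x,y)\,o_\ali(z\mid w,x)$ and $B(z)=\sum_{w:\,\text{Bob speaks}}p^\lf(w\mid x,y)\,o_\bob(z\mid w,y)$, so that $p^\out(z\mid x,y)=A(z)+B(z)$, $\sum_z A(z)=\Pr[\text{Alice speaks}]=:a$, $\sum_z B(z)=1-a$, and by correctness $A(z^*)+B(z^*)>1-\epsilon$ for $z^*=f(x,y)$ while $A(z),B(z)\le p^\out(z)<\epsilon$ for all $z\ne z^*$. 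The first step of the protocol is a $\TDE$-style exchange adapted to this structure (in the spirit of \cref{lem:tde-kn97,lem:tde-open}): since the speaker at each leaf is fixed by the transcript, for each leaf $w$ at which Bob speaks Alice sends Bob a $\gamma$-precise estimate of $\alpha(w\mid x)$, and symmetrically Bob sends Alice a $\gamma$-precise estimate of $\beta(w\mid y)$ for each leaf at which Alice speaks. After this Alice can compute every $A(z)$ and Bob every $B(z)$ up to additive error $\delta:=\gamma\,|\cT_{\pi'}|$, at a communication cost of $|\cT_{\pi'}|\cdot\lceil\log(1/\gamma)\rceil\le 2^{R+1}(R+\log(1/\delta)+O(1))$.

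For $\epsilon<\tfrac13$, pick $\delta<\tfrac14(\tfrac13-\epsilon)$ and a threshold $\tau\in(\tfrac13+\delta,\tfrac{1-\epsilon}{2}-\delta)$ (non-empty). Since $A(z^*)+B(z^*)>1-\epsilon$, one of $A(z^*),B(z^*)$ exceeds $\tfrac{1-\epsilon}{2}>\tfrac13>\epsilon$, and because every other value is $<\epsilon$, on whichever side achieves this maximum the $\arg\max$ equals $z^*$. The coordination is a single bit: Alice announces whether $\widetilde A(\arg\max_z\widetilde A(z))>\tau$. If yes, this forces $\arg\max_z\widetilde A(z)=z^*$ (its true $A$-value exceeds $\tfrac13$, hence exceeds every competitor), so Alice outputs it and Bob is silent; if no, then $\max_z A(z)<\tfrac{1-\epsilon}{2}$, so the maximum is on Bob's side and $\arg\max_z\widetilde B(z)=z^*$, so Bob outputs it and Alice is silent. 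Taking $\delta=\tfrac14(\tfrac13-\epsilon)$ and accounting for the ceilings in the exchange gives the claimed $2^{R+1}\bigl(R+\log\tfrac{4}{1/3-\epsilon}+1\bigr)$.

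For general $\epsilon<\tfrac12$, pick $\delta=\Theta(\tfrac12-\epsilon)$, say $\delta\le\tfrac18(\tfrac12-\epsilon)$. Now $\tfrac{1-\epsilon}{2}>\tfrac14$, so $\max(A(z^*),B(z^*))>\tfrac14$, whence $z^*$ lies in $S_A:=\{z:\widetilde A(z)>\tfrac14\}$ or $S_B:=\{z:\widetilde B(z)>\tfrac14\}$; each of these sets has size $O(1)$ (its mass is at most $1+\delta$), and $z^*$ is the \emph{unique} $z\in\cZ$ with $A(z)+B(z)>\tfrac12$ (every other $z$ has $p^\out(z)<\epsilon<\tfrac12$). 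It then remains for the players to agree on $z^*$ and on who outputs it, spending only $\log(k)+O(1)$ bits instead of the $\Theta(k)$ needed to transmit a full candidate string. This is the role of a reconciliation subroutine built from a variant of the NBA problem (\cite{Orlitsky90,Orlitsky91,KushilevitzN1997}): the $O(1)$ candidate strings live in $\{0,1\}^k$, and an NBA-style protocol, together with the exchange of the (approximate) $A$- and $B$-values of those candidates --- which separates the $\le\tfrac12$ case from the $>1-\epsilon$ case, a gap $\gtrsim\tfrac12-\epsilon\gg\delta$ --- lets the player on whose side $z^*$ lies identify and output it. Adding this and a constant number of control bits to the exchange cost with $\delta=\Theta(\tfrac12-\epsilon)$ yields the stated $(2^{R+1}+2)\bigl(R+\log\tfrac{8}{1/2-\epsilon}+1\bigr)+\log(k)+4$.

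The routine part is the error propagation: pushing the $\delta$-closeness of the leaf distribution and the pointwise closeness of the output distributions through the sums defining $A$ and $B$ (the same bookkeeping as in the proof of \cref{thm:derand-xor}, via \cref{lem:leafs-and-outputs}), and tracking the ceilings in the exchange. The main obstacle is the general-$\epsilon$ reconciliation: one must design the NBA-style subroutine so that the correct player \emph{deterministically} pins down $z^*$ among the $O(1)$ jointly-held candidate strings in $\log(k)+O(1)$ bits, prove its correctness under the estimation error, and choose $\gamma$ and $\delta$ so the final count lands on exactly $(2^{R+1}+2)(\cdots)+\log(k)+4$.
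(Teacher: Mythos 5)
Your plan follows the paper's proof essentially step for step: estimate the transcript/leaf distribution via a $\TDE$-style exchange, convert this (via \cref{lem:leafs-and-outputs}) into per-player per-output probability estimates, then distinguish an easy regime $\epsilon<\tfrac13$ (threshold near $\tfrac13$, one obvious candidate per side, one coordination bit) from the hard regime $\epsilon<\tfrac12$ (threshold near $\tfrac14$, $O(1)$ candidates, NBA-style reconciliation). Your normalization through \cref{prop:oot-one-speaker} — so that the speaker at each leaf is determined by the transcript and $A(z)=p_\ali^z$, $B(z)=p_\bob^z$ exactly — is in fact a welcome cleanup: the paper later uses identities such as $p_\bob^\top=\sum_z p_\ali^z$ that silently assume this property, so making the appeal to \cref{prop:oot-one-speaker} explicit is more faithful than the paper's own write-up. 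Your ``one-sided'' $\TDE$ (Alice sends $\alpha(w\mid x)$ only for Bob-speaks leaves and vice versa) is a minor variant of \cref{lem:tde-open}; each player ends up knowing what they need, at the same $2^{R+1}\cdot(\cdots)$ cost, with the $2$ coming from the extra bit of \cref{prop:oot-one-speaker} rather than from a round-trip. Your $\epsilon<\tfrac13$ analysis (choose $\tau\in(\tfrac13+\delta,\tfrac{1-\epsilon}{2}-\delta)$, one coordination bit from Alice) is correct and, if anything, more careful than the paper's: the paper's phrasing ``let the player who outputs some result with probability greater than $\tfrac13$ output it'' leaves the tie-breaking (both players could exceed $\tfrac13$ on $z^*$) implicit, which your one-bit announcement fixes.

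What you leave open — ``design the NBA-style subroutine so that the correct player deterministically pins down $z^*$'' — is precisely the content of the paper's case analysis, so your proposal is a correct plan rather than a complete proof. For the record, here is how the paper fills it in. After thresholding at $\tfrac14+\tfrac{\delta-\sigma}{2}$, one shows $n_\ali+n_\bob\le 3$ and $\max(n_\ali,n_\bob)\le 2$ (because $\sum_z \widetilde p_\ali^z+\widetilde p_\bob^z=1$ and the majority output has total mass $>\tfrac12$); the players exchange $(n_\ali,n_\bob)$ in $4$ bits, leaving four cases up to symmetry. In cases $(1,0)$ and $(2,1)$ the player holding a \emph{single} candidate outputs it (in $(2,1)$ it must coincide with the majority candidate on the other side). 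In case $(2,0)$ the player with two candidates sends the index $i\in[k]$ of a bit on which they differ ($\log k$ bits); the other player replies with $\sum_{z:\,z_i=0}\widetilde p^z$, which lets the first player decide which candidate's $i$th bit agrees with the majority. In case $(1,1)$ the players exchange $\widetilde p_\ali^{z^\ali_1}$ and $\widetilde p_\bob^{z^\bob_1}$; using $p_\bob^{z^\bob_1}+p_\bob^\top-p_\ali^{z^\ali_1}\ge\tfrac12+\delta$ when Bob holds $z^*$ and the symmetric $\le\tfrac12-\delta$ bound otherwise, with $\sigma<\delta/3$ the players determine the speaker with certainty. This is where the $\log(k)$ and the extra constant bits in the stated bound come from; you correctly anticipated their role but would still need to carry out this analysis (and propagate the $\delta$-precision constraints through each case) to turn your proposal into a proof.
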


\begin{proof}[Proof of \cref{thm:derand-oot}]
    Take $\Pi$ to be an optimal private coin one-out-of-two protocol for $f$ with
    error $\epsilon$. Let $\sigma$ be a precision parameter which we
    will set later.
    
    When $\epsilon < \frac{1}{3}$, notice that one of the players has
    to output the correct result with probability greater than $\frac
    1 3$, while all incorrect ones are output with probability less
    than $\frac 1 3$ (with an additional small bias). So it suffices
    for the players to run the local protocol of
    \cref{lem:tde-open} for $\TDE_{\pi,\sigma}$ where $\sigma
    < \frac 1 3 - \epsilon$ in this case, and let the player who
    outputs some result with probability greater than $\frac 1 3$
    output it.

  We now turn to the more interesting case where $1/3\leq \epsilon <
  \frac{1}{2}$. Let $\delta = \frac 1 2 -\epsilon$ and $\sigma < \frac \delta 3$. The players first
  run the local protocol for $\TDE_{\pi,\sigma}$, thus learning a
  $\sigma$ approximation of the probability of each transcript of the
  protocol.  By \cref{lem:leafs-and-outputs},
    since each player exactly knows her outputting distribution in
    each leaf, for all $z$, each player knows up to precision $\sigma$
    her probability of outputting $z$ in the original protocol.

  Let us call $p_\ali^z$ the probability that Alice outputs $z$, and
  $\widetilde p_\ali^z$ the approximation she has of it. For $z =
  f(x,y)$, we have $p_\ali^z+p_\bob^z \geq \frac 1 2 + \delta$ and so
  $\widetilde p_\ali^z + \widetilde p_\bob^z \geq \frac 1 2 + \delta - \sigma$.

  Using this, the players consider some $z$ as \emph{candidates} for
  $f(x,y)$. Alice considers $(z^\ali_i)_{i \in [n_\ali]}$ the $n_\ali$ answers
  $z$ such that $\widetilde p_\ali^z \geq \frac 1 4 + \frac {\delta -
    \sigma} 2$.  Similarly, Bob considers $(z^\bob_j)_{j \in [n_\bob]}$ the $n_\bob$ answers
  $z$ such that $\widetilde p_\bob^z \geq \frac 1 4 + \frac {\delta -
    \sigma} 2$.

  Since $\sum_z \widetilde p_\ali^z + \widetilde p_\bob^z = 1$ (where the sum is over all $z \in \cZ$), we have that:
  $n_\ali + n_\bob \leq 3$. Since the majority output represents strictly more than half 
  of all ouputs we have $\max(n_\ali,n_\bob) \leq 2$.

  The players use $4$ bits to send the values $n_\ali,n_\bob$ to each
  other. Without loss of generality, assume $n_\ali \geq n_\bob$. Then four
  cases are possible:
  \begin{enumerate}
  \item $(n_\ali,n_\bob) = (1,0)$ \label[case]{case-1-cand}
  \item $(n_\ali,n_\bob) = (2,1)$ \label[case]{case-3-cands}
  \item $(n_\ali,n_\bob) = (2,0)$
  \item $(n_\ali,n_\bob) = (1,1)$.
  \end{enumerate}

  The first two cases are simple: if there is only one candidate
  (\cref{case-1-cand}), the player who owns it outputs it. If
  there are three candidates (\cref{case-3-cands}), the player
  with a single candidate outputs it knowing that it has to match one
  of the candidates on the other side and be the majority output.

  For the remaining two cases, we will use a variant of the protocol
  for the NBA problem.
   For the case $(n_\ali,n_\bob) = (2,0)$, 
  Alice (who has two candidates)
  sends to Bob the index of a bit where the two
  candidates differ, say $i \in [\ceil{\log(\cZ)}]$. Bob replies with
  $\sum_{z : z_i = 0} \widetilde p_\bob^z$. 
  Alice can thus compute
  $\sum_{z : z_i = 0} \widetilde p_\bob^z + \widetilde p_\ali^z$. If that
  quantity is greater than $\frac 1 2$, the correct candidate is the
  one whose $i$-th bit is $0$; otherwise, it is the other candidate.
  
  Finally,   
  let us consider the case $(n_\ali,n_\bob) = (1,1)$. Without loss of generality, assume Alice's candidate, $z^\ali_1$, is not correct, that is, $z^\ali_1 \neq f(x,y) = z^\bob_1$. 
  Then, we notice that the
  probability Alice outputting $z^\ali_1$ and the probability of Bob outputting something different from $z^\bob_1$ are less than $\epsilon = \frac
  1 2 - \delta$. 
%
%
  To conclude the protocol, the players exchange $\widetilde p_\ali^{z^\ali_1}$ and
  $\widetilde p_\bob^{z^\bob_1}$ up to $\sigma$ precision. 
  Then:
  \begin{itemize}
  \item $p_\bob^{z^\bob_1} + p_\bob^{\top} - p_\ali^{z^\ali_1} = p_\bob^{z^\bob_1} +
    \sum_{z \neq z^\ali_1} p_\ali^z \geq p_\bob^{z^\bob_1} + p_\ali^{z^\bob_1} \geq \frac
    1 2 + \delta$,
  \item $p_\ali^{z^\ali_1} + p_\ali^{\top} - p_\bob^{z^\bob_1} = p_\ali^{z^\ali_1} +
    \sum_{z \neq z^\bob_1} p_\bob^z \leq 1 - p_\ali^{z^\bob_1} + p_\bob^{z^\bob_1}\leq
    \frac 1 2 - \delta$.
  \end{itemize}

  Each player has a $\sigma$ approximation of the sum of probabilities
  of outputs on her side, and a $2\sigma$ approximation of the
  probability of the candidate output on the other player's side, so
  they have a $3\sigma$ approximations of the above sums. Since
  $\sigma < \frac \delta 3$, the players know with
  certainty if they have the correct output or not. If they do not have the correct output, they
  let the other player output.
\end{proof}

\subsubsection{Derandomization in the split model}

Derandomization in the split model can be achieved 
similarly to derandomization in the previously studied models.

\begin{theorem}
        \label{thm:derand-split}
   Let $0<\epsilon <1/2$ and $f : \cX \times \cY \rightarrow \cZ = \zo^k$. Let $R=R^{\spt,\priv}_\epsilon(f) $, $M=16 \cdot \parens*{\frac 1 2 - \epsilon }^{-2} \cdot 2^{R}$. Then:
\[
D^\spt(f) \leq
\begin{cases}
        2^{R+1} \cdot \parens*{R + \log \parens*{\frac 4 {\frac 1 3 - \epsilon }} + 1} + k,  
		& \text{if }  \epsilon < \frac 1 3, \\
   2^{R+1}  \cdot \parens*{R + \log \parens*{\frac 8 {\frac 1 2 - \epsilon }} + 1} 
+ k \cdot \parens*{\frac {5-2\epsilon} 4 M+1},
		& \forall   \epsilon < \frac 1 2.
\end{cases}
\]
\end{theorem}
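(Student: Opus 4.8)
The plan is to prove the two cases of the bound separately, mirroring the corresponding cases of \cref{thm:derand-oot} (one-out-of-two) and \cref{thm:derand-xor} (XOR). Fix an optimal private-coin split protocol $\Pi$ for $f$ with error $\epsilon$ and leaf set $\cT_\pi$, $\card{\cT_\pi}\le 2^R$. In both cases I would first have the players run the local-model protocol for $\TDE_{\Pi,\sigma}$ from \cref{lem:tde-open}, for a precision $\sigma$ fixed below; this costs at most $2\card{\cT_\pi}\ceil{\log(2\card{\cT_\pi}/\sigma)}\le 2^{R+1}(R+1+\log(1/\sigma))$ and leaves both players with an estimate, within statistical distance $\sigma$, of the distribution over leaves of $\Pi$. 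Since each player knows \emph{exactly} her own output distribution over $\set{0,1,*}^k$ at every leaf, \cref{lem:leafs-and-outputs} (with ``leaf-conditional'' error $0$) then gives that after this step each player knows, within $\sigma$, her own marginal probability of emitting any particular symbol in any particular output coordinate.

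\textbf{The case $\epsilon<\tfrac13$.} I would take $\sigma=\tfrac14(\tfrac13-\epsilon)$, so the $\TDE$ cost is at most $2^{R+1}\parens*{R+\log\parens*{\tfrac{4}{1/3-\epsilon}}+1}$. Write $f_i$ for $f(x,y)_i$. Correctness of $\Pi$ forces, for every coordinate $i$, that $\Pr[a_i=1-f_i]\le\epsilon$ and $\Pr[b_i=1-f_i]\le\epsilon$ (either event is already a coordinate error), while $\Pr[a_i=f_i]+\Pr[b_i=f_i]\ge 1-\epsilon>\tfrac23$ (a correct coordinate has exactly one player emit $f_i$, the other $*$). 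Fixing a threshold $\tau$ with $\epsilon+\sigma\le\tau<\tfrac13-\sigma$ --- possible precisely because $\epsilon<\tfrac13$ --- each player tests, for each $i$, whether her estimate of $\Pr[\cdot_i=v]$ exceeds $\tau$ for some $v\in\set{0,1}$. If so, that $v$ must be $f_i$; and for at least one of the two players this test succeeds. Alice then sends the $k$-bit string whose $i$th bit flags whether her test succeeded at coordinate $i$; she outputs her above-threshold value on the flagged coordinates and $*$ on the rest, and Bob outputs $*$ on the flagged coordinates and his above-threshold value on the rest (which exists: on an unflagged coordinate Alice's probability of $f_i$ is below $\tfrac13$, so Bob's exceeds it). This weaves deterministically to $f(x,y)$, at total cost the $\TDE$ cost plus $k$.

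\textbf{The case of general $\epsilon<\tfrac12$.} Here I would follow the proof of \cref{thm:derand-xor} almost verbatim, replacing $\oplus$ by $\weave$ and $\GapMajX$ by $\GapMajW$, with $\sigma=\delta=\tfrac14(\tfrac12-\epsilon)$ and $M=\ceil{\delta^{-1}}^2\card{\cT_\pi}$ as in the statement. After the $\TDE$ step the players greedily discretize their per-leaf output distributions over $\set{0,1,*}^k$ into $\ceil{\delta^{-1}}$ atoms, build the distributional $\GapMajW$ instance with $M$ rows exactly as in \cref{thm:derand-xor} (row $(i,j)$ at leaf $w$ carrying the rounded outputs of Alice and of Bob, with $\mu$-weight $p^\lf(w\mid x,y)\ceil{\delta^{-1}}^{-2}$), and then solve that instance deterministically in the split model. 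The correctness argument --- a $\mu$-random row weaves to $f(x,y)$ with probability strictly above $\tfrac12$ --- is the same three-step approximation as in \cref{thm:derand-xor}, via \cref{lem:leafs-and-outputs}, with one change: the ``convolution'' $o_\ali\weave o_\bob(z)=\sum_{S\subseteq[k]}o_\ali(z^{(S)})\,o_\bob(z^{(\bar S)})$ --- writing $z^{(S)}\in\set{0,1,*}^k$ for the string equal to $z$ on $S$ and $*$ off $S$ --- now runs over exponentially many summands, but since for a fixed $z$ the strings $z^{(\bar S)}$ are distinct and $o_\bob$ is a probability distribution, $\sum_{S}o_\bob(z^{(\bar S)})\le 1$, so a point-wise $\delta$-approximation of $o_\ali$ still induces a point-wise $\delta$-approximation of $o_\ali\weave o_\bob$ --- exactly as the bijection $z'\mapsto z\oplus z'$ did in the XOR case. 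For the deterministic step I would invoke \cref{prop:split-compatibility}: two rows weave to the same string iff $g_\ali$ applied to one player's pair equals $g_\bob$ applied to the other's, so the equality/majority-based deterministic $\GapMajX$ solver behind \cref{thm:derand-xor} adapts to $\GapMajW$ with cost $k\parens*{\tfrac{5-2\epsilon}{4}M+1}$; adding the $\TDE$ cost gives the stated bound.

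\textbf{Main obstacle.} The constant-chasing in the $\TDE$ precision and in the discretization is routine. The delicate points are: (i) checking that the point-wise-approximation machinery of \cref{thm:derand-xor} truly survives the replacement of $\oplus$ by $\weave$ despite losing a bijection between summands (the bound $\sum_S o_\bob(z^{(\bar S)})\le 1$ is what rescues it); (ii) verifying that the deterministic $\GapMajX$ solver carries over to $\GapMajW$ within the same budget, in particular that the rounded rows --- which live in $\set{0,1,*}^k$ rather than $\zo^k$ --- can still be transmitted within $k(\tfrac{5-2\epsilon}{4}M+1)$ bits and that the solver's output is a \emph{valid} split decomposition of the majority string; and (iii) in the case $\epsilon<\tfrac13$, simultaneously arranging that ``above threshold'' is sound (flags only $f_i$) and complete (some player flags every coordinate), which is exactly what pins down the hypothesis $\epsilon<\tfrac13$.
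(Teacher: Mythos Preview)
Your proposal is correct and follows exactly the paper's (very terse) proof: the $\epsilon<\tfrac13$ case does coordinate-wise thresholding after $\TDE$ and spends $k$ extra bits to communicate the split, and the general case builds a $\GapMajW$ instance by discretizing the per-leaf output distributions precisely as in \cref{thm:derand-xor}. Your verification that the point-wise approximation machinery survives the passage from $\oplus$ to $\weave$---because for fixed $z\in\zo^k$ the strings $z^{(\bar S)}$ are distinct over $S$ and hence $\sum_S o_\bob(z^{(\bar S)})\le 1$---is a genuine detail the paper simply omits.

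One small correction: the deterministic $\GapMajW$ step does not use \cref{prop:split-compatibility} or any equality-based reasoning. That proposition is for the \emph{randomized} error-reduction scheme; the deterministic solver here is the direct analogue of the last item of \cref{cor:simple-hxor-bounds}: Alice sends her $(2\epsilon' M+1)$ $\mu$-heaviest rows to Bob, who computes the majority weave and outputs it in full (Alice outputs all $*$, a valid split). Your concern (ii) about encoding rows in $\set{0,1,*}^k$ within $k$ bits rather than $\ceil{k\log_2 3}$ bits is therefore the only real loose end, and it affects the paper's stated constant just as much as your argument.
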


\begin{proof}[Proof of \cref{thm:derand-split}]
  \begin{description}
  \item[Case when $\epsilon < \frac 1 3$ ] As in the proof of
    \cref{thm:derand-oot}, for each string $z\in\set{0,1,*}^k$
    the players estimate their probability of outputting $z$ by
    solving a $\TDE$ instance. For each index of the output, in the randomized protocol, one of the players has to output the correct bit with probability at least $\frac {1-\epsilon} 2 > \frac 1 3$. Alice sends $k$ bits to Bob to indicate for which bits she outputs the same non-$*$ symbol with probability more than $\frac 1 3$ in the original protocol. She outputs those bits in the derandomized protocol, while Bob is in charge of outputting the other bits. For each bit they output, they output the value which was most frequent in the original randomized protocol.
  \item[Case when $\epsilon < \frac 1 2$] This case is similar to what we
    saw in the proof of \cref{thm:derand-xor}: we create a
    $\GapMAJ$ composed with the weave ($\weave$) gadget, in the same way that we created a $\GapMajX$
    instance to derandomize a protocol in the XOR model. 
  \end{description}
\end{proof}

As in the XOR model, these bounds can be improved by improving the deterministic complexity of the right gadgetized version of $\GapMAJ$ in the split model.


\section{Proofs of the bounds on \texorpdfstring{$\GapMajX$}{GapMAJ XOR}}
\label{app:cc-hxor}

In this section we prove the statements of \cref{tab:cc-hxor}
about the communication complexity of $\GapMajX$.

{Recall that the
$\GapMajX$ problem is parameterized by four parameters: $N$ the number of
rows, $k$ the length of Alice's and Bob's rows, $\epsilon$ the
fraction of rows that do not XOR to the hidden $k$-bit string $z$, and
$\mu$ a distribution over the rows. A $\GapMajX$ instance can be pictured
as Alice and Bob each having a $N \times k$ boolean matrix such that
the set of rows containing $z$ in the bitwise XOR of the two matrices
has a weight higher than $1-\epsilon$ relative to $\mu$. In what
follows, $\epsilon'$ is the target probability, i.e., the maximal
error rate we tolerate when solving our $\GapMajX$ instances.}

When $\epsilon \leq \epsilon'$, the trivial protocol in the XOR model
which consists in choosing a row with public coins and outputting that row
suffices. We thus have the following result.
\begin{proposition}
\label{prop:bdd-xor-pub-epsprime-HXOR}
For all $N,k,\epsilon,\epsilon',\mu$, with $0\leq \epsilon\leq \epsilon'$, 
\[R_{\epsilon'}^{\xor,\pub} (\GapMajX[N,k,\epsilon,\mu]) = 0. \]
\end{proposition}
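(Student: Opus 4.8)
The plan is to verify that the obvious zero-communication protocol already achieves any error $\epsilon' \ge \epsilon$. Take the communication protocol $\Pi$ whose tree is a single leaf: no bits are ever exchanged, so $\CC(\Pi) = 0$ and $\cT_\pi$ is a singleton $\{t_\pi\}$. It only remains to exhibit output maps and check correctness in the sense of \cref{model:xor}.

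Using the public randomness $r \in \cR^\pub$, both players extract a common index $i = i(r) \in [N]$ distributed according to $\mu$ (since $\mu$ is a distribution on the finite set $[N]$, this is obtained from a uniform public random string, to arbitrary precision, exactly as in the argument at the end of the proof of \cref{thm:eqout-open-loc-sep}). Alice sets $\cO_\ali(t_\pi, r, r_\ali, (X_1,\dots,X_N)) = X_{i(r)}$ and Bob sets $\cO_\bob(t_\pi, r, r_\bob, (Y_1,\dots,Y_N)) = Y_{i(r)}$, ignoring their private randomness.

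For correctness, fix an input $\bigl((X_i)_{i\in[N]},(Y_i)_{i\in[N]}\bigr)$ meeting the $\GapMajX$ promise, and let $z \in \zo^k$ be the guaranteed string with $\mu\bigl(\{i : X_i \oplus Y_i = z\}\bigr) \ge 1-\epsilon$. The result of the protocol is
\[
  \cO_\ali(t_\pi,r,r_\ali,x) \oplus \cO_\bob(t_\pi,r,r_\bob,y) = X_{i(r)} \oplus Y_{i(r)},
\]
which equals $z = \GapMajX[N,k,\epsilon,\mu]\bigl((X_i)_i,(Y_i)_i\bigr)$ whenever $i(r)$ lands in that set, an event of probability at least $1-\epsilon \ge 1-\epsilon'$ over $r$. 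Hence $\Pi$ with these output maps XOR-computes $\GapMajX[N,k,\epsilon,\mu]$ with error at most $\epsilon'$ and zero communication, so $R_{\epsilon'}^{\xor,\pub}(\GapMajX[N,k,\epsilon,\mu]) = 0$.

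The argument is entirely routine, and no real obstacle arises; the only point worth noting is that sampling from a general $\mu$ with public coins need only be done approximately, which causes no trouble since there is slack whenever $\epsilon < \epsilon'$ (and in the default case $\mu$ uniform one may sample within any prescribed statistical distance).
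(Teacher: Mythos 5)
Your proof is correct and matches the paper's approach exactly: the paper proves this by the one-line observation that when $\epsilon\leq\epsilon'$, the players simply pick a common row using public coins (distributed according to $\mu$) and each output their half of that row. The only thing you add is the aside about sampling from $\mu$ approximately; this is more caution than needed, since the public randomness source is arbitrary and can be taken to include an exact sample from $\mu$, which also covers the boundary case $\epsilon=\epsilon'$ cleanly.
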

From this proposition we derive the following upper bounds.
\begin{corollary}\label{cor:simple-hxor-bounds}
For all $N,k,\epsilon,\epsilon',\mu$, with $0\leq \epsilon\leq \epsilon'$ 
\begin{itemize}
\item $R_{\epsilon'}^{\xor,\priv} (\GapMajX[N,k,\epsilon,\mu]) \leq \log (N) $,
\item $R_{\epsilon'}^{\op,\pub} (\GapMajX[N,k,\epsilon,\mu]) \leq 2 k $,
\item $R_{\epsilon'}^{\op,\priv} (\GapMajX[N,k,\epsilon,\mu]) \leq 2 k + \log (N) $,
\item $D^{\uni}(\GapMajX[N,k,\epsilon,\mu]) \leq (2\epsilon N+1) k$.
\end{itemize}
\end{corollary}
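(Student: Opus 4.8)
The plan is to obtain all four bounds from the trivial public-coin XOR protocol of \cref{prop:bdd-xor-pub-epsprime-HXOR} by elementary model conversions; only the last, deterministic, bound needs a genuine combinatorial ingredient.

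Recall the protocol behind \cref{prop:bdd-xor-pub-epsprime-HXOR}: the players use public coins to sample a common index $i \in [N]$ according to $\mu$, Alice outputs $X_i$, Bob outputs $Y_i$, and since $\mu(\{i : X_i \oplus Y_i = z\}) \ge 1-\epsilon \ge 1-\epsilon'$ the XOR of the outputs equals the correct value $z$ with probability at least $1-\epsilon'$. For the first item I would have Alice sample $i \sim \mu$ privately and send it to Bob with $\lceil \log N\rceil$ bits; both players then output their $i$th row exactly as before, so correctness is unchanged and the cost is $\lceil \log N\rceil$. For the second item, after the public coins fix $i$ I would have Alice send $X_i$ and Bob send $Y_i$ ($2k$ bits total); the transcript then contains $X_i$ and $Y_i$, so an external observer can output $X_i \oplus Y_i$, which equals $z$ with probability $\ge 1-\epsilon'$, giving $R^{\op,\pub}_{\epsilon'} \le 2k$. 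Prepending Alice's $\lceil\log N\rceil$-bit announcement of $i$ turns this into a private-coin open protocol and yields the third item, $2k + \lceil\log N\rceil$.

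For the deterministic unilateral bound (where the error target $\epsilon'$ is irrelevant, as we want zero error), I would use a Bob-computes protocol. Let $S \subseteq [N]$ be the set of the $m := \lfloor 2\epsilon N\rfloor + 1$ rows of largest $\mu$-weight (ties broken by index, and $S = [N]$ if $m > N$, which cannot occur when $\epsilon < \tfrac12$). Alice sends $\{X_i : i \in S\}$, costing $mk \le (2\epsilon N + 1)k$ bits, and Bob outputs the $\mu$-weighted plurality value of $\{X_i \oplus Y_i : i \in S\}$. To see this is always $z$, I would first show $\mu(S) > 2\epsilon$: writing $w_1 \ge \dots \ge w_N$ for the sorted weights, if $\sum_{i\le m} w_i \le 2\epsilon$ then $w_m \le 2\epsilon/m$, while $\sum_{i > m} w_i = 1 - \mu(S) \ge 1 - 2\epsilon$ with each such $w_i \le w_m$, so $1 - 2\epsilon \le (N-m)\,2\epsilon/m$, which rearranges to $m \le 2\epsilon N$, contradicting $m > 2\epsilon N$. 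Then, since the promise gives that the ``bad'' rows $B = \{i : X_i \oplus Y_i \ne z\}$ satisfy $\mu(B) \le \epsilon$, the good rows in $S$ carry $\mu$-weight $\mu(S) - \mu(S \cap B) \ge \mu(S) - \epsilon > \mu(S)/2$ and all vote for $z$, so $z$ wins the weighted plurality.

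I expect the only real obstacle to be the weight-counting step $\mu(S) > 2\epsilon$ for an arbitrary distribution $\mu$: for uniform $\mu$ it is immediate that fewer than $2\epsilon N$ rows cannot carry a $2\epsilon$ fraction of the mass, but the general case needs the short sorting argument sketched above. The rest is routine bookkeeping of message lengths and error probabilities.
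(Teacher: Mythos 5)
Your proposal is correct and matches the paper's approach bullet by bullet: derive the first three bounds from the trivial public-coin XOR protocol of \cref{prop:bdd-xor-pub-epsprime-HXOR} by sending the sampled index and/or the chosen rows, and for the deterministic unilateral bound have Alice send her $(2\epsilon N + 1)$ $\mu$-heaviest rows so Bob can take the $\mu$-weighted plurality of $X_i \oplus Y_i$ over those rows. The only difference is that you spell out the weight-counting step $\mu(S) > 2\epsilon$ justifying the plurality, which the paper states without proof.
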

\begin{proof}[Proof of \cref{cor:simple-hxor-bounds}]
The key is to notice  that  $\GapMajX$ is trivial when $\epsilon\leq \epsilon'$ (\cref{prop:bdd-xor-pub-epsprime-HXOR}). 
For private coins, notice that the players only use $\log(N)$ coins,
so it is enough for Alice (w.l.o.g.) to send her coins to Bob.
For an open protocol, the players can exchange their rows.  For the
deterministic protocol, Alice (w.l.o.g.) sends her $(2\epsilon N+1)$
heaviest rows ($\mu$-wise) to Bob, who can then compute
the most frequently occurring $z$ to which his rows and Alice's rows XOR. 
\end{proof}

 When $\epsilon > \epsilon'$, we refer to our earlier result from
\cref{sec:error-reduction} (\cref{thm:hxor-no-k}).



\section{Other separating problems}
\label{app:separating-problems}


In this appendix, we give the definitions of the separating problems in
\cref{fig:hierarchy,tab:all-problems} and prove that their complexity depends
on the output model. All of them are variations of common problems
with an additional constraint over the inputs, namely that at most $t$
bits of each player's $n$-bit input are ones. Let us denote by 
$B_2(n,t) = \set{x \in \zo^n : \sum_i x_i \leq t}$ the
Hamming ball of radius $t$ in $\zo^n$ centered at $0^n$,
 by $H(x)=-x\log(x)-(1-x)\log(1-x)$ the entropy of
a Bernouilli random variable of expected value $x$, and recall the
following bound on its size, which we denote by $V_2(n,t)=\card{
B_2(n,t) }$:
\begin{lemma}[Chapter 10, Corollary 9 in~\cite{MWS83}]
  \label{lem:hamming-ball}
  Let $0 < t < n/2$. Then:
  \[ \frac 1 {\sqrt{8 t (1- t/n)}} 2^{n\cdot H(t/n)} \leq V_2(n,t) \leq 2^{n\cdot H(t/n)}. \]
  
\end{lemma}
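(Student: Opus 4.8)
The statement is the classical two-sided estimate for the volume of a Hamming ball; it is Corollary~9 of Chapter~10 in \cite{MWS83}, and in the paper one would simply cite it, but here is the argument I would give for completeness. Write $V_2(n,t)=\sum_{i=0}^{t}\binom{n}{i}$ and handle the two inequalities separately.

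For the upper bound, set $p=t/n\le\tfrac12$ and compare the ball volume against the $\mathrm{Bin}(n,p)$ distribution. Since $\tfrac{p}{1-p}\le 1$, the quantity $p^{i}(1-p)^{n-i}=(1-p)^{n}\bigl(\tfrac{p}{1-p}\bigr)^{i}$ is non-increasing in $i$, so $p^{i}(1-p)^{n-i}\ge p^{t}(1-p)^{n-t}$ for every $i\le t$. Hence
\[
1=\sum_{i=0}^{n}\binom{n}{i}p^{i}(1-p)^{n-i}\ \ge\ \sum_{i=0}^{t}\binom{n}{i}p^{i}(1-p)^{n-i}\ \ge\ p^{t}(1-p)^{n-t}\sum_{i=0}^{t}\binom{n}{i},
\]
and the last sum equals $2^{-nH(t/n)}V_2(n,t)$ because $p^{t}(1-p)^{n-t}=(t/n)^{t}(1-t/n)^{n-t}=2^{-nH(t/n)}$. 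Rearranging gives $V_2(n,t)\le 2^{nH(t/n)}$.

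For the lower bound I would keep only the largest term, $V_2(n,t)\ge\binom{n}{t}$, and estimate $\binom{n}{t}=\tfrac{n!}{t!\,(n-t)!}$ with a two-sided Stirling bound of the form $\sqrt{2\pi m}\,(m/e)^{m}\le m!\le \sqrt{2\pi m}\,(m/e)^{m}e^{1/(12m)}$. With $p=t/n$, the $(m/e)^{m}$ factors combine into $p^{-t}(1-p)^{-(n-t)}=2^{nH(p)}$, the $\sqrt{2\pi m}$ factors combine into $\bigl(2\pi np(1-p)\bigr)^{-1/2}=\bigl(2\pi t(1-t/n)\bigr)^{-1/2}$, and the Stirling correction contributes a factor bounded below by $\exp\!\bigl(-\tfrac1{12np(1-p)}\bigr)$. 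Putting these together, and checking (using $t\ge 1$ and $1-t/n>\tfrac12$, so that $np(1-p)$ is not too small) that the product of $\tfrac1{\sqrt{2\pi}}$ with this correction factor is at least $\tfrac1{\sqrt 8}$, yields $\binom{n}{t}\ge \tfrac1{\sqrt{8t(1-t/n)}}2^{nH(t/n)}$, which is the claim.

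The only genuinely delicate point is pinning down the constant $\sqrt 8$ rather than $\sqrt{2\pi}$ in the lower bound: one must retain the exponential error terms of Stirling's formula and verify they do not erode the constant past $1/\sqrt 8$ (equivalently, use a slightly sharpened Stirling inequality, or for the handful of smallest values of $n$ and $t$ verify the inequality directly). Everything else is a routine manipulation of binomial coefficients.
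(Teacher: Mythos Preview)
Your argument is correct and is the standard proof of this classical estimate. The paper does not actually prove the lemma: it simply states it with the citation to MacWilliams--Sloane and moves on, using only the consequence $\log V_2(n,t)\in\Omega(t\log n)$ for $t\in o(n)$. So there is nothing to compare against; you have supplied a complete proof where the paper gives none.

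One minor remark on the lower bound: your Stirling computation gives the correction factor $\exp\bigl(-\tfrac{1}{12t(1-t/n)}\bigr)$, and for $t=1$ and very small $n$ (e.g.\ $n=3$) the product $\tfrac{1}{\sqrt{2\pi}}\cdot e^{-1/(12t(1-t/n))}$ dips just below $\tfrac{1}{\sqrt 8}$, so the ``handful of smallest values'' caveat you already flagged is genuinely needed. For $t\ge 2$ one has $t(1-t/n)>1$ and the constant comparison goes through directly, so the direct verification is confined to $t=1$ and a few values of $n$. This is exactly as you say, but it is worth being explicit that the borderline case is real and not merely hypothetical.
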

  
In what follows, we will consider $t \in o(n)$, and only use that in
this regime:
\[\log \parens{V_2(n,t)} \in \Omega(t\cdot \log(n)).\]

\subsection{\texorpdfstring{$t$-Intersection}{t-Intersection}}~
\label{sec:t-INT}

Since Disjointness is a Boolean problem, it cannot separate our models
of communication. It is not the case, however, of its large-output
variant Intersection, where Alice and Bob must compute the actual
intersection of their sets.

We recall the definitions of the problems $t-\DISJ_n$ and $t-\INT_n$,
what is known about their complexities, and show that $t-\INT_n$
separates the local model from the open model.

\begin{definition}[$t$-Disjointness problem]
  $t-\DISJ_n : B_2(n,t) \times B_2(n,t) \rightarrow \zo$ is defined as:
\[ t-\DISJ_n(X,Y) =  {\indic}_{X \cap Y = \emptyset}. \]

\end{definition}

We now define a natural variation of this problem, with large output.

\begin{definition}[$t$-Intersection problem]
\label{def:t-inter}
  $t-\INT_n : B_2(n,t) \times B_2(n,t) \rightarrow B_2(n,t)$ is defined as:
  \[  t-\INT_n(X,Y)=X \cap Y. \]

\end{definition}

Since the output of $t-\DISJ_n$ is boolean, its various communication
complexities are essentially the same up to one bit so we do not need
to specify the communication model in the following statement:

\begin{theorem}
  $ R_\epsilon(t-\DISJ_n) = \Theta(t).$
\end{theorem}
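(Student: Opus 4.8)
The plan is to prove matching lower and upper bounds of order $t$. Since the output of $t-\DISJ_n$ is a single bit, the output model is irrelevant up to $\pm 1$ bit (the player who knows the answer sends it), so this amounts to pinning down the ordinary randomized communication complexity of set disjointness restricted to $t$-sparse inputs.

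For the lower bound I would invoke the classical $\Omega(m)$ bound on the randomized communication complexity of $\DISJ_m$~\cite{KalyanasundaramS92,Razborov92,MR2059642-BarYossefJKS-2004}, together with the well-known structural feature of those proofs that their hard distributions are supported on instances $(X,Y)$ with $|X|,|Y|=\Theta(m)$ (in Razborov's construction, $|X|=|Y|=m/4$). In other words, already the restriction of $\DISJ_m$ to $B_2(m,m/4)\times B_2(m,m/4)$ — which is precisely $(m/4)-\DISJ_m$ — has randomized complexity $\Omega(m)$. Taking $m=4t$ gives $R_\epsilon((t)-\DISJ_{4t})=\Omega(t)$. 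Since we work in the regime $t\in o(n)$, for $n$ large we have $n\ge 4t$, and $(t)-\DISJ_{4t}$ is exactly the sub-problem of $(t)-\DISJ_n$ obtained by restricting both players' inputs to their first $4t$ coordinates (padding the remaining coordinates with zeros), so $R_\epsilon((t)-\DISJ_n)\ge R_\epsilon((t)-\DISJ_{4t})=\Omega(t)$.

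For the upper bound, the trivial protocol has Alice transmit $X$ as a list of at most $t$ indices at cost $O(t\log n)$. A first improvement uses public randomness to pick a random map $g\colon[n]\to[ct^2]$ for a large constant $c$: since $|X|,|Y|\le t$, in a disjoint instance the expected number of colliding pairs is at most $t^2/(ct^2)=1/c$, so with probability $\ge 1-1/c$ the images $g(X),g(Y)$ remain disjoint, whereas in a non-disjoint instance a common element forces $g(X)\cap g(Y)\neq\emptyset$; the players thereby reduce to a $t$-sparse instance on a universe of size $O(t^2)$, and exchanging one image set now costs only $O(t\log t)$ bits. To shave the last logarithmic factor and reach the claimed $O(t)$ I would appeal to the protocol of Håstad and Wigderson for disjointness of sparse sets, which decides $t$-sparse disjointness in $O(t)$ bits with constant error, independently of the universe size. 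Combined with the lower bound this gives $R_\epsilon(t-\DISJ_n)=\Theta(t)$.

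The main obstacle is precisely this $O(t)$ upper bound: straightforward fingerprinting or hashing arguments get stuck at $O(t\log t)$, because forcing $t$-element sets to be collision-free requires a universe of size $\Omega(t^2)$, and a single common element cannot be located by sampling in such a universe with only $O(t)$ queries. Overcoming this is exactly the content of the Håstad–Wigderson result, which I would cite rather than reprove; everything else (the reduction for the lower bound and the two warm-up protocols) is routine.
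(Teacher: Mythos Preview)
Your proposal is correct and matches the paper's approach: the paper simply cites the classical $\Omega(n)$ lower bound for $\DISJ_n$ (noting, as you do, that it transfers to the $t$-sparse setting via the structure of the hard distribution) and cites H\aa stad--Wigderson for the $O(t)$ upper bound. Your write-up is in fact more detailed than the paper's, which dispatches the theorem in two sentences of citations.
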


The $\Omega(t)$ lower bound comes directly from the $\Omega(n)$ lower
bound for $\DISJ_n$
of~\cite{KalyanasundaramS92,Razborov92,MR2059642-BarYossefJKS-2004},
while the $O(k)$ upper bound was proven in~\cite{HastadW2007}.

\begin{theorem}
  $ R^\loc_\epsilon(t-\INT_n) = \Theta(t)$, and  $R^\op_\epsilon(t-\INT_n) = \Theta(t \cdot \log(n)).$
\end{theorem}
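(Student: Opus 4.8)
The plan is to prove four bounds: $R^\op_\epsilon(t-\INT_n)\in\Omega(t\log n)$, $R^\loc_\epsilon(t-\INT_n)\in\Omega(t)$, and the matching upper bounds $O(t\log n)$ and $O(t)$. Note first that $t-\INT_n$ is a total function on $B_2(n,t)\times B_2(n,t)$, so the weak partition bound machinery of \cref{app:wprt} and the model inequalities of \cref{prop:hierarchy} apply directly. The two upper bounds are the routine direction; the substance is that relaxing the output requirement from \emph{open} to \emph{local} genuinely collapses the complexity from $\Theta(t\log n)$ to $\Theta(t)$, so the point of the theorem is really a separation, in the same spirit as $\EQout_n$ in \cref{thm:eqout-open-loc-sep}.

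For the open lower bound I would restrict to the diagonal. Take $\mu$ uniform over $\{(X,X):X\in B_2(n,t)\}$; then $t-\INT_n(X,X)=X$ is uniform over $B_2(n,t)$, so any set of outputs carrying $1-\epsilon$ of the mass must have size at least $(1-\epsilon)V_2(n,t)$. Hence $\xi^\mu_\epsilon(t-\INT_n)\geq(1-\epsilon)V_2(n,t)$, and \cref{thm:wprt-xi} gives $\wprt^\mu_\epsilon(t-\INT_n)\geq(1-\epsilon)V_2(n,t)-1$; combined with $\log\wprt_\epsilon(t-\INT_n)\leq R^\op_\epsilon(t-\INT_n)$ and the stated bound $\log V_2(n,t)\in\Omega(t\log n)$ for $t\in o(n)$, this yields $R^\op_\epsilon(t-\INT_n)\in\Omega(t\log n)$. (Alternatively one can run the counting argument from the open lower bound of \cref{thm:eqout-open-loc-sep} verbatim: the open output map $\cO$ satisfies $\Pr_r[\cO(t_\pi,r)=X]\geq1-\epsilon$ for every $X\in B_2(n,t)$, forcing $2^{\CC(\Pi)}\geq(1-\epsilon)V_2(n,t)$.)

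For the local lower bound, observe that any local protocol for $t-\INT_n$ is automatically a local protocol for $t-\DISJ_n$: once both players know $X\cap Y$ they both know whether it is empty, at no extra cost. Thus $R^\loc_\epsilon(t-\INT_n)\geq R^\loc_\epsilon(t-\DISJ_n)$, and since $t-\DISJ_n$ has a one-bit output all of its output-model complexities agree up to an additive constant with the usual randomized complexity, which is $\Omega(t)$ by the stated bound $R_\epsilon(t-\DISJ_n)=\Theta(t)$. The open upper bound is immediate and elementary: Alice sends $X$ (at most $t$ elements, so $O(t\log n)$ bits) and Bob replies with $X\cap Y$ ($O(t\log n)$ bits); the transcript then reveals the output, giving $R^\op_\epsilon(t-\INT_n)\in O(t\log n)$.

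The remaining bound, $R^\loc_\epsilon(t-\INT_n)\in O(t)$, is the one I expect to be the main obstacle, and it is where an external ingredient is needed. The plan is: using public randomness, pick a hash $h:[n]\to[\Theta(t^2)]$, which is injective on $X\cup Y$ with constant probability (a $\Theta(t^2)$-sized range dominates the $\binom{2t}{2}$ possible collisions), so that computing $X\cap Y$ reduces to computing $h(X)\cap h(Y)$ in a universe of size $\mathrm{poly}(t)$; then run an $O(t)$-communication sparse set intersection protocol in which \emph{both} parties learn $h(X)\cap h(Y)$ --- precisely the local model --- and boost to error $\epsilon$ at cost $O(\log(1/\epsilon))=O(1)$. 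The delicate part is the sparse intersection protocol itself: one must argue (or cite, e.g.\ \cite{BrodyCKWY14,BCKWY_algo16}) that having both players learn the intersection of two size-$\leq t$ sets costs only $O(t)$ and not $O(t\log t)$. This is plausible information-theoretically, since $X\cap Y\subseteq X$ is pinned down by $\leq t$ bits given $X$ (and likewise by $\leq t$ bits given $Y$), but realizing it requires a genuinely interactive protocol rather than one-way transmission of a sketch, so the reduction to that protocol is the crux of the upper bound.
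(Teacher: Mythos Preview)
Your proposal is correct and follows the same approach as the paper. The paper's own argument is a two-line sketch: it cites \cite{BrodyCKWY14} for the $O(t)$ local upper bound and invokes the output-size bound of \cref{app:wprt} together with \cref{lem:hamming-ball} for the $\Omega(t\log n)$ open lower bound, leaving the $\Omega(t)$ local lower bound (reduction to $t\text{-}\DISJ_n$) and the $O(t\log n)$ open upper bound implicit --- exactly the four pieces you spell out.
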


The $O(t)$ upper bound for this problem was proved
in~\cite{BrodyCKWY14} and the $\Omega(t \cdot \log(n))$ lower bound in the
open model simply comes from the size of the output (\cref{app:wprt} and \cref{lem:hamming-ball}) since $\card{B_2(n,t)} = V_2(n,t) \in \Omega(t \cdot \log(n))$ (for $t \in o(n)$).

\subsection{\texorpdfstring{$t$-Find the First Difference}{t-Find the First Difference}}~
\label{sec:FTFD}

Just as Intersection can be seen as a large-output variant of the
Disjointness problem, Find the First Difference can be thought of as
the large-output variant of the Greater Than problem.

We now define the problems $t-\GT_n$ and $t-\INT_n$, what is known
about their complexities, and show that $t-\FtFD_n$ separates the
one-out-of-two model from the unilateral model.

\begin{definition}[$t$-Greater Than problem]
  $t-\GT_n : B_2(n,t) \times B_2(n,t) \rightarrow \zo$ is defined as:
\[ t-\GT_n(x,y) = {\indic}_{x > y}.  \]
\end{definition}

\begin{definition}[$t$-Find the First Difference problem]
  $t{-}\FtFD_n : B_2(n,t) \times B_2(n,t) \rightarrow \set{0,\ldots,n}$ is
  defined as:
\[ t-\FtFD_n(x,y) = \min \parens*{ \set{i : x_i \neq y_i} \cup \set{n} }.  \]
\end{definition}

\begin{theorem} \label{thm:t-ftfd-gap}
  \begin{align*}
      R^\oot_\epsilon(t-\FtFD_n) &\in O\parens*{\log (t) + \log (\log (n)) + \log \parens*{\frac 1 \epsilon} }, \\
 R^\uni_\epsilon(t-\FtFD_n) &\in \Omega(\log (n)).
  \end{align*}
\end{theorem}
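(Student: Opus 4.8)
The plan is to prove the two bounds separately, as they are of a rather different nature. For the lower bound $R^\uni_\epsilon(t\text{-}\FtFD_n) \in \Omega(\log n)$, I would reduce from the $t$-Greater Than problem, just as $\FtFD_n$ reduces from $\GT_n$ in the unrestricted case (\cref{prop:ftfd}). Recall that $t\text{-}\GT_n$ has $\Omega(\log n)$ randomized communication complexity: this is the standard $\GT$ lower bound, which already holds for inputs that are low Hamming weight (for instance inputs of the form $0^{i}1 0^{n-i-1}$ already encode $\log n$ bits of information and force $\Omega(\log n)$ communication via a fooling-set / round-elimination argument). Since $t\text{-}\GT_n$ is a Boolean function, its one-out-of-two, unilateral, local, and open complexities all coincide up to an additive constant, so it suffices to reduce $t\text{-}\GT_n$ to $t\text{-}\FtFD_n$ in the unilateral model: given the index $i$ of the first difference between $x$ and $y$, the player holding that index can compare $x_i$ and $y_i$ (she knows her own bit and the other's bit at the common prefix is equal, so she can deduce which of $x,y$ is larger from $x_i$) — actually we need one extra bit exchanged for her to learn the other player's bit at position $i$, which does not affect the asymptotics. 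Hence $R^\uni_\epsilon(t\text{-}\FtFD_n) \geq R_\epsilon(t\text{-}\GT_n) - O(1) \in \Omega(\log n)$.

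For the upper bound $R^\oot_\epsilon(t\text{-}\FtFD_n) \in O(\log t + \log\log n + \log(1/\epsilon))$, the idea is to exploit that when both inputs have Hamming weight at most $t$, the positions where $x$ and $y$ can possibly differ lie in the union of their supports, a set of size at most $2t$. First, Alice and Bob exchange (using the hashing-on-a-tree protocol behind \cref{prop:ftfd}, applied not to the $n$ raw bits but to the sorted list of the $\le t$ one-positions of each player) enough information to agree on the smallest index $j$ such that the multiset of one-positions $\le j$ differs between $x$ and $y$; this is itself a Find-the-First-Difference instance on strings of length $O(t)$ over an alphabet of size $n$, costing $O(\log t + \log\log n + \log(1/\epsilon))$ by a straightforward adaptation of the tree-walk protocol with hash outputs of length $O(\log\log n + \log(1/\epsilon))$ to distinguish among $n$ symbols. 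Once they have localized the disagreement to within a window, the actual first-difference index $\FtFD_n(x,y)$ is determined: it is the minimum of the first differing one-position and is known to whichever player has a one there (or if $x=y$ on all one-positions below, both players know $x=y$ and output $n$). The one-out-of-two structure is what lets us avoid an extra $\log n$ term: only the player who "owns" the first difference needs to name it, the other outputs $\top$; crucially the owning player can tell that she is the owner from the transcript plus her own input.

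The main obstacle I expect is getting the one-out-of-two bookkeeping exactly right in the upper bound: after the hashing protocol localizes the first disagreement among the one-positions, we must argue that exactly one player can correctly identify $\FtFD_n(x,y)$ and \emph{know} she is the one to speak, while the other correctly outputs $\top$, with the failure probability of the hashing dominating the total error. In particular there is a subtlety when the first differing \emph{one}-position on, say, Alice's side precedes the location where the sorted-support comparison flags a difference — one has to be careful that comparing \emph{sorted lists of positions} really does pin down the genuine first difference in the original $n$-bit strings, and that a position present in one support but absent from the other is handled correctly (it corresponds to a $1$ vs $0$ disagreement at that index). I would handle this by having the protocol output the minimum over both players' candidate indices via one final comparison of $O(\log\log n)$ bits, which keeps us within the claimed bound, and then verify correctness by a short case analysis on whether $x=y$, whether the flagged index is itself the first difference, or whether an earlier one-sided one-position is. I would also double-check the lower bound reduction handles the promise correctly — the hard instances for $\GT$ must themselves satisfy the Hamming-weight-$\le t$ promise, which they do for $t \ge 1$ since unit vectors suffice to embed $\log n$ bits.
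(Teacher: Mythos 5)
Your upper bound follows essentially the same route as the paper: encode each player's input as the sorted list of its $1$-positions (a string of $O(t)$ symbols over $[n]$, i.e., $O(t \log n)$ bits after padding), run the Feige--Raghavan--Peleg--Upfal $\FtFD$ protocol on these encoded strings, and argue that the player who ``owns'' the first discrepancy can name $\FtFD_n(x,y)$ while the other stays silent. The paper resolves the bookkeeping subtlety you flag directly (the owning player is the one whose encoded string has a $0$ at the discrepancy bit, which pins down the leftmost $1$ held by one side and not the other), rather than adding an extra $O(\log\log n)$-bit reconciliation step, but your workaround would also keep the cost within the claimed bound.

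Your lower bound, however, has a genuine error. You reduce $t$-$\GT_n$ to $t$-$\FtFD_n$ and then assert $R_\epsilon(t\text{-}\GT_n) \in \Omega(\log n)$ because ``unit vectors suffice to embed $\log n$ bits.'' That claim is false: deciding which of two unit vectors $e_i, e_j \in \zo^n$ is larger is precisely the problem of comparing the indices $i,j \in [n]$, which is $\GT_{\log n}$ and costs only $\Theta(\log \log n)$. Indeed the paper proves (\cref{thm:t-gt-lb}) that $R_\epsilon(t\text{-}\GT_n) \in \Theta(\log t + \log\log n)$, which is $o(\log n)$ for small $t$. Worse, if $R_\epsilon(t\text{-}\GT_n)$ were $\Omega(\log n)$, the very same reduction ($\GT$ from $\FtFD$ by one extra bit) applies in the one-out-of-two model since $\GT$ is Boolean, which would give $R^\oot_\epsilon(t\text{-}\FtFD_n) \in \Omega(\log n)$ and contradict the upper bound you are proving in the same theorem. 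The correct argument is not a decision-problem reduction at all but an output-size argument specific to the \emph{unilateral} model: fix Alice as the outputting player, give her the all-zeros string and Bob a uniformly random unit vector; then $\FtFD_n(x,y)$ equals the position of Bob's $1$, so Alice must learn a uniformly random element of $[n]$ with probability $\ge 1-\epsilon$, forcing $\Omega(\log n)$ communication. This is exactly where the unilateral and one-out-of-two models diverge: in the one-out-of-two model Bob, who already knows the answer, can be the one to speak.
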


\begin{proof}[Proof of \cref{thm:t-ftfd-gap}]~

  \begin{description}
  \item[Upper bound on $R^\oot_\epsilon(t-\FtFD_n)$.]
  As an intuition, let us first give a protocol in the case $t=1$.

  In this case, Alice and Bob $n$-bit strings $x$ and $y$ only contain
  a single $1$ each. So consider $i^\ali, i^\bob$ such that $x_{i^\ali} = 1$
  and $y_{i^\bob} = 1$. $i^\ali,i^\bob \in [n]$, therefore they can be written
  as two $\ceil{ \log n }$-bit strings.

  The players then run the protocol of Feige et al~\cite{FeigeRPU1994}
  to find the first difference between $i^\ali$ and $i^\bob$. Doing so, they
  learn the smallest $t$ such that $(i^\ali)_k \neq (i^\bob)_k$ (or
  $\ceil{\log(n)}+1$ if it does not exist), and so whether $i^\ali <
  i^\bob$, $i^\ali > i^\bob$ or $i^\ali = i^\bob$. The player that has the lowest
  number thus knows the index of the first difference between $x$ and
  $y$, as it is $\min(i^\ali, i^\bob)$.

  Now consider $t$ unconstrained. To find the first difference between
  their two $n$-bit strings of weight $\leq t$, the two players simply
  construct a $\Omega(t \cdot \log(n))$-bit string made of the indices of
  their $1$ bits (with adequate padding) and use the protocol of Feige
  et al~\cite{FeigeRPU1994} as in the $t=1$ case. More precisely:

  \begin{itemize}
    \item Let $w_x = \card{x} \leq t$ (resp.\ $w_y = \card{y} \leq t$) be the
      weight of $x$ (resp.\ $y$). Now, consider indices $i^\ali_1, \ldots,
      i^\ali_{t}$ and $i^\bob_1, \ldots, i^\bob_{t}$, in $\set{0,\ldots,n-1} \cup
      \set{ 2^{\ceil{ \log (n+1) }} - 1 }$ such that:

      \begin{itemize}
      \item $i^\ali_j = 2^{\ceil{\log (n+1)}} - 1$ (an all-$1$
        string) iff $j > w_x$
      \item $x_{i^\ali_j} = 1, \forall j <= t$
      \item $i^\ali_j < i^\ali_{j+1}, \forall j < t$

        (and similarly for the $i^\bob_j$'s)
      \end{itemize}
      
      Each $i^\ali_j$ can be written on $\ceil{ \log (n+1) }$ bits,
      so Alice computes a $t \ceil{ \log n }$-bit string $s_x$
      made of the concatenation of all the $i^\ali_j$'s, in order. Bob
      computes $s_y$ similarly.

      Then the two players use the protocol of Feige et al to obtain
      the first difference between $s_x$ and $s_y$. Let us note
      $i_{\diff}$ the index of this difference.

      Then Alice knows the index of the first difference if
      $(s_x)_{i_\diff} = 0$, and otherwise Bob does. Indeed, let us
      consider the first case:

      \begin{itemize}
      \item The fact that there is a $0$ on this index for Alice means
        that this part of $s_x$ corresponds to the position of a $1$
        in the original $n$-bit string $x$, since we pad with $1$'s at
        the end.
      \item This position is the index of the leftmost $1$ that Alice
        has but Bob does not have. Indeed, all positions before the
        one $i_\diff$ belongs to are shared between Alice and Bob. So
        if Bob also had a $1$ in the position in which $i_\diff$
        appears, then the fact that Alice and Bob find a difference in
        $i_\diff$ means that Bob also has a $1$ in a smaller
        position, which contradicts the fact that the first difference
        between $s_x$ and $s_y$ was such that Alice has a $0$ at that
        place.
      \end{itemize}

      Using Feige et al's protocol on a $O(t \cdot \log (n))$-bit string  costs $O\parens*{\log \parens*{ \frac {t \cdot \log (n)} {\epsilon}}}$, hence
      the advertised upper bound.
  \end{itemize}

  \item[Lower bound on $R^\uni_\epsilon(t-\FtFD_n)$.]
  Let Alice be the outputting player (w.l.o.g.), and consider inputs
  where she always receives the all-$0$ $n$-bit string and Bob
  receives a random $n$-bit string with a single $1$. Solving Find the
  First Difference on such instances would allow Bob to send an
  information of size $\log(n)$ bits to Alice with
  $R^\ali_\epsilon(\FtFD_n)$ communication and high probability, hence
  the $\Omega(\log(n))$ lower bound.
  \end{description}
\end{proof}

Note that our one-out-of-two derandomization theorem
(\cref{thm:derand-oot}) shows that our upper bound is tight for
private coin communication complexity, but it may still be that there
is a more efficient public coin protocol in the one-out-of-two or the
XOR model. We now show that Viola's $\Omega(\log(n))$ public coin
randomized lower bound~\cite{Viola2015} for $\GT_n$ implies that this
protocol is also tight when given access to public coins.

\begin{theorem}\label{thm:t-gt-lb}
  \[R_\epsilon(t-\GT_n) \in \Omega(\log(t) + \log\log(n))\]
  and as a corollary, $R_\epsilon^\xor(t-\FtFD_n) \in \Omega(\log(t) + \log\log(n))$.
\end{theorem}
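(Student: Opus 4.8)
The plan is to prove the two statements in turn: first the $\Omega(\log t+\log\log n)$ bound on $R_\epsilon(t\text{-}\GT_n)$ by identifying $t\text{-}\GT_n$ with an unrestricted Greater Than instance on a large universe, then the corollary for $t\text{-}\FtFD_n$ in the XOR model by a reduction whose ``answer‑extraction'' step survives the weakness of the model.

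\textbf{The Greater Than bound.} Observe that $t\text{-}\GT_n$ is exactly the Greater Than predicate on the totally ordered set $B_2(n,t)$ (ordered by the value of the $n$-bit string), and $\card{B_2(n,t)}=V_2(n,t)$. The order isomorphism $B_2(n,t)\to\set{1,\dots,V_2(n,t)}$ (``rank in $B_2(n,t)$'') is computable by each player with no communication, so $R_\epsilon(t\text{-}\GT_n)$ equals the randomized complexity of Greater Than on a universe of size $V_2(n,t)$; restricting to the first $2^{\ell}$ elements with $\ell=\floor{\log V_2(n,t)}$ shows it is at least $R_\epsilon(\GT_\ell)$, which is $\Omega(\log\ell)$ by the $\GT$ lower bound of Viola~\cite{Viola2015} used in \cref{prop:ftfd}. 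It remains to bound $\log\ell$ from below. In the relevant regime $2\le t\le n/2$ (here $t\in o(n)$) we have $V_2(n,t)\ge\binom nt\ge(n/t)^t\ge 2^{t}$, giving $\log\log V_2(n,t)\ge\log t$, and $V_2(n,t)\ge n$ gives $\log\log V_2(n,t)\ge\log\log n$; averaging, $\log\ell=\log\log V_2(n,t)\in\Omega(\log t+\log\log n)$. (The cruder estimate $\log V_2(n,t)\in\Omega(t\log n)$ noted after \cref{lem:hamming-ball} gives the same conclusion directly.)

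\textbf{The corollary.} The plan is to reduce $t\text{-}\GT_n$ to $t\text{-}\FtFD_n$ in the XOR model, using the identity
\[
  \GT_n(x,y)=\indic\event*{\FtFD_n(x,y)\in\operatorname{supp}(x)},\qquad \operatorname{supp}(x):=\set{j:x_j=1},
\]
valid because at the first index $i$ where $x$ and $y$ differ one has $x>y\iff x_i=1$, and because when $x=y$ both sides vanish ($\FtFD_n(x,y)=n\notin\operatorname{supp}(x)$). Given an XOR protocol for $t\text{-}\FtFD_n$ with error $\epsilon$ and cost $C$, Alice and Bob run it and obtain outputs $a,b$ with $a\oplus b=\FtFD_n(x,y)$ with probability $\ge 1-\epsilon$. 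To finish, Alice needs just the single bit $\indic\event*{a\oplus b\in\operatorname{supp}(x)}=\indic\event*{b\in a\oplus\operatorname{supp}(x)}$, i.e.\ she must decide membership of Bob's string $b$ in the $\le t$-element set $a\oplus\operatorname{supp}(x)$ that she holds. With public coins this is done by Bob sending $h(b)$ for a random hash $h$ from the output alphabet to a range of size $\ceil{(t+1)/\delta}$, after which Alice outputs $\indic\event*{h(b)\in h(a\oplus\operatorname{supp}(x))}$: the only error is a false positive, of probability at most $\delta$, at a cost of $O(\log t+\log(1/\delta))$ bits. Taking $\delta$ a small constant with $\epsilon+\delta<\tfrac12$ gives $R_{\epsilon+\delta}(t\text{-}\GT_n)\le C+O(\log t)$, and combining with the first part yields $R^{\xor}_\epsilon(t\text{-}\FtFD_n)=\Omega(\log t+\log\log n)$.

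\textbf{Main obstacle.} The delicate point is precisely this last step: since in an XOR protocol neither player learns $\FtFD_n(x,y)$, the naive way to recover $\GT_n(x,y)$ --- reveal the first difference so that membership in $\operatorname{supp}(x)$ can be checked --- costs $\Theta(\log n)$ bits and would drown the bound. The observation that rescues the reduction is that only one bit is needed and that this bit is the answer of a \emph{sparse-set membership} test (a bundle of equality tests), which public-coin hashing resolves in $O(\log t)$ bits rather than $O(\log n)$. Since this $O(\log t)$ recovery cost is of the same order as the $\log t$ term we are after, the one thing to watch is the bookkeeping of constants (keep $\delta$ constant, and use the $\log t$ and the $\log\log n$ terms of the $\GT$ lower bound separately); everything else is routine.
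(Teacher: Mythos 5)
Your proof of the $R_\epsilon(t-\GT_n)$ lower bound is correct, and takes a slightly different route from the paper: you embed a single $\GT_\ell$ instance with $\ell = \floor*{\log V_2(n,t)}$ into $t-\GT_n$ via the order-preserving rank map on $B_2(n,t)$, then observe $\log\ell \geq \max(\log t,\log\log n)$. The paper instead performs two separate embeddings: $\GT_t$ by padding $t$-bit strings to $n$ bits (yielding $\Omega(\log t)$), and $\GT_{\log n}$ by encoding a $\log n$-bit index as the weight-$1$ string with a $1$ at that position (yielding $\Omega(\log\log n)$). Both are valid; yours unifies the two.

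The corollary is where there is a genuine gap. Your reduction from $t-\GT_n$ to the XOR version of $t-\FtFD_n$ is a natural idea: run the XOR $\FtFD$ protocol to get shares $a,b$ with $a\oplus b=\FtFD_n(x,y)$, then test $\indic\event*{b\in a\oplus\operatorname{supp}(x)}$ by hashing. The identity $\GT_n(x,y)=\indic\event*{\FtFD_n(x,y)\in\operatorname{supp}(x)}$ is correct, and so is the hashing protocol. But the hash costs $\log t+O(1)$ bits, whereas the $\GT$ lower bound you then invoke is only $c\parens*{\log t+\log\log n}$ for the unspecified constant $c$ hidden in Viola's $\Omega(\log)$, and there is no reason to expect $c\geq 1$. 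So the chain $R^\xor_\epsilon(t-\FtFD_n)\geq c\log t+c\log\log n-\log t-O(1)$ can be vacuous whenever $\log t\gg\log\log n$: the $\Omega(\log t)$ term does not survive. You flag this in your ``obstacle'' paragraph, but the suggested fix (``use the $\log t$ and the $\log\log n$ terms separately'') only rescues the $\log\log n$ term: restricting to weight-$1$ instances $1-\GT_n$, the membership test is a single Equality query and costs $O(\log(1/\delta))=O(1)$, which does cleanly give $R^\xor_\epsilon(t-\FtFD_n)\geq R^\xor_\epsilon(1-\FtFD_n)=\Omega(\log\log n)$. There is no analogous restriction that kills the $\Theta(\log t)$ membership overhead while still embedding a $\GT$ instance whose Viola lower bound exceeds it, because the hash cost scales with $\log\card{\operatorname{supp}(x)}$ and the embeddable $\GT$ domain has size at most $V_2\parens*{n,\card{\operatorname{supp}(x)}}$, so the two are locked together. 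Establishing the $\Omega(\log t)$ part of the corollary therefore needs a different argument than the one you give. (The paper itself only proves the $\GT$ bound and states the $\FtFD$ corollary without a proof, so you are not missing a sketch from the text; but as written, the second half of your proposal does not close for the $\log t$ term.)
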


\begin{proof}[Proof of \cref{thm:t-gt-lb}]
  We prove the dependencies in $\log(t)$ and in $\log \log(n)$ independently.

\begin{description}
      \item[$R_\epsilon(t-\GT_n) \in \Omega(\log(t))$.] We remark that
  $\GT_t$ reduces to $t-\GT_n$ in the same way that $\DISJ_t$ reduced
  to $t-\DISJ_n$ in the previous section, so applying Viola's lower
  bound~\cite{Viola2015} yields:
  \[R_\epsilon(t-\GT_n) \geq R_\epsilon(\GT_t)\in \Omega(\log(t))\]
  
  \item[$R_\epsilon(t-\GT_n) \in \Omega(\log \log (n))$.] We remark
  that $1-\GT_n$ reduces to $\GT_{\log(n)}$ since a way to compare two
  numbers with a single bit set to one in their binary representation
  is to compare the indices of the position of their single
  one. Hence, applying Viola's lower bound~\cite{Viola2015} again:
  \[R_\epsilon(t-\GT_n) \geq R_\epsilon(1-\GT_n) \geq R_\epsilon(\GT_{\log(n)})\in \Omega(\log\log(n))\]
\end{description}
\end{proof}

\subsection{The \texorpdfstring{$\MAX$}{MAX} problem}
\label{sec:max}

\begin{definition}[Maximum problem]
\label{def:max}
    $\MAX_n : \zo^n \times \zo^n \rightarrow \zo^n$ is
defined as \[\MAX_n (x,y) =
\begin{cases}
x, & \text{if $x\geq y$,}\\
y, & \text{otherwise.} 
\end{cases}\]
\end{definition}

For this problem, we have: 

\begin{theorem}\label{thm:max-gap}
\[  R^\oot_\epsilon(\MAX_n) \in O(\log n), \qquad
    R^\uni_\epsilon(\MAX_n) \in \Omega(n). \]

The gap is the same (asymptotically, up to multiplicative and additive constants) when only allowing private coins.
\end{theorem}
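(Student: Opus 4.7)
}
The plan is to establish each bound independently. For the one-out-of-two upper bound, I would reduce $\MAX_n$ to the Find the First Difference problem: the players run the $\FtFD_n$ protocol of \cref{prop:ftfd} with error $\epsilon$, at cost $O(\log n + \log(1/\epsilon)) = O(\log n)$ bits (for constant $\epsilon$). Let $i$ be the index returned. With probability at least $1-\epsilon$ this is the correct first differing index. If $i = n$, then $x = y = \MAX_n(x,y)$, and we let Alice output $x$ while Bob outputs $\top$. Otherwise, exactly one of $x_i, y_i$ equals $1$, and the player holding the $1$ at index $i$ has the larger input; that player outputs their own input, and the other outputs $\top$. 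Since each player locally knows her own $i$-th bit, she can decide locally whether to output or stay silent, so this fits the one-out-of-two output model. Correctness holds whenever $\FtFD_n$ returns the correct index, which is with probability at least $1-\epsilon$.

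For the unilateral lower bound, by definition $R^\uni_\epsilon(\MAX_n) = \min(R^\ali_\epsilon(\MAX_n), R^\bob_\epsilon(\MAX_n))$, and $\MAX_n$ is symmetric in its arguments, so it suffices to lower bound $R^\bob_\epsilon(\MAX_n)$. Consider the distribution $\mu$ in which Alice's input is $x = 0^n$ and Bob's input $y$ is uniform over $\zo^n$. Then $\MAX_n(0^n, y) = y$, so any Bob-computing protocol essentially forces Bob to output $y$ on this distribution. This is exactly the situation of the $\id^\bob_n$ lower bound in \cref{thm:id-loc-uni-sep}: Bob must, with probability at least $1-\epsilon$ over his coins and $y$, output the string $y$, but with $C$ bits of communication from Alice his output depends on at most $2^C$ transcripts (Alice's input is fixed, her messages depend only on her randomness). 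By a standard averaging argument we may fix all randomness so that the protocol is deterministic and still correct on at least a $1-\epsilon$ fraction of $y$'s; Bob's output is then a function of the $2^C$ possible transcripts, so at most $2^C$ distinct values of $y$ can be output correctly. Hence $2^C \geq (1-\epsilon)\cdot 2^n$, giving $C \geq n - \log\bigl(1/(1-\epsilon)\bigr) \in \Omega(n)$.

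The same arguments work under private coins only: the $\FtFD_n$ protocol used in the upper bound can be made to work with private coins at the same cost up to constants (the hashing-based protocol only needs shared hash functions, which can be communicated cheaply, or private-coin equality tests can be substituted along the tree-walk), and the lower bound is proven in the distributional setting, which bounds the public-coin complexity and hence also the private-coin one. The main (mild) obstacle is just the correct handling of the tie case $x = y$ in the one-out-of-two protocol, which the fixed convention above resolves without extra communication.
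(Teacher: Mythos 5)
Your upper bound is correct and takes essentially the same route as the paper, just made more explicit: the paper says ``compute whether $x\leq y$ with $O(\log n)$ communication, then the player holding the maximum outputs it,'' and you expand this by running $\FtFD_n$ and letting the first differing bit tell each player locally whether to speak. You also cleanly handle the tie case, which the paper glosses over. The only point worth noting is that you need a \emph{local} (or open) protocol for $\FtFD_n$ so that both players know the returned index; since the output of $\FtFD_n$ is $O(\log n)$ bits, this only changes constants, exactly as the paper remarks in its preliminaries.

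The lower bound, however, contains a real error: the players are swapped. You declare you will lower bound $R^\bob_\epsilon(\MAX_n)$, fix Alice's input $x=0^n$, draw Bob's $y$ uniformly, and claim Bob ``must output the string $y$, but with $C$ bits of communication from Alice his output depends on at most $2^C$ transcripts.'' That last step is false: in the Bob-outputs model, $\cO_\bob$ takes $y$ (and $r_\bob$) as arguments (\cref{model:uni}), so on your distribution Bob outputs $y = \MAX_n(0^n,y)$ with zero communication and zero error, and there is nothing to lower bound. Your distribution (Alice fixed, Bob random) is the right one for lower bounding $R^\ali$ — Alice is the ignorant, outputting player who must learn $y$ — which is exactly what the paper does before invoking symmetry. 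To instead lower bound $R^\bob$ directly, fix $y=0^n$ and let $x$ be uniform, so that Bob must learn $\MAX_n(x,0^n)=x$. The fix-the-randomness-and-count argument you sketch then goes through with the correct player; as written, though, the step is incoherent. (Relatedly, the lower bound in \cref{thm:id-loc-uni-sep} you point to is $R^\bob_{1/4}(\id^\ali_n)\in\Omega(n)$, i.e., Bob must output \emph{Alice's} input; $\id^\bob_n$ is trivial for Bob, so that citation reinforces the confusion.)
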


\begin{proof}[Proof of \cref{thm:max-gap}]~

  \begin{description}
  \item[$R^\oot$ upper bound.]
  The players compute whether $x \leq y$ or not with high probability
  using $O(\log n)$ communication, then if $x \leq y$ Alice outputs~$x$,
  otherwise Bob outputs~$y$.

  \item[$R^\uni$ lower bound.] it suffices to show that $R^\ali$ is
  large, as symmetry will imply that therefore $R^\bob$ is large as
  well.

  The proof is quite simple: consider the  $2^n$ input pairs
  $\set{(0,y) : y \in [0,2^n-1]}$. For those inputs, the
  $\MAX_n$ problem is just a problem of one-way communication: it is
  clear that he must send $\Omega(n)$ bits for Alice to correctly
  guess his $y$ with probability $\geq 1-\epsilon$.
\end{description}
\end{proof}


\end{document}